\let\@algocf@capt@plain\relax
\newcommand{\E}{\ensuremath{\mathbb{E}}}
\newcommand{\R}{\ensuremath{\mathbb{R}}}
\newcommand{\N}{\ensuremath{\mathbb{N}}}
\newcommand{\B}{\ensuremath{\mathcal{B}}}
\newcommand{\M}{\ensuremath{\mathcal{M}}}
\newcommand{\W}{\ensuremath{\mathbb{W}}}
\newcommand{\indic}{\mathds{1}} %Fonction indicatrice
\newcommand{\suite}[2]{\ensuremath{\left(#1_#2\right)_{#2 \in \mathbb{N}}}}
\newcommand{\implique}{\ensuremath{\Longrightarrow}}
\newcommand{\equivalent}{\ensuremath{\Longleftrightarrow}}
\newcommand{\norm}[2]{\ensuremath{\vert\vert #1 \vert\vert_{#2}}}
\newcommand{\arrtoinf}[1]{\ensuremath{\underset{#1 \to +\infty}{\longrightarrow}}}
\newcommand{\arrtozero}[1]{\ensuremath{\underset{#1 \to 0}{\longrightarrow}}}
\newcommand{\ps}[2]{\ensuremath{\langle #1,#2 \rangle}}
\newcommand{\mylim}[2]{\ensuremath{\underset{#1 \to #2}{\lim}}}
\newcommand{\X}{\ensuremath{\mathcal{X}}}
\newcommand{\Lipspace}[1]{\ensuremath{\mathrm{Lip}\left(#1\right)}}
\newcommand{\diag}[1]{\mathrm{diag}\left(#1\right)}
\newcommand{\Abf}{\mathbf{A}}
\newcommand{\Mbf}{\mathbf{M}}
\newcommand{\Xbf}{\mathbf{X}}
\newcommand{\Gbf}{\mathbf{G}}
\newcommand{\Qbf}{\mathbf{Q}}
\definecolor{green}{RGB}{0,255,0}
\definecolor{red}{RGB}{255,0,0}
\DeclareMathOperator*{\argmin}{argmin}
\DeclareMathOperator*{\argmax}{argmax}
\theoremstyle{plain}
\newtheorem{thm}{Theorem}[subsection]
\newtheorem{lemma}[thm]{Lemma}
\newtheorem{prop}[thm]{Proposition}
\newtheorem{coro}[thm]{Corollary}
\theoremstyle{definition}
\newtheorem{defn}[thm]{Definition}
\newtheorem{assumption}{Assumption}
\theoremstyle{remark}
\newtheorem{rem}[thm]{Remark}
\newcommand{\leqnomode}{\tagsleft@true\let\veqno\@@leqno}
\title{An Optimal Transport approach to arbitrage correction: application to Volatility Stress-Tests}
\author{
    Marius Chevallier\textsuperscript{\dag,\ddag} \and 
    Stefano De Marco\textsuperscript{\dag} \and 
    Pierre-Emmanuel Lévy-dit-Vehel\textsuperscript{\ddag}
}
\date{\today}
\begin{document}

\maketitle

\begin{center}
\large
\textsuperscript{\dag}CMAP, Ecole Polytechnique, Institut Polytechnique de Paris \\
\textsuperscript{\ddag}Société Générale, Global Risk Methodologies\footnote{This work was carried out as part of a collaboration (thèse CIFRE) between CMAP and Société Générale. The findings and conclusions presented in this paper are solely those of the authors and do not represent the views of Société Générale.
Contact author: Marius Chevallier, \href{mailto:marius.chevallier@polytechnique.edu}{marius.chevallier@polytechnique.edu}.
\newline
Acknowledgments: we thank Ghislain Essome-Boma, Stéphane Crépey, Arnaud Gocsei, Julien Guyon, Quentin Jacquemart, Laurent Jullien, Charles-Albert Lehalle, Claude Martini, Mohamed Sbai, Fabien Thiery, and Justine Yedikardachian for stimulating discussions and useful insights.
}

\end{center}

\vspace{0.5cm}

\begin{abstract}
\noindent We present a method based on optimal transport to remove arbitrage opportunities within a finite set of option prices. The method is notably intended for regulatory stress-tests, which require applying significant local distortions to implied volatility surfaces, thereby introducing arbitrage. The resulting stressed option prices being associated with signed marginal measures, we formulate the process of removing arbitrage as a projection onto the subset of martingale measures with respect to a Wasserstein metric in the space of signed measures, to which we then apply an entropic regularization technique. For the regularized problem, we derive a strong duality formula, show convergence results as the regularization parameter approaches zero, and formulate a multi-constrained Sinkhorn algorithm, where each iteration involves, at worst, finding the root of an explicit scalar function. The convergence of this algorithm is also established. We compare our method with the existing approach of [Cohen, Reisinger and Wang, Appl.\ Math.\ Fin.\ 2020] across various scenarios and test cases.
\end{abstract}
\noindent\textbf{Keywords:} Arbitrage, Volatility Stress-Tests, Optimal Transport, Entropic Regularization, Sinkhorn Algorithm.\newline
\noindent\textbf{Mathematics Subject Classification:} 91G80, 49Q22.\newline
\noindent\textbf{JEL Classification:} C18, C61, C65.

\section{Introduction}

In order to steer their risk/reward profile and maintain sufficient capital to face crises, financial institutions frequently compute several metrics to monitor the risks inherent in their activities. Among these risks, \textit{market risk} refers to potential losses induced by market moves on the income of trading activities. Assessing market risk, besides being mandatory to compute regulatory capital requirements, is key to managing the activities of the front office. For instance, the Value-at-Risk (VaR), a quantile of the global short-term Profit-and-Loss or P\&L, is a metric generally evaluated on a daily basis that helps firms identify trading desks that are the most at risk and thus steer their business. The computation of VaR, as for many other metrics, requires the generation of scenarios for \textit{risk factors}, i.e.\ quantities which determine the values of financial instruments (equity index levels, commodity prices, interest rates...). A fundamental risk factor that drives options values is the implied volatility surface (IVS). In order to estimate the market risk carried by option portfolios, it is common for financial institutions to apply some stress-tests directly to the IVS (so as to design risky scenarios). 
This practice is even compulsory within the \textit{Fundamental Review of the Trading Book} (FRTB), the most recent international regulation supervising minimum capital requirements for market risk (see \cite{FRTB}, or Article 9 in \cite{RTS}). However, from a modeling point of view, the IVS has shape constraints induced by the no-arbitrage assumptions. Stress testing often involves applying local deformations which may result in an IVS that does not fulfill these constraints, preventing the pricing functions of the institution from yielding meaningful results, especially for exotic products that rely on a global calibration. Compliance with both regulation and a sound mathematical framework is often a challenge for the industry. 
In this paper, we investigate this problem and propose a novel approach to construct a solution.

\subsection{Literature review}
\paragraph{Detection and removal of arbitrage}
An arbitrage opportunity is a trading strategy that has zero cost and a positive probability of making profit. A \textit{static arbitrage} relates to a strategy consisting of fixed positions taken at present time on available assets (options and the underlying) and kept unchanged until maturity. The characterization of static arbitrage, and its absence, on an option price surface is a classical and well-studied topic; let us outline a few seminal and important contributions.
Carr \& Madan \cite{carr2005note} provided sufficient conditions to ensure the absence of static arbitrage within a set of option prices on a rectangular grid, made of a finite number of maturities but countably many strikes. Davis and Hobson \cite{davis2007range} extended their work to a slightly more realistic framework, with a rectangular grid defined by a finite set of maturities and strikes. Cousot \cite{Cousot2004,Cousot2007} formulated a more general result for the important case of a finite but \textit{non-rectangular} grid (and even with solely bid/ask information), which corresponds to the real case in equity markets, where different strikes are typically quoted at different maturities. 
All these works formulate numerical tests to be applied to a given set of option prices in order to check for the existence of a martingale process fitting those option prices, which, according to the First Fundamental Theorem of Asset Pricing, is equivalent to the absence of (static and dynamic) arbitrage. More recently, Cohen et al.\ \cite{cohenReisinger}, building on the aforementioned works, have formulated an algorithm to remove arbitrage from option price data. They seek the minimal correction, in the $1$-norm sense, to be added to option prices, under the constraint that the corrected prices are arbitrage-free; the operation is formulated as a linear program.
Removing arbitrages may also be addressed with \textit{smoothing} techniques, which basically consist in fitting an arbitrage-free model to the data, by minimizing some criterion (see \cite{fengler2015semi} for examples, or \cite{gatheral2014arbitrage}).

While classical results link no-arbitrage with the existence of probability (hence nonnegative) measures, a signed measure can be naturally associated (in the usual way, actually) with option prices with arbitrage, and this will be the starting point of our analysis in the next section.

\paragraph{Optimal Transport for signed measures}

The theory of optimal transport (OT) initiated by Monge and Kantorovich (see the monographs \cite{villanitopics,villanioldandnew} for an extensive description of the theory and some of its applications) has been used to define metrics of Wasserstein-type between signed measures.
Piccoli et al.\ \cite{piccolisignedmeasures} defined a Wasserstein norm on the space of finite signed measures on $\R^d$, relying on distances between unbalanced nonnegative measures. 
They used it to prove the well-posedness of a non-local transport equation with a source term given by a signed measure.
Ambrosio et al.\ \cite{signedvorticesAmbrosio} give another definition, in the context of superconductivity modeling. 
We will use the latter to formulate the correction of arbitrage in option prices as a projection problem of a signed measure onto the subset of martingale measures.

\paragraph{Entropic optimal transport and Sinkhorn's algorithm}

A popular method used in numerical OT is to penalize the cost functional involved in the minimization problem (the Wasserstein distance in our case) by an entropic term. 
This gives rise to the so-called entropic optimal transport (EOT) problem, which has gained much interest since the work of Cuturi \cite{cuturi2013sinkhorn}, which allows for quick approximate numerical solutions.
Cuturi's solution is based on the renowned Sinkhorn's algorithm \cite{sinkhorn1967}, which consists of iterative projections on the marginal constraints (see \cite{peyre2019computational} for a rich overview).
Contrary to classical OT where the cost is linear, EOT is equivalent to minimizing a strictly convex functional known as the Kullback-Leibler divergence, or relative entropy.  
The projection approach we rely on to remove arbitrage can also be tackled with entropic regularization. 
Inspired by recent works of Benamou et al.\ \cite{benamou2015iterative} and Chizat et al.\ \cite{chizat2018scaling}, we will formulate a ``multi-constrained Sinkhorn" algorithm that incorporates more constraints than the usual ones in OT theory (i.e.\ the marginal constraints).

\paragraph{Optimal Transport in finance}

Wasserstein projections of probability measures have been considered in a variety of fields; in the context of finance, Alfonsi et al.\ \cite{alfonsi2020sampling} consider projections with respect to the usual $\varrho$-Wasserstein distance ($\varrho\geq1$) as a tool to sample the marginal laws of a stochastic process while respecting the convex order condition, with a view on the numerical solvability of martingale optimal transport problems. 
In its turn, martingale optimal transport in finance was introduced by \cite{beiglbock2013model} with the goal of deriving model-independent bounds and hedging strategies for exotic options. 
After the advent of EOT, extensions of Sinkhorn's algorithm were designed to solve martingale optimal transport problems in finance \cite{demarch2018entropic, PHLdeMarch, guyon2024dispersion, benamou2024entropic}. 
Recently, optimal transport was exploited in \cite{zetocha2023volatility}, where suitable Monge maps that preserve convex order are used as a tool for data generation without arbitrage.

\subsection{Comparison with existing literature and our contributions}
While Cohen et al.’s approach \cite{cohenReisinger} is formulated in the option price space, we propose to explore a different and novel direction: to formulate the correction directly in the measure space. This choice is motivated by the fact that arbitrage-free option prices are generated by martingale measures. Our notion of optimality also differs from theirs. Specifically, the corrected prices correspond to a martingale measure that is closest, in a metric sense, to a signed measure calibrated to the arbitrageable option prices. This signed measure is obtained through a simple adaptation of Cousot’s construction \cite{Cousot2004}, originally designed for arbitrage-free data. Similarly to Cohen et al.’s method, our approach can be expressed as a linear program, thanks to the use of a Wasserstein metric. 
To address the potentially high dimensionality of the problem, we introduce an entropy-based regularization, as is commonly done in OT.

In their turn, entropic projections have already been exploited in the seminal work by De March and Henry-Labordère \cite{PHLdeMarch} as a tool to construct non-parametric fits of implied volatility smiles. 
Postponing precise definitions and discussions, let us point out a structural difference between our problem and theirs: the authors of \cite{PHLdeMarch} consider the projection of a given parametric model onto the set of martingale measures that fit some given implied volatility market data. 
In this setting, the starting point (the probabilistic model) is arbitrage-free by construction, and, on the other side, one has to assume that the target market data satisfy the no-arbitrage condition; the whole operation takes place, then, within a set of probability measures.
In our case, our starting point is an arbitrageable stressed volatility surface associated with a signed measure. The mathematical framework of metric spaces of signed measures, which is therefore the natural environment for our results and numerical methods, was absent from  \cite{PHLdeMarch}.
The final target of our procedure is similar, we also look for a martingale measure, but this measure is not required to fit given quotes---our main goal being precisely the creation of new data after the arbitrage removal.

The paper is organized as follows. In Section~\ref{settingandnotations}, we show how we can associate a discrete signed measure with option prices affected by arbitrage.
In Section~\ref{distsignedmeasures}, we recall the concepts and results from OT that we use to define a Wasserstein metric between signed measures, following the notion already introduced in Ambrosio et al.~\cite{signedvorticesAmbrosio}. Then, in Section~\ref{projection}, we define subsets of martingales in the space of discrete signed measures. Using the previously defined metric, we give the statement of our problem: the correction of arbitrage as a projection of the signed measure defined in Section~\ref{settingandnotations} onto the martingale subsets. 
Next, we show that this problem is equivalent to an OT problem (i.e.\ a linear program on couplings). 
Section~\ref{regularization} is devoted to the entropic regularization of the minimization problem. 
We provide several results: we prove strong duality, dual attainment, and convergence as the regularization parameter tends to zero. 
In Section~\ref{algorithm}, we propose a multi-constrained Sinkhorn algorithm to approximate the solution of the regularized problem, for which we establish convergence as the number of iterates becomes large.  
The steps of the algorithm are either explicit or boil down to finding the root of an explicit scalar function. Finally, Section~\ref{section:numericalrslt} gathers several numerical applications to, first, showcase the theoretical results of convergence and, second, to illustrate the correction of stressed implied volatility smiles obtained with our method.

\subsection{Notations}

Following standard conventions, we denote by $\R_+$, $\R_-$, and $\N^*$ the sets of nonnegative real, nonpositive real, and positive natural numbers, respectively. 
For natural numbers $n$ and $m$ with $n < m$, we denote $\llbracket n,m \rrbracket := \{n,n+1,\cdots,m\}$. 
The cardinality of a finite set $\Theta$ is denoted $|\Theta|$. 

\paragraph{Vectors and matrices} For $N\in \N^*$, $\indic_N$ (resp.\ $0_N$) is the vector $(1,\cdots,1)\in \R^N$ (resp.\ $(0,\cdots,0)\in \R^N$). 
For $x\in \R^N$, $\diag{x}$ is the $N\times N$ diagonal matrix with diagonal entries given by $x$ and we denote $\indic_{N\times N}$ (resp.\ $0_{N\times N}$) the squared matrix of size $N$ with all entries equal to one (resp.\ zero). Other matrices are denoted in boldface uppercase letters. The transpose of a matrix or a vector is indicated by the superscript $^\top$. $\ps{\mathbf{A}}{\mathbf{B}}_F=\mathrm{Tr}(\mathbf{A}^\top \mathbf{B})$ is the Frobenius dot product between matrices, $\ps{x}{y}=\sum_{p=1}^N x_py_p$ is the canonical inner product in $\R^N$, and $|x|_q = \bigl(\sum_{p=1}^N |x_p|^q\bigr)^{1/q}$ is the $q$-norm for $x \in \R^N$ and $q\in \left[1,+\infty\right[$.
Component-wise product and division between equally sized matrices or vectors are denoted by $\odot$ and $\odiv$, respectively. For instance, if $(x,y)\in (\R^N)^2$, then $x\odot y = (x_py_p)_{1\leq p \leq N}$. Exponentials of vectors and matrices are to be understood coordinate-wise: for $x\in \R^N$, $e^x$ is the shorthand for $(e^{x_p})_{1\leq p \leq N}$, $\log(x)$ for $(\log(x_p))_{1\leq p \leq N}$ (when $x>0_N$), and similarly for matrices.
Finally, for $(x,y)\in (\R^N)^2$, $\max(x,y)$ stands for the vector $\left(\max(x_p,y_p)\right)_{1\leq p \leq N}$. 

\section{Financial setting: option prices}
\label{settingandnotations}

We consider a finite set of call options written on the same underlying asset $S$. $S_t$ will stand for the value of $S$ at time $t$. Present time is $t=0$. The options can have different maturities $0 < T_1 < \cdots < T_m$. For a given maturity $T_i$, the available strikes are denoted by $0 < K_1^i < \cdots < K_{n_i}^i$; note that the set of strikes $\{K^i_j\}_{1 \le j \le n_i}$ and their number $n_i$ can change from one maturity to the other.
Hence, the grid of interest may not be rectangular, which is the typical situation in Equity markets. The current price (at time $0$) of the option paying $\left(S_{T_i}-K_j^i\right)^+$ at time $T_i$ is $C_j^i$, which we assume to be positive. We assume that interest rates and possible dividends paid by $S$ are deterministic. We denote by $D_i$ the discount factor associated with the maturity $T_i$, while $F_i$ will stand for the $T_i$-forward price of $S$ at time $0$. 
In the following, we will work with the normalized variables
\[
M_i = \frac{S_{T_i}}{F_i},
\quad k_j^i=\frac{K_j^i}{F_i},
\quad c_j^i = \frac{C_j^i}{D_i F_i}.
\]
The vector of normalized prices is $c = \left(c_1^1,\cdots, c_j^i,\cdots,c_{n_m}^m\right)\in \R^{\sum_{i=1}^m n_i}$.

\subsection{Checking for arbitrages} 
\label{subsec:checkingforarbitrages}
By the First Fundamental Theorem of Asset Pricing, the absence of arbitrage is equivalent to the existence of a martingale model calibrated to the price system $c$. The feasibility of constructing such a model can be assessed through model-independent tests that rely on the available data $\left(T_i,k_j^i,c_j^i\right)_{1\leq i \leq m,\, 1\leq j \leq n_i}$, as shown by Cousot in \cite{Cousot2004,Cousot2007}. 
These tests consist in verifying the nonnegativity of the initial values of a list of static trading strategies formed by combinations of available options and the underlying asset. Cohen et al.\ showed (see \cite[Proposition 2]{cohenReisinger}) that these tests boil down to verifying that $c$ solves a linear system of inequalities.

It is well known that the no-arbitrage property is related to the notion of convex order between probability measures.

\begin{defn}[\textbf{Convex order and NDCO condition}]
\label{def:cvxorder}
Let $m\in \N^*$ and $\left(\mu_i\right)_{1\leq i \leq m}$ be a family of probability measures on $(\R,\B(\R))$, with finite moment of order 1. We say that $\left(\mu_i\right)_{1\leq i \leq m}$ is non-decreasing in the convex order (NDCO) if for all convex functions $f : \R \rightarrow \R$ with linear growth and for all $1\leq i\leq i' \leq m$, one has
\[
\int f d\mu_i \leq \int f d\mu_{i'}\,.
\]
\end{defn}
Given arbitrage-free option prices, Cousot \cite[Section 4.2]{Cousot2004} provides an explicit construction of a sequence of finitely-supported probability measures calibrated to the data and satisfying the NDCO property. 
By Strassen's theorem \cite{Strassen}, there exists a discrete-time martingale process with these marginals.

Let us outline Cousot's construction. First, the (normalized) data are extended with two additional strikes and option prices at each maturity: for all $i\in\llbracket 1, m\rrbracket$, $c_0^i:=1$ at strike $k_0^i:=0$ and $c_{n_i+1}^i := 0$ at some strike $k_{n_i+1}^i > k_{n_i}^i$ chosen large enough to preserve convexity of pricing functions at each maturity. Precisely, Cousot defines
\[
k_{n_i+1}^i = k_{n_i}^i - \frac{2}{a}c^i_{n_i}\,,
\]
where $a$ is a carefully chosen constant\footnote{Precisely,
\[
a:=\max\,\left(\left\{\frac{c_j^i-c_{j'}^{i'}}{k_j^i-k_{j'}^{i'}}\,\Big\vert\, (i,i') \in \llbracket 1, m\rrbracket^2,\,(j,j')\in \llbracket 0, n_i\rrbracket\times\llbracket 0, n_{i'}\rrbracket\; \text{such that} \; k_j^i>k_{j'}^{i'}\right\}\cap\left(\R_-\backslash\{0\}\right)\right)\,.
\]
}.
Note that it is also possible to choose all the $k_{n_i+1}^i$'s equal to $k_{\mathrm{max}}:=\max_{1\leq i \leq m}\, k_{n_i+1}^i$, which still preserves convexity, as we do in the following. 
Then, for each maturity $T_i$ the (normalized) call pricing function $\pi_i:k\mapsto \pi_i(k)$ is defined as the non-increasing and piecewise-affine function whose graph is given by the lower boundary of the convex hull of the points $\left\{(k_j^{i'},c_j^{i'})\,|\, i\leq i',\, 0\leq j\leq n_{i'}+1\right\}$ (and zero after $k_{\mathrm{max}}$). 
Note that the pricing function $\pi_i$ is therefore determined by the observed option prices $c$ for maturities greater than or equal to $T_i$.
Under the no-arbitrage assumption,  $\pi_i$ verifies $\pi_i(k_j^i) = c_j^i$ for all $i$ and $j$, and it defines a discrete probability measure $\mu_i$ via the second derivative of $\pi_i$ in the sense of distributions, satisfying $\pi_i(k)=\int (x-k)^+d\mu_i$ for all $k\in \R_+$ (see \cite[Lemma 3.1]{Cousot2004}). Thus defined, the probability measures $\left(\mu_i\right)_{1\leq i \leq m}$ satisfy the NDCO property.
Let us stress once again that one of the main contributions of Cousot was the formulation of such results in the general and realistic setting of a non-rectangular grid of strikes and maturities. 
As a consequence, each measure $\mu_i$ has a support contained in the collection of all strikes $\bigcup_{i=1}^{m} \left\{k_1^i,\cdots,k_{n_i}^i\right\} \cup \left\{0,k_{\mathrm{max}}\right\}$, which is generally larger than the set of strikes $\{k_1^i,\cdots,k_{n_i}^i \}$ observed for the single maturity $T_i$.

 \begin{rem}
\label{rem:justificationproductspace}
The essential fact is that any system of arbitrage-free call prices on the general grid $\left(T_i,k_j^i\right)_{1\leq i \leq m,\, 1\leq j \leq n_i}$ can be attained by a martingale process with trajectories supported by the finite product space 
$\left(\bigcup_{i=1}^{m} \left\{k_1^i,\cdots,k_{n_i}^i\right\}\cup \left\{0,k_{\mathrm{max}}\right\}\right)^m$, which will be the state space for the set of martingale measures that we will consider in what follows.
\end{rem}

\subsection{Signed measure associated with option prices with arbitrages}
\label{subsection:definitionnu}
From now on, we work under the following assumption.
\begin{assumption}
\label{assumption1}
The observed option price data $\left(T_i,k_j^i,c_j^i\right)_{1\leq i \leq m,\, 1\leq j \leq n_i}$ contain static arbitrage opportunities.
\end{assumption}

Such arbitrage opportunities may be due, for example, to a local stress-test applied to the associated volatility surface (initially arbitrage-free), which is the typical case of interest for the risk department of an investment bank. Even if there are arbitrages, it is still possible to associate discrete measures that are consistent with the data, through a simple adaptation of Cousot's construction. The resulting measures might be signed or fail to satisfy the NDCO condition, and this will precisely be the starting point for the arbitrage removal problem that we set up in section~\ref{projection}.

Similarly to what was done in the previous section, we can start by enhancing the data with additional limiting prices for each $T_i$: $c_0^i:=1$ at $k_0^i:=0$ and $c_{n_i+1}^i:=0$ at $k_{n_i+1}^i=k_{\mathrm{max}} > k_{n_i}^i$ (see Lemma~\ref{lemma:nonemptinessofMandMcbar} to see how $k_{\mathrm{max}}$ should be chosen). 
From now on,  we denote by $\Theta$ the collection of all strikes, that is
\[
\Theta =\bigcup_{i=1}^{m} \left\{k_1^i,\cdots,k_{n_i}^i\right\}\cup \left\{0,k_{\mathrm{max}}\right\}\,,
\]
and, to ease notations, we rewrite it as $\Theta = \{k_1,\cdots,k_{| \Theta |}\}$ with $k_1<\cdots<k_{| \Theta |}$, so that $k_1=0$ and $k_{| \Theta |}=k_{\mathrm{max}}$.

\begin{defn}[\textbf{Signed marginals from arbitrageable option prices}]
\label{def:discretesignedmarginals}

For a given maturity $T_i$, we define the discrete (signed) measure $\nu_i$ by
\[
\nu_i = \sum_{j=0}^{n_i+1}w_j\delta_{k_j^i} \,,
\]
where $\delta_x$ is the Dirac mass located at $x\in \R$ and
\[
\begin{array}{l}
    w_0 = 1+\frac{c_0^i-c_1^i}{k_0^i-k_1^i}\,, \\[10pt]
     w_j = \frac{c_{j+1}^i-c_j^i}{k_{j+1}^i-k_j^i}-\frac{c_j^i-c_{j-1}^i}{k_j^i-k_{j-1}^i}\text{ , } 1\leq j \leq n_i \,,\\[10pt]
     w_{n_i+1} = -\frac{c_{n_i+1}^i-c_{n_i}^i}{k_{n_i+1}^i-k_{n_i}^i}\,.
\end{array}
\]
    
\end{defn}

\begin{rem}
The $\nu_i$'s are always well defined and have a total mass of $1$. Note that the price system $c$ could contain calendar spread arbitrages, but no butterfly arbitrage; in this case, each $\nu_i$ is a true probability measure. We still use the terminology ``signed" measure to emphasize that the $\nu_i$'s are associated with some inconsistent data.
\end{rem}

\begin{defn}[\textbf{Piecewise-affine interpolation of prices}]
\label{def:piecewisepricingfunction}
For a fixed $T_i$, we define the function $\pi_i : \R_+\rightarrow \R$ by
    \[\pi_i(k) = \left\{\begin{array}{ll}
    \frac{c_{j+1}^i-c_j^i}{k_{j+1}^i-k_j^i}(k-k_j^i)+c_j^i&\text{ if }k\in\left[k_j^i,k_{j+1}^i\right[ \\
    0&\text{ if } k\geq  k_{\mathrm{max}}
    \end{array}\right.\,.\]
\end{defn}
By construction, $\pi_i$ is piecewise-affine, continuous and such that $\pi_i(k_j^i)=c_j^i$. 

\begin{lemma}[Lemma 3.1 in \cite{Cousot2004}]
\label{lem:discretemarginal}
For all $i\in \llbracket 1,m \rrbracket$ and for all $k\in \R_+$, we have
\[
\int (x-k)^+d\nu_i = \pi_i(k)\,.
\]
\end{lemma}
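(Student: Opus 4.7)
The plan is to show that both sides of the claimed equality, viewed as functions of $k \in \R^+$, are continuous piecewise-affine with common breakpoints at the enlarged strikes $k_0^i < k_1^i < \cdots < k_{n_i+1}^i$. If they moreover agree at the right endpoint $k_{n_i+1}^i$ and have matching slopes on each open sub-interval of this partition, they must coincide everywhere on $\R^+$.

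For $\pi_i$, continuity on $\R^+$ is a direct verification from Definition~\ref{def:piecewisepricingfunction} (the left and right limits match at each $k_j^i$ because $\pi_i(k_j^i)=c_j^i$ is imposed by both neighboring affine pieces), and the slope on $(k_j^i, k_{j+1}^i)$ equals $s_j := (c_{j+1}^i - c_j^i)/(k_{j+1}^i - k_j^i)$; moreover $\pi_i(k_{n_i+1}^i) = c_{n_i+1}^i = 0$. For the function $f(k) := \int (x-k)^+ d\nu_i = \sum_{j=0}^{n_i+1} w_j (k_j^i - k)^+$, the same structural properties are immediate since each summand is a positive part of an affine function of $k$: $f$ is continuous piecewise-affine with breakpoints exactly at the atoms of $\nu_i$; $f(k_{n_i+1}^i) = 0$ because the support of $\nu_i$ is contained in $[0, k_{n_i+1}^i]$; and the slope of $f$ on $(k_j^i, k_{j+1}^i)$ equals $-\sum_{j'=j+1}^{n_i+1} w_{j'}$.

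It therefore remains to verify the slope identity $s_j = -\sum_{j'=j+1}^{n_i+1} w_{j'}$ for every $0 \leq j \leq n_i$. Reading Definition~\ref{def:discretesignedmarginals} in terms of the slopes $s_{j'}$, one has $w_{j'} = s_{j'} - s_{j'-1}$ for $1 \leq j' \leq n_i$ and $w_{n_i+1} = -s_{n_i}$, so the sum telescopes to $-s_j$ without further work. This argument is purely algebraic and presents no real obstacle; the only small subtlety worth flagging is that the additive constant in $w_0 = 1 + s_0$, which is what forces $\nu_i$ to have total mass $1$, never appears in the slope sums (they all start at $j' \geq 1$) and is therefore harmless for the identity. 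Once the slopes agree on every interval and the two functions share the value $0$ at $k_{n_i+1}^i$, the difference $f - \pi_i$ is a continuous piecewise-affine function on $[0, k_{n_i+1}^i]$ with zero derivative on each open sub-interval and vanishing at $k_{n_i+1}^i$, hence vanishes identically; for $k \geq k_{n_i+1}^i$ both sides are zero by construction.
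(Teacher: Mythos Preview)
Your proof is correct: the telescoping-slope argument cleanly establishes that the two piecewise-affine functions agree on every sub-interval and share the common value $0$ at $k_{n_i+1}^i$, hence coincide on all of $\R^+$. The paper does not supply its own proof of this lemma---it is quoted directly from \cite{Cousot2004}---so there is nothing to compare against; your direct verification is exactly the kind of elementary check one expects here.
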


\begin{rem}
Lemma~\ref{lem:discretemarginal} shows that, in addition to summing up to one, the $\nu_i$'s calibrate the data with arbitrage.
We will employ the terminology ``marginals'' for the $\nu_i$'s, keeping in mind that they are, in general, signed measures.
\end{rem}

Note that the construction of the marginals $\nu_i$ in Definition~\ref{def:discretesignedmarginals} is simpler than the construction of Cousot \cite{Cousot2004} that we outlined in the previous section; the pricing function $\pi_i$ is now defined from a linear interpolation of the prices $c^i_j$ observed at maturity $T_i$, as opposed to the boundary of the convex hull of all the prices $c^{i'}_j$ observed for maturities $T_{i'} \ge T_i$.
As a consequence, each $\nu_i$ in Definition~\ref{def:discretesignedmarginals} has a finite support included in the set of strikes $\{k_1^i,\cdots,k_{n_i}^i\}\cup\{0,k_{\mathrm{max}}\}$ (but can, of course, be seen as a measure distributed on the larger state space $\Theta$, with zero weights for the elements of $\Theta$ that do not belong to its support). 

The $\nu_i$'s being specified, the last object we require for our arbitrage removal procedure is
a joint signed measure $\nu$ on the finite product space $\Theta^m$, with marginals $\left(\nu_i\right)_{1\leq i \leq m}$. 
We could think of such a $\nu$ as a model with arbitrage, calibrated to the initial set of prices $c$. A construction of $\nu$
appropriate for our purposes will be provided in Section~\ref{subsection:choicenu}. Note that we will have $\nu(\Theta^m)=1$, due to the properties of the marginals $\nu_i$. 
In the following section, we will define a distance on the set of signed measures with total mass equal to one, which will serve as a metric to formulate the removal of arbitrage as the projection of $\nu$ onto the subset of martingale measures.

\paragraph{Constrained option prices}
As addressed above, the static arbitrage within the price system $c$ may arise from a local stress-test of the related implied volatility surface.
The arbitrage removal procedure will correct such local deformation by making it compatible with no-arbitrage.
Within this procedure, we may also allow ourselves to modify the remaining part of the stressed IVS (the one that was left unperturbed by the local deformation); or, on the contrary, we may wish to be able to leave all the values outside the stressed region, or part of them, unchanged. 
Overall, we wish to be able to control which parts of the stressed IVS must be preserved during the process of arbitrage removal and which ones are free variables. To do so, we denote by $\Bar{c}$ a system of arbitrage-free (normalized) prices on a given arbitrary sub-grid of $\left(T_i,k_j^i\right)_{1\leq i \leq m,\, 1\leq j \leq n_i}$.
In practice, $\Bar{c}$ could correspond to the market prices that were not affected by a local stress-test and that the user wants to keep fixed. In order to identify the maturities and the normalized strikes in the sub-grid, we introduce the set of indexes $\mathcal{I}\subset \llbracket 1, m \rrbracket$ and $\mathcal{J}_{i} = \left\{j_1<\cdots<j_{\Bar{n}_i}\right\}\subset \left\{1,\cdots,n_i\right\}$ for every $i\in \mathcal{I}$, so that $\Bar{c}=\left(\Bar{c}_j^i\right)_{i \in \mathcal{I},\,j\in \mathcal{J}_i}$. We denote $\Bar{n}=\sum_{i\in \mathcal{I}}\Bar{n}_i$ the length of $\Bar{c}$.
The vector $\Bar{c}$ will be used as a set of constraints for the optimization problem that we will set up in section~\ref{projection}.

\section{Distance of Wasserstein-type between signed measures}
\label{distsignedmeasures}

In this section, we consider a general Polish space $(\mathcal{X}, d)$. We denote by $\B(\X)$ its Borel $\sigma$-algebra and $\M=\left\{\nu : \B(\X) \rightarrow \R\, |\, \nu(\emptyset)=0,\, \nu\text{ is }\sigma\text{-additive}\right\}$ the set of finite signed measures on $\X$. $\M_+\subset \M$ is the subset of finite nonnegative measures on $\X$. We recall that any element $\nu$ of $\M$ admits a unique decomposition $(\nu_J^+,\nu_J^-)$ (the so-called \textit{Jordan decomposition}) with $\nu_J^+,\nu_J^- \in \M_+$, such that $\nu = \nu_J^+-\nu_J^-$ and $\nu_J^+,\nu_J^-$ are singular. There are infinitely many decompositions of $\nu$ of the form $\nu=\nu^+-\nu^-$; for example, $\nu^+=\nu^+_J +\sigma$ and $\nu^-=\nu^-_J +\sigma$ with $\sigma \in \M_+$. The variation of $\nu \in\M$ is $|\nu|=\nu_J^++\nu_J^-\in \M_+$. For $\varrho\in [1,+\infty[$, we define $\M^\varrho = \left\{\nu \in \M\, |\, \int d(x_0,x)^\varrho d|\nu| <+\infty\text{ for all }x_0\in \X\right\}$ and $\M_+^\varrho = \M^\varrho\cap \M_+$. Finally, for $\alpha\in \R$, we denote $\M(\alpha) = \left\{\nu \in \M\,|\, \nu(\X)=\alpha\right\}$ the set of measures with total mass $\alpha$ (and similarly $\M^\varrho(\alpha),\M_+(\alpha), \M_+^\varrho(\alpha)$, with $\M_+(\alpha)= \M_+^\varrho(\alpha)=\emptyset$ whenever $\alpha<0$).

Wasserstein distances are typically defined on the space $\M_+^\varrho(1)$, that is, the set of probability measures with finite $\varrho$-moment (see the monographs \cite{villanitopics,villanioldandnew} for details). However, they extend naturally to nonnegative measures with a fixed finite mass $\alpha >0$. Their definition relies on optimal transport theory. 
We review the main definitions and results in the next section.

\subsection{Wasserstein distances for nonnegative measures with finite mass}

Let $\alpha >0$ and $\mu, \nu \in \M_+(\alpha)$.
We denote $\Gamma(\mu,\nu)$ the set of transference plans (or couplings) between $\mu$ and $\nu$, that is the set of nonnegative measures $\pi$ on the product space $\X\times \X$ that satisfy $\pi(A\times \X) = \mu(A)$ and $\pi(\X\times B) = \nu(B)$ for all $A,B \in \B(\X)$. We recall that the Wasserstein distance of order $\varrho\in [1,+\infty[$ between two measures $\mu$ and $\nu$ in $\M_+^\varrho(\alpha)$ is defined by
\[
W_\varrho(\mu,\nu) = \left(\inf_{\pi \in \Gamma(\mu,\nu)}\, \int d(x,y)^\varrho d\pi\right)^{\frac{1}{\varrho}}.
\]

It is well-known that, for any $\varrho\in[1,+\infty[$ and any $\alpha>0$, the space $\left(\M_+^\varrho(\alpha),W_\varrho\right)$ is a metric space (see \cite[Theorem 7.3]{villanitopics} for the case $\alpha=1$).

\begin{thm}[\textbf{Kantorovich-Rubinstein duality formula}, see \cite{villanitopics} for $\alpha=1$]
\label{thm:KRdualityformula}
Let $\alpha >0$ and $\mu,\nu \in \M_+^1(\alpha)$. Then,
\[W_1(\mu,\nu) = \sup_{\varphi \in \Lipspace{\X},\, \norm{\varphi}{\mathrm{Lip}}\leq 1} \; \int_{\X} \varphi\, d(\mu-\nu)\,,\]
where $\Lipspace{\X}$ is the set of Lipschitz functions on $\X$ and $\norm{\varphi}{\mathrm{Lip}}:= \sup_{x,y \in \X}\; \frac{|\varphi(x)-\varphi(y)|}{d(x,y)}$.
\end{thm}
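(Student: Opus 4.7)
\medskip
\noindent\textbf{Proof proposal.} The plan is to reduce the result to the classical case $\alpha = 1$ stated in \cite{villanitopics} by a simple rescaling argument, since the objects involved are positive-homogeneous of degree one in the total mass. First, I would introduce the normalized probability measures $\tilde{\mu} = \mu/\alpha$ and $\tilde{\nu} = \nu/\alpha$, which belong to $\M_+^1(1)$, and observe that $\tilde{\mu}, \tilde{\nu}$ inherit the finite first moment from $\mu, \nu$.

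Next, I would establish the bijection $\pi \mapsto \pi/\alpha$ between $\Gamma(\mu,\nu)$ and $\Gamma(\tilde{\mu},\tilde{\nu})$; the marginal relations are preserved under this scaling. From this bijection one immediately gets
\[
W_1(\mu,\nu) \;=\; \inf_{\pi \in \Gamma(\mu,\nu)} \int d(x,y)\, d\pi \;=\; \alpha \inf_{\tilde\pi \in \Gamma(\tilde\mu,\tilde\nu)} \int d(x,y)\, d\tilde\pi \;=\; \alpha\, W_1(\tilde\mu,\tilde\nu).
\]
Applying the classical Kantorovich--Rubinstein formula (Theorem 1.14 in \cite{villanitopics}) to the probability measures $\tilde\mu,\tilde\nu$ yields
\[
W_1(\tilde\mu,\tilde\nu) \;=\; \sup_{\varphi \in \Lipspace{\X},\, \norm{\varphi}{\mathrm{Lip}}\leq 1}\, \int_\X \varphi\, d(\tilde\mu - \tilde\nu).
\]

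Finally, using $\mu - \nu = \alpha(\tilde\mu - \tilde\nu)$ and the linearity of the integral, I would conclude
\[
W_1(\mu,\nu) \;=\; \alpha \sup_{\varphi:\,\norm{\varphi}{\mathrm{Lip}}\leq 1}\int_\X \varphi\, d(\tilde\mu - \tilde\nu) \;=\; \sup_{\varphi:\,\norm{\varphi}{\mathrm{Lip}}\leq 1}\int_\X \varphi\, d(\mu - \nu),
\]
which is the claimed formula. There is no real obstacle here: the only point requiring mild care is to check that the first-moment condition transfers cleanly through the rescaling (so that both $W_1(\mu,\nu)$ and $W_1(\tilde\mu,\tilde\nu)$ are finite) and that the supremum on the dual side is unchanged when one multiplies the measure by a positive constant, which is immediate from the linearity of $\varphi \mapsto \int \varphi\, d(\mu-\nu)$.
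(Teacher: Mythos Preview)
Your rescaling argument is correct and is exactly the natural way to pass from the classical case $\alpha=1$ to arbitrary $\alpha>0$. Note, however, that the paper does not actually supply a proof of this theorem: it is stated with a reference to \cite{villanitopics} for $\alpha=1$ and then used as a black box (in particular in the proof of Proposition~\ref{prop:dualityww1}). Your proposal therefore fills in precisely the short step the authors left implicit, and there is nothing to compare against on their side.
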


\subsection{A Wasserstein distance between signed measures}

The following definition, motivated by Ambrosio et al.\ \cite{signedvorticesAmbrosio}, extends the $1$-Wasserstein distance to signed measures.

\begin{defn}[\textbf{Wasserstein distance between signed measures with unit mass}] 
\label{def:wasssigned}
Let $\mu$ and $\nu$ be two elements of $\M^1(1)$. The Wasserstein distance between $\mu$ and $\nu$ is defined by
\[
\W_1(\mu,\nu) = W_1(\mu^++\nu^-,\mu^-+\nu^+)\,,
\]    
for any decomposition $\mu^+,\mu^-\in \M_+^1$ and $\nu^+,\nu^- \in \M_+^1$ of $\mu$ and $\nu$.
\end{defn}

\begin{prop}
\label{prop:dualityww1}
$\W_1$ is a well-defined functional. Moreover, we have the following duality formula
\[
\W_1(\mu,\nu) = \sup_{\varphi \in \Lipspace{\X},\, \norm{\varphi}{\mathrm{Lip}}\leq 1} \; \int_{\X} \varphi\, d(\mu-\nu)\,.
\]
    
\end{prop}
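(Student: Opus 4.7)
The plan is to prove well-definedness and the duality formula in a single stroke by applying the Kantorovich--Rubinstein duality (Theorem~\ref{thm:KRdualityformula}) to the two positive measures appearing in Definition~\ref{def:wasssigned}, which collapses their difference to $\mu-\nu$ and therefore eliminates any dependence on the decomposition.

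First I would check that the expression $W_1(\mu^++\nu^-,\mu^-+\nu^+)$ makes sense. Since $\mu,\nu\in \M^1(1)$ and $\mu^\pm,\nu^\pm\in\M_+^1$, the sums $\mu^++\nu^-$ and $\mu^-+\nu^+$ both lie in $\M_+^1$. Moreover, from $\mu^+(\X)-\mu^-(\X)=\mu(\X)=1$ and $\nu^+(\X)-\nu^-(\X)=\nu(\X)=1$, a direct computation gives
\[
(\mu^++\nu^-)(\X)\;=\;1+\mu^-(\X)+\nu^-(\X)\;=\;(\mu^-+\nu^+)(\X),
\]
so both measures share a common total mass $\alpha\in \R_+^*$, and $W_1(\mu^++\nu^-,\mu^-+\nu^+)$ is well-defined as an element of $\M_+^1(\alpha)^2$.

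Next I would apply Theorem~\ref{thm:KRdualityformula} to the pair $(\mu^++\nu^-,\,\mu^-+\nu^+)$, which yields
\[
W_1(\mu^++\nu^-,\mu^-+\nu^+)\;=\;\sup_{\varphi\in\Lipspace{\X},\ \norm{\varphi}{\mathrm{Lip}}\leq 1}\int_{\X}\varphi\,d\bigl((\mu^++\nu^-)-(\mu^-+\nu^+)\bigr).
\]
Since $(\mu^++\nu^-)-(\mu^-+\nu^+)=(\mu^+-\mu^-)-(\nu^+-\nu^-)=\mu-\nu$, the right-hand side equals $\sup_{\norm{\varphi}{\mathrm{Lip}}\leq 1}\int\varphi\,d(\mu-\nu)$, which depends only on $\mu-\nu$ and not on the chosen decompositions. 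This simultaneously proves that the value of $W_1(\mu^++\nu^-,\mu^-+\nu^+)$ is independent of the decompositions $(\mu^\pm),(\nu^\pm)$---so $\W_1$ is well-defined---and gives the announced duality identity.

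I do not expect a real obstacle: the only delicate point is the invocation of Theorem~\ref{thm:KRdualityformula} with mass $\alpha\neq 1$, which is why I first verify that the two positive measures have equal total mass so that the theorem applies verbatim. A minor sanity check that I would add at the end is that the supremum is finite (non-vacuous finite duality value): since $\mu,\nu\in\M^1(1)$ have finite first moments, $\int\varphi\,d(\mu-\nu)$ is bounded by $\norm{\varphi}{\mathrm{Lip}}\cdot(\int d(x_0,\cdot)\,d|\mu|+\int d(x_0,\cdot)\,d|\nu|)+|\varphi(x_0)|\cdot|\mu(\X)-\nu(\X)|$ for any $x_0\in\X$, which is finite.
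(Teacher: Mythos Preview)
Your proposal is correct and follows essentially the same approach as the paper's proof: verify that $\mu^++\nu^-$ and $\mu^-+\nu^+$ have the same total mass in $\M_+^1$, apply the Kantorovich--Rubinstein duality (Theorem~\ref{thm:KRdualityformula}), and observe that the resulting supremum depends only on $\mu-\nu$. The only cosmetic differences are your explicit computation of the common mass and the final finiteness sanity check, neither of which changes the argument.
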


\begin{proof}
Let $\mu,\nu \in \M^1(1)$ and $\mu^+,\mu^-,\nu^+,\nu^- \in \M^1_+$ such that $\mu = \mu^+-\mu^-$ and $\nu = \nu^+-\nu^-$.

Because $\mu(\X)=\nu(\X)=1$, we have $\mu^+(\X)+\nu^-(\X)=\nu^+(\X)+\mu^-(\X)=\alpha \geq 1$. Hence, $\mu^++\nu^-$ and $\mu^-+\nu^+$ are elements of $\M^1_+(\alpha)$. Then, from the duality formula~\ref{thm:KRdualityformula}, we obtain
\begin{align*}
\W_1(\mu,\nu) &=W_1(\mu^++\nu^-,\mu^-+\nu^+)\\
&=\sup_{\varphi \in \Lipspace{\X},\, \norm{\varphi}{\mathrm{Lip}}\leq 1} \; \int_{\X} \varphi\, d((\mu^++\nu^-)-(\mu^-+\nu^+))\\
&=\sup_{\varphi \in \Lipspace{\X},\, \norm{\varphi}{\mathrm{Lip}}\leq 1} \; \int_{\X} \varphi\, d(\mu-\nu)\,.
\end{align*}
From this duality, we see that the value of $\W_1(\mu,\nu)$ does not depend on the decomposition of $\mu$ and $\nu$.
\end{proof}

For the sake of completeness, we provide a short proof of the following result.

\begin{prop}
    \label{prop:ww1metricspace}
    The space $\left(\M^1(1),\W_1\right)$ is a metric space.
\end{prop}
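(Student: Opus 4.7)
My plan is to verify the three metric axioms (non-negativity, symmetry, separation, and triangle inequality), relying heavily on the duality formula from Proposition~\ref{prop:dualityww1} and the fact that $\W_1$ is well-defined independently of the signed decomposition chosen.

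First, non-negativity is immediate since $\W_1$ is defined as a $W_1$ distance between positive measures. Symmetry follows by writing $\W_1(\mu,\nu) = W_1(\mu^++\nu^-,\mu^-+\nu^+) = W_1(\nu^-+\mu^+,\nu^++\mu^-) = \W_1(\nu,\mu)$, using the symmetry of the classical Wasserstein distance on $\M_+^1(\alpha)$. The triangle inequality is the cleanest thing to prove through duality: for $\mu,\nu,\lambda \in \M^1(1)$ and any $1$-Lipschitz function $\varphi$,
\[
\int_\X \varphi \, d(\mu-\nu) = \int_\X \varphi \, d(\mu-\lambda) + \int_\X \varphi \, d(\lambda-\nu) \leq \W_1(\mu,\lambda)+\W_1(\lambda,\nu),
\]
and taking the supremum over $\varphi$ on the left yields $\W_1(\mu,\nu) \leq \W_1(\mu,\lambda)+\W_1(\lambda,\nu)$.

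The remaining and subtlest point is the separation property: $\W_1(\mu,\nu)=0 \Longleftrightarrow \mu=\nu$. The implication ``$\Leftarrow$'' is trivial by choosing $\mu^\pm = \nu^\pm$ in the definition. For ``$\Rightarrow$'', the duality formula gives $\int \varphi\, d(\mu-\nu) = 0$ for every $1$-Lipschitz $\varphi$; by scaling any nonconstant Lipschitz function by $1/\|\varphi\|_{\mathrm{Lip}}$, this extends to all Lipschitz functions (noting $\mu-\nu \in \M^1$ so the integrals converge). To conclude that the signed measure $\eta := \mu-\nu$ vanishes, I would approximate the indicator of an arbitrary closed set $F\subset \X$ by the sequence of bounded $n$-Lipschitz functions $\varphi_n(x) = \max\!\left(0,1-n\, d(x,F)\right)$, which converges pointwise to $\indic_F$ and is dominated by $1$; since $|\eta|$ is a finite measure, dominated convergence gives $\eta(F)=0$. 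Closed sets being a $\pi$-system generating $\B(\X)$, this forces $\eta \equiv 0$, hence $\mu=\nu$.

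I expect the triangle inequality and symmetry to be essentially a one-line consequence of duality, and the real care to go into the separation argument, specifically in justifying the approximation of indicators of closed sets by Lipschitz functions and the dominated-convergence step, which rests on the finiteness of $|\mu-\nu|$ (guaranteed by $\mu,\nu \in \M^1(1) \subset \M$).
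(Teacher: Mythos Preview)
Your proof is correct. The non-negativity and triangle inequality arguments coincide with the paper's. The minor differences are in symmetry and separation.

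For symmetry, you invoke the symmetry of the classical $W_1$, whereas the paper uses the duality formula with the substitution $\varphi \leftarrow -\varphi$; both are one-liners.

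For separation, your route (duality $\Rightarrow$ $\int \varphi\, d(\mu-\nu)=0$ for all Lipschitz $\varphi$ $\Rightarrow$ approximate $\indic_F$ by $\varphi_n(x)=\max(0,1-n\,d(x,F))$ $\Rightarrow$ dominated convergence $\Rightarrow$ $\pi$-$\lambda$) is sound, but it essentially re-proves the separation axiom of $W_1$ from scratch. The paper takes the shortcut of using that $W_1$ is already known to be a metric on $\M_+^1(\alpha)$: since $\W_1(\mu,\nu)=W_1(\mu^++\nu^-,\mu^-+\nu^+)$, one gets $\W_1(\mu,\nu)=0 \Leftrightarrow \mu^++\nu^- = \mu^-+\nu^+ \Leftrightarrow \mu=\nu$. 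This is a two-line argument and avoids the approximation and $\pi$-$\lambda$ machinery entirely. Your longer argument has the merit of being self-contained and of making explicit where the first-moment condition in $\M^1(1)$ enters, but for this paper's purposes the direct appeal to the metric property of $W_1$ is the natural shortcut.
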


\begin{proof}
From the duality in Proposition~\ref{prop:dualityww1}, $\W_1$ is symmetric (with the change of variable $\varphi \leftarrow -\varphi$). Moreover, from Definition~\ref{def:wasssigned}, $\W_1$ is non-negative, as the classical $1$-Wasserstein distance is. From the positivity of $W_1$, we get $\W_1(\mu,\nu)=0 \equivalent \mu^++\nu^-=\mu^-+\nu^+\equivalent \mu=\nu$. Finally, for $\mu,\nu,\theta \in \M^1(1)$, and for $\varphi\in \Lipspace{\X}$ with $\norm{\varphi}{\mathrm{Lip}}\leq 1$, we have
\begin{align*}
\int_\X \varphi \,d(\mu-\nu) &= \int_\X \varphi \,d(\mu-\theta) + \int_\X \varphi \,d(\theta-\nu) \\
&\leq \sup_{\varphi \in \Lipspace{\X},\, \norm{\varphi}{\mathrm{Lip}}\leq 1} \; \int_{\X} \varphi\, d(\mu-\theta) + \sup_{\varphi \in \Lipspace{\X},\, \norm{\varphi}{\mathrm{Lip}}\leq 1} \; \int_{\X} \varphi\, d(\theta-\nu)\,.
\end{align*}
Taking the supremum over all $\varphi$ on the left-hand side and using Proposition~\ref{prop:dualityww1}
leads to the triangle inequality.
\end{proof}

\section{Projection on subsets of martingales}
\label{projection}

In the following, we work with the discrete (hence Polish) state space $\X=\Theta^m$ and we let $d$ be any distance on $\R^m$. In the discrete case, $\M = \M^\varrho$ and $\M_+ = \M_+^\varrho$ for all $\varrho\in [1,+\infty[$. For $p\in \llbracket 1, |\Theta|^m\rrbracket$, there exists a unique $m$-tuple $(p_1,\cdots,p_m)\in \llbracket 1, |\Theta| \rrbracket^m$ such that $p =1+\sum_{i=1}^{m} (p_i-1)|\Theta|^{m-i}$ (a slight reformulation of the decomposition of $p-1$ in base $|\Theta|$). With this one-to-one correspondence, we can define the path $x_p = \left(k_{p_1},\cdots,k_{p_m}\right) \in \Theta^m$, identified with a vector in $\R^m$, that starts at $k_{p_1}$ and ends at $k_{p_m}$. We denote by $\mathbf{D}$ the distance matrix $\left(d(x_p,x_q)\right)_{1\leq p,q \leq |\Theta|^m}$. The state space being $\Theta^m$ here, any element $\nu \in \M$ can either be written as $\nu = \sum_{p=1}^{|\Theta|^m}\nu_p\delta_{x_p}$ or $\nu = \sum_{1\leq p_1,\cdots, p_m \leq |\Theta|} \nu_{p_1,\cdots,p_m} \delta_{(k_{p_1},\cdots,k_{p_m})}$, with $\nu_p, \nu_{p_1,\cdots,p_m} \in \R$. Thus, we will identify signed measures on $\Theta^m$ either with vectors in $\R^{|\Theta|^m}$ or as $m$-rank tensors in $\otimes_{i=1}^m\R^{|\Theta|}$, depending on the situation. Finally, we denote by $(M_i)_{1\leq i \leq m}$ the canonical process: $M_i : \Theta^m \ni (k_1,\cdots,k_m) \mapsto k_i$ and we introduce the shorthand notations $\left(\nu_{p_1,\cdots,p_m}\right):=\left(\nu_{p_1,\cdots,p_m}\right)_{1\leq p_1,\cdots, p_m \leq |\Theta|}$ and $\sum_{p_1,\cdots,p_m} := \sum_{1\leq p_1,\cdots, p_m \leq |\Theta|}$.

\begin{defn}[\textbf{Martingale measures on $\Theta^m$}]
\label{def:martingalemeasures}
We denote by $\mathscr{M}\subset\M$ the subset of martingale measures centered at $1$. An element $\mu \in \mathscr{M}$ must satisfy
\begin{align*}
    &\mu\geq 0\,,\\
    &\mu(\Theta^m)=1\,,\\
    &\E^\mu\left[M_1\right]=1\,,\\
    &\E^\mu\left[M_{i+1}-M_i|M_1,\cdots,M_i\right]=0 \text{ for all }i\in \llbracket 1,m-1 \rrbracket\,.
\end{align*}
\end{defn}

Identifying $\mu \in \mathscr{M}$ as a $m$-rank tensor $\left(\mu_{p_1,\cdots,p_m}\right)$, the conditions above translate equivalently to
\begin{align}
    &\mu_{p_1,\cdots,p_m}\geq 0\; ({|\Theta|^m}\text{ inequality constraints})\tag{Nonnegativity}\label{martdef:positivity}\,,\\
    &\sum_{p_1,\cdots,p_m}\mu_{p_1,\cdots,p_m}=1\; (1\text{ equality constraint})\tag{Unit mass}\label{martdef:massconstraint}\,,\\
    &\sum_{p_1,\cdots,p_m}k_{p_1}\mu_{p_1,\cdots,p_m}=1\;(1\text{ equality constraint})\tag{Centering}\label{martdef:centeringconstraint}\,,\\
    &\sum_{p_{i+1},\cdots,p_m}\left(k_{p_{i+1}}-k_{p_i}\right)\mu_{p_1,\cdots,p_m}=0\,,\tag{Martingality}\label{martdef:martingalityconstraint}
\end{align}
where~\eqref{martdef:martingalityconstraint} must hold for all $i\in \llbracket1,\cdots,m-1\rrbracket$ and for all $(p_1,\cdots,p_i)\in \llbracket 1,|\Theta|  \rrbracket^i$ (corresponding to $\sum_{i=1}^{m-1}|\Theta|^i =  ({|\Theta|^m}-|\Theta|)/(|\Theta|-1)$ equality constraints).
We note that the total number of equality constraints is therefore $2+ ({|\Theta|^m}-|\Theta|)/(|\Theta|-1)$.
\paragraph{Constrained martingale measures $\mathscr{M}_{\Bar{c}}$} We also introduce $\mathscr{M}_{\Bar{c}}$, the subset of $\mathscr{M}$, where the elements $\mu$ satisfy the additional calibration property
\begin{equation}
\sum_{p_1,..,p_i,..,p_m}(k_{p_i}-k_j^i)^+\mu_{p_1,..,p_i,..,p_m}=\Bar{c}_j^i\,,\tag{Calibration}\label{martdef:calibrationconstraint}
\end{equation}
where~\eqref{martdef:calibrationconstraint} must stand for all $i\in \mathcal{I}$, for all $j\in \mathcal{J}_i$ ($\Bar{n}$ equality constraints, where $\Bar{n}$ is the length of $\Bar{c}$ defined in the last paragraph of Section~\ref{subsection:definitionnu}).
Therefore, an element of $\mathscr{M}_{\Bar{c}}$ must  satisfy $2+ ({|\Theta|^m}-|\Theta|)/(|\Theta|-1)+\Bar{n}$ equality constraints.

All these constraints are linear in the variable $\mu$. Hence, viewing now measures as vectors in $\R^{|\Theta|^m}$, there exists a matrix $\Abf$ and a vector $b$ (resp. a matrix $\Bar{\Abf}$ and a vector $\Bar{b}$) such that $\mathscr{M}=\left\{\mu\in\R^{|\Theta|^m}\, |\, \mu \geq 0_{|\Theta|^m},\, \Abf\mu = b\right\}$ (resp. $\mathscr{M}_{\Bar{c}}=\left\{\mu\in\R^{|\Theta|^m}\, |\, \mu \geq 0_{|\Theta|^m},\, \Bar{\Abf}\mu = \Bar{b}\right\}$). 

Let us also introduce $\mathscr{M}_{signed}=\left\{\mu\in\R^{|\Theta|^m}\, |\, \Abf\mu = b\right\}$, the superset of $\mathscr{M}$ where we remove the nonnegativity constraint. We call it the set of signed martingales on $\Theta^m$.

\begin{lemma}
\label{lemma:nonemptinessofMandMcbar}
If one chooses $k_{\mathrm{max}}>\max\left(1,\,\max_{1\leq i\leq m}\, k_{n_i}^i\right)$, then $\mathscr{M}$ is non-empty and if 
\[
k_{\mathrm{max}} > \max\left(\,\max_{1\leq i\leq m}\, k_{n_i}^i,\,\max_{i \in \mathcal{I}}\, k_{j_{\Bar{n}_i}}^i-\frac{2}{a}\Bar{c}_{j_{\Bar{n}_i}}^i\right),
\]
with
\[
a = \max\,\left(\left\{\frac{\Bar{c}_j^i-\Bar{c}_{j'}^{i'}}{k_j^i-k_{j'}^{i'}}\,\Big\vert\, i,i' \in \mathcal{I},\, j\in \mathcal{J}_i\cup\{0\},\, j'\in \mathcal{J}_{i'}\cup\{0\}\; \text{such that} \; k_j^i>k_{j'}^{i'}\right\}\cap\left(\R_-\backslash\{0\}\right)\right)\,,
\]
then $\mathscr{M}_{\Bar{c}}$ is non-empty.
\end{lemma}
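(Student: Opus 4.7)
The claim splits into two independent existence statements, and in each case the plan is to exhibit an explicit candidate.

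\textbf{For $\mathscr{M}$}, the natural candidate is the two-atom measure concentrated on constant paths,
$$\mu^{(0)} = \left(1-\tfrac{1}{k_{\max}}\right)\delta_{(0,\ldots,0)} + \tfrac{1}{k_{\max}}\delta_{(k_{\max},\ldots,k_{\max})}.$$
The hypothesis $k_{\max} > \max(1,\max_i k_{n_i}^i)$ serves two purposes at once: it makes the weights probabilities ($1/k_{\max}\in(0,1)$) and it forces $\{0,k_{\max}\}\subset\Theta$, so the support sits in $\Theta^m$. I would then verify constraints~\eqref{martdef:positivity}--\eqref{martdef:martingalityconstraint} of Definition~\ref{def:martingalemeasures} by direct inspection: positivity and total mass are immediate, centering follows from $(1/k_{\max})\cdot k_{\max}=1$, and the martingality increments vanish identically since $k_{p_{i+1}}-k_{p_i}=0$ on each atom.

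\textbf{For $\mathscr{M}_{\bar{c}}$}, the plan is to adapt Cousot's convex-hull procedure \cite{Cousot2004,Cousot2007} to the sub-grid data $(k_j^i,\bar{c}_j^i)_{i\in\mathcal{I},\,j\in\mathcal{J}_i}$ augmented, at each $i\in\mathcal{I}$, with the endpoints $(0,1)$ and $(k_{\max},0)$, then apply Strassen's theorem, and finally extend across all maturities in $\llbracket 1,m\rrbracket$. Since $\bar{c}$ is arbitrage-free on the sub-grid by assumption, the only point to verify is that inserting $(k_{\max},0)$ does not break the arbitrage-free property at a single maturity nor across maturities. A direct manipulation shows that the slope of the new right-most segment, $s_i = -\bar{c}_{j_{\bar{n}_i}}^i/(k_{\max}-k_{j_{\bar{n}_i}}^i)$, satisfies $|s_i|<|a|/2$ precisely when the second condition on $k_{\max}$ holds; combined with the definition of $a$ as the largest (least negative) negative slope attainable from any pair of sub-grid points (including $(0,1)$), this yields $s_i > a$, so the new segment is strictly less steep than any other slope present in the data. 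This is exactly what preserves convexity of each lower envelope $\bar{\pi}_i$ and the cross-maturity dominance $\bar{\pi}_i \leq \bar{\pi}_{i'}$ for $i<i'$ in $\mathcal{I}$ (since the envelope at $T_i$ is built from data at a larger set of maturities than that at $T_{i'}$, all sharing the common extreme point $(k_{\max},0)$). Cousot's Lemma~3.1 \cite{Cousot2004} then delivers NDCO discrete probability marginals $(\mu_i)_{i\in\mathcal{I}}$, supported in $\Theta$, that calibrate $\bar{c}$, and Strassen's theorem produces a martingale $\tilde{\mu}$ on $\Theta^{|\mathcal{I}|}$ with these marginals.

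It only remains to extend $\tilde{\mu}$ to an element of $\mathscr{M}_{\bar{c}}$ on $\Theta^m$. The plan is a piecewise-constant interpolation across maturities not in $\mathcal{I}$: define $M_i = \tilde{M}_{\phi(i)}$ with $\phi(i) = \max(\mathcal{I}\cap[1,i])$ when this set is non-empty, and $\phi(i)=\min\mathcal{I}$ otherwise. Between consecutive indices of $\mathcal{I}$ the process is frozen, so martingality and centering are inherited from $\tilde{\mu}$; the calibration~\eqref{martdef:calibrationconstraint} at $i\in\mathcal{I}$ is preserved because $M_i=\tilde{M}_i$ there; and the support remains in $\Theta^m$ because each $M_i$ only ever takes values already taken by $\tilde{M}$. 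The principal obstacle in the whole argument is the convex-hull verification above: the factor $2/a$ in the bound on $k_{\max}$ is tuned exactly so that $(k_{\max},0)$ becomes a genuine extremal vertex of the envelope rather than sitting on one of its existing facets, which is what ultimately secures both the support property of each $\mu_i$ and the cross-maturity convex order.
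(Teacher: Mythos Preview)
Your proposal is correct and tracks the paper's strategy closely. For $\mathscr{M}$, the paper gives the same two-point marginals $(1-1/k_{\max})\delta_0+(1/k_{\max})\delta_{k_{\max}}$ at every maturity and invokes Strassen, whereas you write down the constant-path coupling directly---same measure, slightly more elementary. For $\mathscr{M}_{\bar c}$, both proofs invoke Cousot's construction on the sub-grid, but the paper extends the resulting marginals $(m_i)_{i\in\mathcal I}$ to all of $\llbracket 1,m\rrbracket$ by \emph{forward}-filling (setting $\mu_i=m_{i_{\mathrm{nearest}}}$ with $i_{\mathrm{nearest}}=\min\{i'\in\mathcal I:i'\ge i\}$, and $\mu_i=m_{i_{\max}}$ for $i>i_{\max}$) and only then applies Strassen on $\Theta^m$; you instead apply Strassen on $\Theta^{|\mathcal I|}$ first and extend the process by freezing (backward-filling). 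Both orderings produce valid elements of $\mathscr{M}_{\bar c}$, though in general different ones, and your sketch of the slope argument behind the $k_{\max}$ bound is content the paper delegates entirely to the reference \cite{Cousot2004}.
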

\begin{proof}
For the first part, define $\mu_i=\left(1-\frac{1}{k_{\mathrm{max}}}\right)\delta_{k_1}+\frac{1}{k_{\mathrm{max}}}\delta_{k_{\mathrm{max}}}$ for $i\in \llbracket 1,m\rrbracket$. The $\mu_i$'s are discrete probability measures on $\Theta$ (by definition, $k_1=0\in \Theta$ and $k_{\mathrm{max}}=k_{|\Theta|}\in \Theta$) satisfying $\int x\, \mu_i(dx) = 1$. Moreover, they are equal; hence, they trivially satisfy the NDCO property. From Strassen's Theorem, there exists a martingale measure $\mu$ on $\Theta^m$ with these marginals; hence, an element of $\mathscr{M}$.

The second part follows from Cousot's construction \cite{Cousot2004} (recalled in~\ref{subsec:checkingforarbitrages}, below Definition~\ref{def:cvxorder}) applied to the arbitrage-free data $\left(T_i, k_j^i,\Bar{c}_j^i\right)_{i\in \mathcal{I},\, j\in \mathcal{J}_i}$. We obtain probability measures $\left(m_i\right)_{i\in\mathcal{I}}$ satisfying the NDCO condition, with support included in $\Theta$. By construction, they satisfy $\int x\,m_i(dx)=1$ and $\int (x-k_j^i)^+\,m_i(dx)=\Bar{c}^i_j$ for all $i\in \mathcal{I}$ and $j\in \mathcal{J}_i$. Then, we define
\[
\mu_i = \left\{\begin{array}{l}
     m_i \text{ if } i\in \mathcal{I}  \\
     m_{i_{\mathrm{nearest}}} \text{ if }i\notin \mathcal{I}\text{ and } i\leq i_{\mathrm{max}}\\
     m_{i_{\mathrm{max}}}\text{ if } i>i_{\mathrm{max}}
\end{array}\right.\,,
\]
with $i_{\mathrm{nearest}} = \min_{i'\in\mathcal{I},i\leq i'}\, i'$ and $i_{\mathrm{max}}=\max_{i'\in \mathcal{I}}\, i'$.

Thus defined, the $\mu_i$'s satisfy the NDCO property. Hence, from Strassen's Theorem, there exists a martingale measure $\mu$ on $\Theta^m$ with these marginals; hence, an element of $\mathscr{M}$. As the marginals also calibrate $\Bar{c}$, $\mu$ is actually an element of $\mathscr{M}_{\Bar{c}}$.
\end{proof}

For the rest of the paper, we choose $k_{\mathrm{max}}$ so that Lemma~\ref{lemma:nonemptinessofMandMcbar} applies.

\subsection{Choice of the initial joint signed measure $\nu$}
\label{subsection:choicenu}
As discussed in~\ref{subsection:definitionnu}, the signed marginals $(\nu_i)_{1\leq i\leq m}$ are uniquely defined in our setting. However, we have a degree of freedom in the choice of the joint signed measure $\nu$ with these marginals. Even though we assume there is arbitrage (Assumption~\ref{assumption1}), the choice $\nu = \nu_1 \otimes \cdots \otimes \nu_m$ would correspond, in some way, to an independent framework, which is unsatisfactory. Indeed, recall that our aim is to project $\nu$ onto the subset of martingales. Martingales never have independent marginals (except for constant martingales). Consequently, choosing $\nu$ as the product measure would be somehow ``very far" from $\mathscr{M}$ (or $\mathscr{M}_{\Bar{c}}$). Instead, we can try to select a $\nu$ from $\mathscr{M}_{signed}$ the set of signed martingales (defined just above Lemma~\ref{lemma:nonemptinessofMandMcbar}) that has the right marginals. We propose below a possible construction but leave for future work the question of whether this choice can be improved.

We aim at finding $\nu\in \mathscr{M}_{signed}$ that satisfies
\begin{equation}
\label{choicenu:marginalconstraints}
\tag{Signed marginals}
\sum_{p_k,\,k\neq i}\nu_{p_1,\cdots,p_m} = \left(\nu_i\right)_{p_i}\,,
\end{equation}
for all $i\in \llbracket 1, m\rrbracket$, for all $p_i \in \llbracket 1,|\Theta| \rrbracket$ ($m|\Theta|$ equality constraints), where $(\nu_i)_{p_i}$ is the $p_i$-th coefficient of the vector $\nu_i$. Let $\mathbf{B}\in \R^{m|\Theta|\times|\Theta|^m}$ be the matrix encoding this constraint, i.e.\ $\nu\in \R^{|\Theta|^m}$ satisfies~\eqref{choicenu:marginalconstraints} if and only if $\mathbf{B}\nu = (\nu_1,\cdots,\nu_m)\in \R^{m|\Theta|}$. One way to select an element from $\mathscr{M}_{signed}\cap\{\nu \in \R^{|\Theta|^m}\,|\, \mathbf{B}\nu = (\nu_1,\cdots,\nu_m)\}$ is to minimize a strictly convex function over this set. A natural candidate might be the (negative) information entropy, as it provides a principled way to make the least biased choice. However, the entropy is not well defined for signed measures. Nevertheless, recall that when the marginals are true probability measures, the product measure is the one that maximizes the information entropy (see Theorem 2.6.6 in \cite{cover1999elements}). From a physical point of view, it represents the most random joint distribution among all distributions with these marginals.  Hence, in our setting, an alternative to the entropy could be a strictly convex function that measures the deviation from $\nu_1 \otimes \cdots \otimes \nu_m$. Therefore, we choose $\nu$ to be the unique solution to the following quadratic constrained optimization problem
\[
\leqnomode
\tag{$P_\nu$} 
\label{optproblem:choicenu}
\min_{\substack{\nu\in \mathscr{M}_{signed} \\ \mathbf{B}\nu = (\nu_1,\cdots,\nu_m)}}\;\vert \nu-\nu_1 \otimes \cdots \otimes \nu_m\vert_2^2.
\]
$\mathscr{M}_{signed}\cap\{\nu \in \R^{|\Theta|^m}\,|\, \mathbf{B}\nu = (\nu_1,\cdots,\nu_m)\}$ is closed. Moreover, $\vert\cdot\vert_2^2$ is continuous and coercive, so the existence of a minimizer follows. Uniqueness comes from the strict convexity of $\vert\cdot\vert_2^2$.

$\nu$ will remain fixed for the rest of the paper. We define $\nu^+,\nu^- \in \R^{|\Theta|^m}$ by $\nu^+=\max(\nu,0_{|\Theta|^m})$ and $\nu^-=\max(-\nu,0_{|\Theta|^m})$. Note that we have $\nu^-\neq 0_{|\Theta|^m}$ because of the presence of arbitrages (Assumption~\ref{assumption1}). Since $\mu \mapsto \W_1(\mu,\nu)$ does not depend on the decomposition of $\nu$ (see Proposition~\ref{prop:dualityww1}), we may add the same positive vector to $\nu^+$ and $\nu^-$ to obtain a positive decomposition of $\nu$. Keeping the same notations for simplicity, we assume from now on that we work with such decomposition.

\subsection{Projection of $\nu$ on martingale measures $\mathscr{M}$ and constrained martingale measures $\mathscr{M}_{\Bar{c}}$}

Now that $\nu$ is defined, we have all the ingredients to formulate the correction of arbitrages as a projection problem. More specifically, we will be interested in:
\[
\leqnomode
\tag{$P$} 
\label{optproblem:projectionnu}
\inf_{\mu\in \mathscr{M}}\;\W_1(\mu,\nu)
\]
and
\[
\leqnomode
\tag{$P_{\Bar{c}}$} 
\label{optproblem:projectionnuwithconstraints}
\inf_{\mu\in \mathscr{M}_{\Bar{c}}}\;\W_1(\mu,\nu)\,.
\]

These infima are finite since $\mathscr{M}$ and $\mathscr{M}_{\Bar{c}}$ are non-empty and $0\leq \W_1(\cdot,\nu)<+\infty$.

\begin{rem}
\label{rem:cohenalgo}
Cohen et al.\ \cite{cohenReisinger} remove arbitrage by seeking the minimal correction $\epsilon$ in the $1$-norm sense, under the constraint that the corrected prices $c+\epsilon$ satisfy a linear system of inequalities, which is equivalent to the absence of arbitrage (see the first paragraph of Section~\ref{subsec:checkingforarbitrages}). Our method differs in two main respects. First, it operates in the space of signed measures, where martingale measures that generate arbitrage-free option prices reside. Second, the output is a martingale measure calibrated to the corrected data that minimizes the $\W_1$-distance to the joint signed measure $\nu$. 
\end{rem}

In the discrete setting,  Definition~\ref{def:wasssigned} reduces to
\[
\W_1(\mu,\nu) = \inf_{\Mbf\in\Gamma(\mu^++\nu^-,\mu^-+\nu^+)}\, \ps{\Mbf}{\mathbf{D}}_F\,,
\]
where $\Gamma(\mu^++\nu^-,\mu^-+\nu^+) = \left\{\Mbf\in\R^{{|\Theta|^m}\times {|\Theta|^m}}_+\,|\, \Mbf\indic_{|\Theta|^m} = \mu^++\nu^-\text{ and }\Mbf^\top\indic_{|\Theta|^m} = \mu^-+\nu^+\right\}$, $\mu^+=\max(\mu,0_{|\Theta|^m})$, $\mu^-=\max(-\mu,0_{|\Theta|^m})$ and $\mathbf{D}$ is the distance matrix defined at the beginning of Section~\ref{projection}. In particular, when $\mu$ is a true probability measure, we have $\Gamma(\mu^++\nu^-,\mu^-+\nu^+)=\Gamma(\mu+\nu^-,\nu^+)$ as $\mu^+ = \mu$ and $\mu^- = 0_{|\Theta|^m}$. We also define two sets of couplings that will play a major role in what follows

\[
\Pi = \bigcup_{\mu\in \mathscr{M}}\,\Gamma(\mu+\nu^-,\nu^+) \text{ and }\Bar{\Pi} = \bigcup_{\mu\in \mathscr{M}_{\Bar{c}}}\,\Gamma(\mu+\nu^-,\nu^+)\,.
\]   

\begin{prop}[\textbf{Equivalent minimization problems}]
\label{prop:equivalentlinearprogram}
The problem~\eqref{optproblem:projectionnu} is equivalent to the linear program
\[
\leqnomode
\tag{$P'$} 
\label{optproblem:linearprogram}
\inf_{\Mbf\in \Pi} \; \ps{\Mbf}{\mathbf{D}}_F\,,
\]
meaning that the infima are equal. The same property holds for~\eqref{optproblem:projectionnuwithconstraints} replacing $\Pi$ by $\Bar{\Pi}$.
\end{prop}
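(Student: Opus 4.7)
The plan is to unfold the definitions of $\W_1$ and the Kantorovich transport cost, and to show that the double infimum in $\mu$ and in the coupling collapses into a single infimum over $\Pi$. The key observation is that when $\mu$ belongs to $\mathscr{M}$, it is by Definition~\ref{def:martingalemeasures} a positive measure, so that one natural positive decomposition is $\mu^+=\mu$ and $\mu^-=0$. Combined with the positive decomposition $(\nu^+,\nu^-)$ of $\nu$ fixed in Section~\ref{subsection:choicenu}, Proposition~\ref{prop:dualityww1} (which guarantees independence of $\W_1$ from the decomposition) allows me to write
\[
\W_1(\mu,\nu) \;=\; W_1(\mu+\nu^-,\,\nu^+).
\]
Here the two arguments are genuine positive measures on $\Theta^m$, and they share the same total mass, since $\nu^+(\Theta^m)-\nu^-(\Theta^m)=\nu(\Theta^m)=1=\mu(\Theta^m)$ forces $(\mu+\nu^-)(\Theta^m)=\nu^+(\Theta^m)$. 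Hence the standard Kantorovich formulation in the discrete setting applies and yields
\[
\W_1(\mu,\nu) \;=\; \inf_{M\in \Gamma(\mu+\nu^-,\nu^+)} \ps{M}{D}_F.
\]

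Next I chain the two infima. Taking the infimum over $\mu\in \mathscr{M}$ on both sides and using $\Pi = \bigcup_{\mu\in \mathscr{M}} \Gamma(\mu+\nu^-,\nu^+)$, I obtain by a standard exchange of infima
\[
\inf_{\mu\in \mathscr{M}} \W_1(\mu,\nu) \;=\; \inf_{\mu\in \mathscr{M}}\ \inf_{M\in \Gamma(\mu+\nu^-,\nu^+)} \ps{M}{D}_F \;=\; \inf_{M\in \Pi} \ps{M}{D}_F.
\]
The two inequalities are easy: given $\mu\in \mathscr{M}$ and any $M\in \Gamma(\mu+\nu^-,\nu^+)$, one has $M\in \Pi$, so $\ps{M}{D}_F \geq \inf_{\Pi}\ps{\cdot}{D}_F$, which after two infima gives ``$\geq$''; conversely, any $M\in \Pi$ belongs to some $\Gamma(\mu+\nu^-,\nu^+)$ with $\mu\in \mathscr{M}$ (in fact $\mu=M\indic_N-\nu^-$ is determined by $M$ and satisfies the constraints defining $\mathscr{M}$), so $\ps{M}{D}_F \geq W_1(\mu+\nu^-,\nu^+) = \W_1(\mu,\nu) \geq \inf_{\mathscr{M}}\W_1(\cdot,\nu)$, which gives ``$\leq$''.

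The statement for~\ref{optproblem:projectionnuwithconstraints} follows verbatim upon replacing $\mathscr{M}$ with $\mathscr{M}_{\Bar{c}}$ and $\Pi$ with $\Bar{\Pi}$: the calibration constraints~\eqref{martdef:calibrationconstraint} only restrict the set of admissible $\mu$'s, and are transparent to the argument above. There is no real obstacle in this proof; the only point requiring care is the mass-matching check ensuring that $W_1$ between positive measures is well defined, together with the invariance of $\W_1$ with respect to the chosen decomposition, both of which are already available from Section~\ref{distsignedmeasures}.
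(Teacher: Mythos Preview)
Your proof is correct and follows essentially the same approach as the paper: both rely on the identity $\W_1(\mu,\nu)=W_1(\mu+\nu^-,\nu^+)$ for $\mu\in\mathscr{M}$ (using positivity of $\mu$ and decomposition invariance) and then compare the double infimum with the infimum over $\Pi=\bigcup_{\mu\in\mathscr{M}}\Gamma(\mu+\nu^-,\nu^+)$. The only cosmetic difference is that the paper establishes each inequality via an $\eta$-approximation argument, whereas you exchange the two infima directly; your version is slightly more streamlined, and your explicit mass-matching check is a nice addition.
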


\begin{proof}
Let $\eta>0$. Since $\mathscr{M}$ is non-empty, there exists $\mu_\eta\in\mathscr{M}$ such that 
\[
\inf_{\mu\in \mathscr{M}}\;\W_1(\mu,\nu)+\eta \geq \W_1(\mu_\eta,\nu)\,.
\]
From standard arguments of optimal transport theory (see \cite{villanitopics,villanioldandnew}), there exists an optimal coupling $\Mbf_\eta\in \Gamma\left(\mu_\eta+\nu^-,\nu^+\right)\subset \Pi$ such that $\W_1(\mu_\eta,\nu)=\ps{\Mbf_\eta}{\mathbf{D}}_F$. Consequently, $ \inf_{\mu\in \mathscr{M}}\;\W_1(\mu,\nu)+\eta\geq \inf_{\Mbf\in \Pi} \; \ps{\Mbf}{\mathbf{D}}_F$ and letting $\eta \rightarrow 0$, we get the first inequality.

Similarly, there exists $\Mbf_\eta\in \Pi$ such that 
\[
\inf_{\Mbf\in \Pi} \; \ps{\Mbf}{\mathbf{D}}_F +\eta \geq \ps{\Mbf_\eta}{\mathbf{D}}_F\,.
\]
Then, define $\mu_\eta = \Mbf_\eta\indic_{|\Theta|^m}-\nu^-$. By definition of $\Pi$, $\mu_\eta\in\mathscr{M}$. Then, using $ \ps{\Mbf_\eta}{\mathbf{D}}_F\geq \W_1(\mu_\eta,\nu)\geq \inf_{\mu\in \mathscr{M}}\;\W_1(\mu,\nu)$ and letting $\eta \rightarrow 0$, we obtain the second inequality.

The adaptation of the proof to~\eqref{optproblem:projectionnuwithconstraints} is just a matter of notation.
\end{proof}

\begin{lemma}[\textbf{Compactness}]
\label{lemma:compactness}
The sets $\Pi$ and $\Bar{\Pi}$ are compact. 
    
\end{lemma}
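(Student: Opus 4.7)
The plan is to argue that $\Pi$ and $\bar\Pi$ are closed and bounded subsets of the finite-dimensional space $\R^{N\times N}$. Both sets live inside $\R^{N\times N}_+$ by construction, so one just needs a uniform bound on the entries and stability under limits of the defining constraints.

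\textbf{Boundedness.} For any $M\in\Pi$, the row sums satisfy $M\indic_N=\mu+\nu^-$ and the column sums satisfy $M^\top\indic_N=\nu^+$. Since $\mu(\Theta^m)=1$ and $\nu$ has unit total mass, we get $\|\nu^+\|_1=\|\nu^-\|_1+1$, hence $\|\mu+\nu^-\|_1=1+\|\nu^-\|_1=\|\nu^+\|_1=\alpha$. Thus every entry $M_{pq}$ lies in $[0,\alpha]$, showing that $\Pi$ is bounded. The argument is identical for $\bar\Pi$.

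\textbf{Closedness.} Let $(M_n)\subset \Pi$ with $M_n\to M$ in $\R^{N\times N}$. Since each entry is non-negative, the limit $M$ is non-negative. The map $X\mapsto X^\top\indic_N$ is continuous, so $M^\top\indic_N=\lim_n M_n^\top\indic_N=\nu^+$. Define $\mu_n=M_n\indic_N-\nu^-\in\mathscr{M}$; by continuity, $\mu_n\to \mu:=M\indic_N-\nu^-$. Because $\mathscr{M}=\{x\in\R^N\mid x\geq 0,\; Ax=b\}$ is the intersection of a closed cone and an affine subspace, it is closed in $\R^N$, so $\mu\in\mathscr{M}$. Hence $M\in \Gamma(\mu+\nu^-,\nu^+)\subset\Pi$. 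The same reasoning applies to $\bar\Pi$, using that $\mathscr{M}_{\bar c}=\{x\in\R^N\mid x\geq 0,\;\bar A x=\bar b\}$ is also closed.

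\textbf{Conclusion.} In the finite-dimensional space $\R^{N\times N}$, closed and bounded implies compact (Heine--Borel), so $\Pi$ and $\bar\Pi$ are compact.

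I do not anticipate a real obstacle: the proof is a direct consequence of the finite-dimensional setting and the fact that $\mathscr{M}$ (resp.\ $\mathscr{M}_{\bar c}$) was explicitly described by finitely many linear equalities and non-strict inequalities in the previous subsection, which makes closedness immediate. The only mild check is the mass bookkeeping that yields the uniform bound $\alpha$ on the entries of $M$.
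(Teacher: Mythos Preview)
Your proof is correct and follows essentially the same strategy as the paper: bound the entries via the fixed total mass $\alpha=\|\nu^+\|_1$, and obtain closedness from the linear/polyhedral description of the constraints (the paper writes $\Pi$ explicitly as an intersection of closed sets, while you use an equivalent sequential argument relying on the closedness of $\mathscr{M}$). There is no substantive difference.
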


\begin{proof}
$\Pi$ is bounded by $|\Theta|^m|\nu^+|_1$ (for the Frobenius norm). Using the characterization of $\mathscr{M}$ with the matrix $\Abf$ and the vector $b$, one can rewrite $\Pi$ as 
\[
\Pi = \left\{\Mbf \in \R^{{|\Theta|^m}\times {|\Theta|^m}}\, | \, \Mbf\indic_{|\Theta|^m}\geq \nu^-,\, \Abf(\Mbf\indic_{|\Theta|^m}) =b+ \Abf\nu^- \text{ and }\Mbf^\top\indic_{|\Theta|^m} = \nu^+\right\}\cap \R^{{|\Theta|^m} \times {|\Theta|^m}}_+
\]
which is easily seen to be closed.

Again, the compactness of $\Bar{\Pi}$ is obtained by adapting notations.
\end{proof}

\begin{thm}[\textbf{Existence of minimizers}]
\label{thm:existenceminimizers}
Both~\eqref{optproblem:projectionnu} and~\eqref{optproblem:linearprogram} are attained. Furthermore, we have the following relationship between minimizers:
\begin{itemize}
    \item[-]$ \mu^* \in \argmin_{\mu\in\mathscr{M}}\, \W_1(\mu,\nu)\implique \exists \Mbf^* \in \Gamma(\mu^*+\nu^-,\nu^+)\cap \argmin_{\Mbf\in\Pi}\, \ps{\Mbf}{\mathbf{D}}_F$,
    \item[-]$ \Mbf^*\in \argmin_{\Mbf\in\Pi}\, \ps{\Mbf}{\mathbf{D}}_F \implique \mu^* = \Mbf^*\indic_{|\Theta|^m} - \nu^- \in \argmin_{\mu\in\mathscr{M}}\, \W_1(\mu,\nu)$.
\end{itemize}
The same result applies to~\eqref{optproblem:projectionnuwithconstraints}, replacing $\Pi$ by $\Bar{\Pi}$ and $\mathscr{M}$ by $\mathscr{M}_{\Bar{c}}$.
\end{thm}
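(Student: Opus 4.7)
The plan is to leverage the equivalence of problems~\ref{optproblem:projectionnu} and~\ref{optproblem:linearprogram} (Proposition~\ref{prop:equivalentlinearprogram}) together with the compactness of $\Pi$ (Lemma~\ref{lemma:compactness}). The idea is: attainment of the linear program is immediate, and one then transports this attainment back to the projection problem via the map $M\mapsto M\indic_N-\nu^-$. The two bullet points are essentially the two directions of the equivalence made quantitative.

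First I would prove attainment of~\ref{optproblem:linearprogram}. The functional $M\mapsto \ps{M}{D}_F$ is continuous (linear in finite dimension), and $\Pi$ is non-empty (since $\mathscr{M}\neq \emptyset$ by Lemma~\ref{lemma:nonemptinessofMandMcbar}, and for any $\mu\in\mathscr{M}$ the classical set $\Gamma(\mu+\nu^-,\nu^+)$ is non-empty because both marginals have the same total mass $\alpha$). By Lemma~\ref{lemma:compactness}, $\Pi$ is compact, hence the infimum is attained at some $M^\ast\in \Pi$.

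Next, I would deduce the second bullet point and attainment of~\ref{optproblem:projectionnu}. Given an optimal $M^\ast$ for~\ref{optproblem:linearprogram}, define $\mu^\ast:=M^\ast\indic_N-\nu^-$. By definition of $\Pi$ there exists $\tilde{\mu}\in\mathscr{M}$ with $M^\ast\in\Gamma(\tilde{\mu}+\nu^-,\nu^+)$, i.e.\ $M^\ast\indic_N=\tilde{\mu}+\nu^-$; thus $\mu^\ast=\tilde{\mu}\in\mathscr{M}$ and $M^\ast\in\Gamma(\mu^\ast+\nu^-,\nu^+)$. Consequently, by Definition~\ref{def:wasssigned},
\[
\W_1(\mu^\ast,\nu)\leq \ps{M^\ast}{D}_F=\inf_{M\in\Pi}\ps{M}{D}_F=\inf_{\mu\in\mathscr{M}}\W_1(\mu,\nu),
\]
where the last equality is Proposition~\ref{prop:equivalentlinearprogram}. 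This shows that $\mu^\ast$ is a minimizer, proving attainment of~\ref{optproblem:projectionnu} and the second implication simultaneously.

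For the first bullet point, I would start from an optimal $\mu^\ast\in\mathscr{M}$ for~\ref{optproblem:projectionnu}. The classical Kantorovich problem on the finite space $\Theta^m$ between the two positive measures $\mu^\ast+\nu^-$ and $\nu^+$ (which share the common mass $1+\norm{\nu^-}{1}=\alpha$) has a minimizer $M^\ast\in\Gamma(\mu^\ast+\nu^-,\nu^+)$ by compactness and continuity, and by definition $\ps{M^\ast}{D}_F=W_1(\mu^\ast+\nu^-,\nu^+)=\W_1(\mu^\ast,\nu)$. Since $M^\ast\in\Pi$ by construction, and using Proposition~\ref{prop:equivalentlinearprogram} once more,
\[
\ps{M^\ast}{D}_F=\W_1(\mu^\ast,\nu)=\inf_{\mu\in\mathscr{M}}\W_1(\mu,\nu)=\inf_{M\in\Pi}\ps{M}{D}_F,
\]
so $M^\ast$ is optimal for~\ref{optproblem:linearprogram}. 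The proof for~\ref{optproblem:projectionnuwithconstraints} and $\Bar{\Pi}$ is identical, replacing $\mathscr{M}$ by $\mathscr{M}_{\Bar{c}}$ throughout and invoking the second half of Lemma~\ref{lemma:nonemptinessofMandMcbar} and Lemma~\ref{lemma:compactness}. There is no real obstacle here: the only point requiring mild care is verifying that the measure constructed from an optimal $M^\ast$ genuinely lies in $\mathscr{M}$ (resp.\ $\mathscr{M}_{\Bar{c}}$), which follows directly from the definition of $\Pi$ (resp.\ $\Bar{\Pi}$) as a union indexed by $\mathscr{M}$ (resp.\ $\mathscr{M}_{\Bar{c}}$).
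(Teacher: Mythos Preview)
Your proof is correct and follows essentially the same approach as the paper's: attainment of~\ref{optproblem:linearprogram} via compactness of $\Pi$ and continuity of the linear cost, then transferring optimality back and forth using Proposition~\ref{prop:equivalentlinearprogram} and the existence of optimal couplings in the classical Kantorovich problem. The only difference is cosmetic: you spell out explicitly why $\Pi$ is non-empty and why the marginals share mass $\alpha$, whereas the paper leaves these implicit.
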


\begin{proof}
From Lemma~\ref{lemma:compactness} and the continuity of $\Mbf\mapsto \ps{\Mbf}{\mathbf{D}}_F$, there exists $\Mbf^*\in \Pi$ such that $ \inf_{\Mbf\in \Pi} \; \ps{\Mbf}{\mathbf{D}}_F = \ps{\Mbf^*}{\mathbf{D}}_F$. Then, if we denote $\mu^* = \Mbf^*\indic_{|\Theta|^m} - \nu^-$, we have $\mu^*\in \mathscr{M}$ and $\Mbf^*\in \Gamma\left(\mu^*+\nu^-,\nu^+\right)$. From the definition of $\W_1$, we get $\ps{\Mbf^*}{\mathbf{D}}_F\geq\W_1(\mu^*,\nu)$. Using Proposition~\ref{prop:equivalentlinearprogram}, we obtain $\mu^*\in\argmin_{\mu\in\mathscr{M}}\, \W_1(\mu,\nu)$. We are left with the proof of the first implication. Let $\mu^*\in\argmin_{\mu\in\mathscr{M}}\, \W_1(\mu,\nu)$. Again, from standard arguments of the theory of optimal transport, there exists $\Mbf^*\in\Gamma\left(\mu^*+\nu^-,\nu^+\right)$ such that $ \ps{\Mbf^*}{\mathbf{D}}_F=\W_1(\mu^*,\nu)=\inf_{\Mbf\in \Pi} \; \ps{\Mbf}{\mathbf{D}}_F$, by Proposition~\ref{prop:equivalentlinearprogram}.
\end{proof}

\section{Entropic regularization}
\label{regularization}

The linear program~\eqref{optproblem:linearprogram} can quickly become intractable, as it involves optimizing over $|\Theta|^{2m}$ variables. This difficulty, shared by optimal transport problems in high dimension, is commonly addressed via entropic regularization. Besides turning the originally linear program into a strictly convex problem, adding an entropic penalty enables the use of simple and fast iterative algorithms—often substantially faster than generic LP solvers—as popularized by Cuturi~\cite{cuturi2013sinkhorn}. Moreover, in our financial use cases (see Section~\ref{section:numericalrslt}), the regularization also smooths the corrected data, which can facilitate the calibration of a pricing model to this data. The purpose of this section is to present the regularization of~\eqref{optproblem:linearprogram} and its properties. Note that all results extend without effort when replacing $\Pi$ by $\Bar{\Pi}$, and most definitions follow~\cite{peyre2019computational}. Section~\ref{algorithm} then presents a Sinkhorn‑type algorithm to solve the regularized problem.

\subsection{Entropic projection of $\nu$ on martingale measures $\mathscr{M}$}

\begin{defn}[\textbf{Entropy of a matrix}]
The entropy of $\Mbf\in \R^{{|\Theta|^m}\times {|\Theta|^m}}_+$, is defined by
\[
\mathrm{H}(\Mbf)=-\sum_{1\leq p,q \leq {|\Theta|^m}}\, \Mbf_{pq}\left(\log\left(\Mbf_{pq}\right)-1\right)\,,
\]
with the convention $0\log0 = 0$.
\end{defn}

Let $\varepsilon>0$, the $\varepsilon$-regularized version of~\eqref{optproblem:linearprogram} is
\[
\leqnomode
\tag{$P_{\varepsilon}$}  
\label{optproblem:projectionnuregularized}
\inf_{\Mbf\in \Pi}\, \ps{\Mbf}{\mathbf{D}}_F - \varepsilon\mathrm{H}(\Mbf).
\]
The function $\Mbf\mapsto \ps{\Mbf}{\mathbf{D}}_F-\varepsilon\mathrm{H}(\Mbf)$ is strictly convex because its Hessian is symmetric positive-definite (diagonal with positive entries given by $\left(\frac{\varepsilon}{\Mbf_{pq}}\right)_{1\leq p,q \leq {|\Theta|^m}}$). Hence, from the compactness of $\Pi$, there exists a unique solution $\Mbf_\varepsilon$ to~\eqref{optproblem:projectionnuregularized}.

\begin{prop}[\textbf{Convergence of the entropic regularization}]
\label{prop:convergence_entropic_scheme}
If we denote by $\Pi_0^*$ the set $\argmin_{\Mbf\in\Pi}\; \ps{\Mbf}{\mathbf{D}}_F$, then we have
\[
\left\{\begin{array}{l}
       \inf_{\Mbf\in \Pi}\, \ps{\Mbf}{\mathbf{D}}_F - \varepsilon\mathrm{H}(\Mbf) \arrtozero{\varepsilon} \inf_{\Mbf\in \Pi}\, \ps{\Mbf}{\mathbf{D}}_F\\\\
      \Mbf_\varepsilon \arrtozero{\varepsilon} \Mbf^*=\argmax_{\Mbf\in\Pi_0^*}\; \mathrm{H}(\Mbf)\\\\
      \mu_\varepsilon = \Mbf_\varepsilon\indic_{|\Theta|^m} -\nu^- \arrtozero{\varepsilon} \mu^*\in\argmin_{\mu\in\mathscr{M}}\; \W_1(\mu,\nu)
\end{array}\right.\,.
\]
\end{prop}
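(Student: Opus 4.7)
The approach is a classical $\Gamma$-convergence argument built on the compactness of $\Pi$ (Lemma~\ref{lemma:compactness}) and on the fact that $\mathrm{H}$ is continuous, hence bounded, on this compact set, while $M\mapsto \ps{M}{D}_F - \varepsilon \mathrm{H}(M)$ is strictly convex, giving uniqueness of $M_\varepsilon$ and, ultimately, of the limit $M^*$.

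First I would prove convergence of the infima. Pick any $M_0^*\in\Pi_0^*$, which exists by Theorem~\ref{thm:existenceminimizers}. The optimality of $M_\varepsilon$ for the penalized problem yields
$$\ps{M_\varepsilon}{D}_F - \varepsilon \mathrm{H}(M_\varepsilon) \leq \ps{M_0^*}{D}_F - \varepsilon \mathrm{H}(M_0^*),$$
and, combined with the trivial bound $\ps{M_0^*}{D}_F \leq \ps{M_\varepsilon}{D}_F$, this gives
$$0 \leq \ps{M_\varepsilon}{D}_F - \ps{M_0^*}{D}_F \leq \varepsilon\bigl(\mathrm{H}(M_\varepsilon) - \mathrm{H}(M_0^*)\bigr).$$
Since $\mathrm{H}$ is bounded on $\Pi$, the right-hand side vanishes as $\varepsilon\to 0$, yielding both $\ps{M_\varepsilon}{D}_F\to \min_\Pi\ps{\cdot}{D}_F$ and the first line of the claim.

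Second, to show $M_\varepsilon \to M^*$, I would use a subsequence-and-identification argument. By compactness, every sequence $(M_{\varepsilon_n})$ with $\varepsilon_n\to 0$ admits a subsequential limit $\tilde M\in\Pi$. Step one gives $\ps{\tilde M}{D}_F=\min_\Pi\ps{\cdot}{D}_F$, so $\tilde M\in\Pi_0^*$. Moreover, for any fixed $M\in\Pi_0^*$, the optimality of $M_{\varepsilon_n}$ rewrites as
$$\mathrm{H}(M_{\varepsilon_n}) \geq \mathrm{H}(M) + \varepsilon_n^{-1}\bigl(\ps{M_{\varepsilon_n}}{D}_F - \ps{M}{D}_F\bigr) \geq \mathrm{H}(M),$$
the second inequality using that $\ps{M}{D}_F=\min_\Pi\ps{\cdot}{D}_F$. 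Passing to the limit along the subsequence and using continuity of $\mathrm{H}$ yields $\mathrm{H}(\tilde M)\geq \mathrm{H}(M)$ for every $M\in\Pi_0^*$. Now $\Pi_0^*$ is convex (argmin of a linear functional over the convex set $\Pi$) and compact, while $\mathrm{H}$ is strictly concave on $\R_+^{N\times N}$ (since $x\mapsto x(\log x-1)$ is strictly convex on $[0,+\infty[$, with the convention $0\log 0 = 0$). Hence the maximizer $M^*$ is unique, whence $\tilde M = M^*$. Every subsequential limit being $M^*$ then upgrades to the full convergence $M_\varepsilon\to M^*$.

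Finally, the convergence $\mu_\varepsilon\to\mu^*$ follows by composing with the continuous linear map $M\mapsto M\indic_N-\nu^-$; setting $\mu^*:= M^*\indic_N-\nu^-$, the second implication of Theorem~\ref{thm:existenceminimizers} directly gives $\mu^*\in\myargmin{\mu}{\mathscr{M}}\W_1(\mu,\nu)$. The main obstacle is the uniqueness of $M^*$, needed to promote subsequential convergence to full convergence; this rests on strict concavity of $\mathrm{H}$ together with convexity and compactness of $\Pi_0^*$. The remainder is routine sandwich-type $\Gamma$-convergence, and the argument transposes verbatim to~\ref{optproblem:projectionnuwithconstraints} by substituting $\Bar\Pi$ and $\mathscr{M}_{\Bar c}$.
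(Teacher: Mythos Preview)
Your proof is correct and follows essentially the same route as the paper's: the sandwich inequality from optimality of $M_\varepsilon$, compactness of $\Pi$ for subsequential limits, and strict concavity of $\mathrm{H}$ on the convex compact set $\Pi_0^*$ to force uniqueness of the limit. The only cosmetic difference is that the paper fixes $M^*$ (the entropy maximizer on $\Pi_0^*$) from the outset in the sandwich inequality, whereas you first use an arbitrary $M_0^*\in\Pi_0^*$ together with boundedness of $\mathrm{H}$ to get cost convergence, and then separately derive $\mathrm{H}(M_{\varepsilon_n})\geq\mathrm{H}(M)$ for all $M\in\Pi_0^*$; both orderings work.
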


\begin{proof}
First, $\Pi_0^*$ is a closed subset of $\Pi$, by continuity of $\ps{\cdot}{\mathbf{D}}_F$. Consequently, $\Pi_0^*$ is also compact. The entropy being continuous and strictly concave, there exists a unique $\Mbf^*\in \Pi_0^*$ such that $\mathrm{H}(\Mbf^*)=\max_{\Mbf\in\Pi_0^*}\; \mathrm{H}(\Mbf)$.

Let $\suite{\varepsilon}{n}$ be a sequence of positive real numbers converging to $0$. Denote by $\Mbf_n \in \Pi$ the unique solution of $P_{\varepsilon_n}$. From the optimality of $\Mbf_n$, and by definition of $\Mbf^*$, one has:
\begin{equation}
\label{eq:proofepsilontozero}
0\leq \ps{\Mbf_n}{\mathbf{D}}_F-\ps{\Mbf^*}{\mathbf{D}}_F\leq \varepsilon_n\left(\mathrm{H}(\Mbf_n)-\mathrm{H}(\Mbf^*)\right)\,.
\end{equation}
It follows that $\Mbf_n \in \left\{\Mbf\in \Pi\,|\, \mathrm{H}(\Mbf^*)\leq \mathrm{H}(\Mbf)\right\}$. The latter set is closed by continuity of the entropy, hence compact (as a subset of $\Pi$). Thus, up to a subsequence, $\suite{\Mbf}{n}$ converges to some cluster point $\Mbf_\infty\in \Pi$ and $\mathrm{H}(\Mbf_\infty)\geq \mathrm{H}(\Mbf^*)$. Passing to the limit in~\eqref{eq:proofepsilontozero}, one obtains $\ps{\Mbf_\infty}{\mathbf{D}}_F=\ps{\Mbf^*}{\mathbf{D}}_F$. As a consequence, $\Mbf_\infty\in \Pi_0^*$ and, by definition of $\Mbf^*$, $\Mbf_\infty=\Mbf^*$. The cluster point being unique, the whole sequence $\suite{\Mbf}{n}$ converges to $\Mbf^*$.

Then, we immediately get
\[
\inf_{\Mbf\in\Pi}\, \ps{\Mbf}{\mathbf{D}}_F - \varepsilon_n\mathrm{H}(\Mbf) = \ps{\Mbf_n}{\mathbf{D}}_F-\varepsilon_n\mathrm{H}(\Mbf_n)\arrtoinf{n} \ps{\Mbf^*}{\mathbf{D}}_F
=\inf_{\Mbf\in\Pi}\, \ps{\Mbf}{\mathbf{D}}_F.
\]
Finally, by continuity of the projection, one has $\mu_n = \Mbf_n\indic_{|\Theta|^m}-\nu^- \arrtoinf{n} \Mbf^*\indic_{|\Theta|^m}-\nu^- = \mu^*$. Since $\Mbf^*\in \Pi_0^*$, we have $\mu^*\in \argmin_{\mu\in\mathscr{M}}\; \W_1(\mu,\nu)$ by Theorem~\ref{thm:existenceminimizers}.

\end{proof}

\begin{rem}
We believe one cannot conclude that $\mu^*$ maximizes the entropy on the set $\argmin_{\mu\in\mathscr{M}}\; \W_1(\mu,\nu)$. Indeed, it is straightforward to build discrete couplings for which the order of the entropy is reversed after projection onto the first dimension. If one considers $\pi_1=\frac{1}{2}\delta_{(-1,0)}+\frac{1}{2}\delta_{(1,0)}$ and $\pi_2=\sum_{k=1}^3\frac{1}{3}\delta_{(0,k)}$, with respective first marginals $\mu_1 = \frac{1}{2}\delta_{-1}+\frac{1}{2}\delta_1$ and $\mu_2=\delta_0$, then one can verify that $\mathrm{H}(\pi_1)=1+\log(2)<\mathrm{H}(\pi_2)=1+\log(3)$, while $\mathrm{H}(\mu_1)=1+\log(2)>1=\mathrm{H}(\mu_2)$. Of course, this simple example does not fit our framework. However, we were able to numerically emphasize  this inversion of entropy in our setting (though we were not able to find a general counterexample).
\end{rem}

\begin{defn}[\textbf{Kullback-Leibler divergence}]
Let $\Mbf$ and $\Gbf$ be two squared matrices of size ${|\Theta|^m}$ such that $\Gbf$ has positive entries. The relative entropy of $\Mbf$ with respect to $\Gbf$ is defined by
\[
\mathrm{KL}(\Mbf\vert \Gbf) = \left\{\begin{array}{l}
     \sum_{1\leq p,q\leq {|\Theta|^m}}\, \Mbf_{pq}\log\left(\frac{\Mbf_{pq}}{\Gbf_{pq}}\right)-\Mbf_{pq}+\Gbf_{pq} \text{ if } \Mbf\in \R^{{|\Theta|^m}\times {|\Theta|^m}}_+ \\\\
     +\infty \text{ otherwise} 
\end{array}\right.,
\]
with the convention $0\log\left(0\right)=0$.
\end{defn}

Note that for $\Mbf\in\R^{{|\Theta|^m}\times {|\Theta|^m}}_+$, we have $ \ps{\Mbf}{\mathbf{D}}_F-\varepsilon \mathrm{H}(\Mbf) = \varepsilon\left(\mathrm{KL}(\Mbf\vert \Gbf)-\sum_{p,q}\Gbf_{pq}\right)$, with $\Gbf_{pq}=e^{-\frac{\mathbf{D}_{pq}}{\varepsilon}}>0$. Thus,~\eqref{optproblem:projectionnuregularized} is equivalent, up to a constant, to
\[
\leqnomode
\tag{$P^{\mathrm{KL}}_{\varepsilon}$}
\label{optproblem:KLprojnu}
\inf_{\Mbf\in\Pi}\; \varepsilon\mathrm{KL}(\Mbf\vert \Gbf).
\]
By equivalent here, we mean different costs but same unique minimizer $\Mbf_\varepsilon$.

\subsection{Duality formula for~\eqref{optproblem:KLprojnu}}

In the following, we use several concepts of convex analysis (precise definitions can be found in Appendix~\ref{appendix:convex_analysis}). For a set $A\subset \R^N$, $\mathrm{ri}(A)$ and $\iota_A$ denote the relative interior and the indicator function of $A$. For a function $f:\R^N \rightarrow \left]-\infty,+\infty\right]$, $\mathrm{dom}(f)$, $f^*$, $\nabla f$, and $\partial f$ are its domain, convex conjugate, gradient, and subdifferential, respectively. The term proper refers to a function that is not identically equal to $+\infty$.

Let $R=4+ ({|\Theta|^m}-|\Theta|)/(|\Theta|-1)$. Recall from Definition~\ref{def:martingalemeasures} that $2+({|\Theta|^m}-|\Theta|)/(|\Theta|-1)=R-2$ is the number of equality constraints that an element of $\mathscr{M}$ must satisfy. It is precisely the number of rows of $\Abf$. We introduce the closed convex sets $(C_r)_{1\leq r \leq R}$ defined by
\[
C_r =\left\{\begin{array}{l}
 \left\{x\in \R^{|\Theta|^m}\, |\, \ps{x}{\Abf_r} = \left(b+\Abf\nu^-\right)_r\right\}\text{ if }1\leq r\leq R-2\\\\
\left\{x\in \R^{|\Theta|^m}\, |\, x\geq \nu^-\right\}\text{ if }r=R-1\\\\
\left\{\nu^+\right\}\text{ if }r=R
\end{array}\right.\,,
\]
where $\Abf_r \in \R^{|\Theta|^m}$ is the $r$-th row of the matrix $\Abf$ and $\left(b+\Abf\nu^-\right)_r$ denotes the $r$-th coefficient of the vector $b+\Abf\nu^-$. We have
{
\small
\[
\Mbf \in \Pi \equivalent \left\{\begin{array}{l}
     \exists \mu \in \mathscr{M},\text{ such that } \Mbf\indic_{|\Theta|^m} = \mu + \nu^- \\\\
     \Mbf^\top\indic_{|\Theta|^m} = \nu^+\\\\
     \Mbf \in \R^{{|\Theta|^m} \times {|\Theta|^m}}_+
\end{array}\right. \equivalent \left\{\begin{array}{l}
     \Abf\left(\Mbf\indic_{|\Theta|^m} - \nu^-\right)=b \text{ and } \Mbf\indic_{|\Theta|^m}\geq \nu^-\\\\
     \Mbf^\top\indic_{|\Theta|^m} = \nu^+\\\\
     \Mbf \in \R^{{|\Theta|^m} \times {|\Theta|^m}}_+
\end{array}\right.\,,
\]
}as $\mu \in \mathscr{M} \equivalent \mu \in \{x\in \R^{|\Theta|^m}\,|\, x\geq 0_{|\Theta|^m}, \Abf x = b\}$. Hence, using the characteristic functions $\iota_{C_r}$ of the sets $C_r$, one has $ \iota_\Pi(\Mbf)=\sum_{r=1}^{R-1}\iota_{C_r}(\Mbf\indic_{|\Theta|^m})+\iota_{C_R}(\Mbf^\top\indic_{|\Theta|^m})+\iota_{\R^{{|\Theta|^m} \times {|\Theta|^m}}_+}(\Mbf)$, and it is straightforward that
\begin{equation}
\label{eq:KLprojnuwithindicator}
(P^{\mathrm{KL}}_{\varepsilon})
=
\inf_{\Mbf\in\R^{{|\Theta|^m} \times {|\Theta|^m}}}\; \sum_{r=1}^{R-1}\iota_{C_r}(\Mbf\indic_{|\Theta|^m})+\iota_{C_R}(\Mbf^\top\indic_{|\Theta|^m})+\varepsilon\mathrm{KL}(\Mbf\vert \Gbf)\,,
\end{equation}
the nonnegativity constraint on $\Mbf$ being already contained in the definition of $\mathrm{KL}(\cdot\,|\Gbf)$, which implies $\mathrm{KL}(\cdot\,|\Gbf) = \infty$ if one of the entries of $\Mbf$ is negative. 

Below, we prove a strong duality result for the problem~\eqref{optproblem:KLprojnu} under the form~\eqref{eq:KLprojnuwithindicator}. We recall that $\Mbf_\varepsilon$ is the unique solution of~\eqref{optproblem:KLprojnu}.
In the following section, we will provide an iterative algorithm based on alternating maximization to solve the related dual problem.

\begin{thm}[\textbf{Strong duality}]
\label{thm:dualityKLprojnu}
The dual problem of~\eqref{optproblem:KLprojnu} is
\[
\leqnomode
\tag{$D^{\mathrm{KL}}_{\varepsilon}$}
\label{optproblem:dualKLprojnu}
\sup_{u:=(u^1,\cdots,u^R)\in\left(\R^{|\Theta|^m}\right)^{R}}\; -\sum_{r=1}^{R} \iota_{C_r}^*\left(-u^r\right) -\varepsilon\ps{e^{\oplus u/\varepsilon}-\indic_{{|\Theta|^m}\times {|\Theta|^m}}}{\Gbf}_F\,,
\]
where $\oplus : u=(u^1,\cdots,u^{R})\mapsto \left(\sum_{r=1}^{R-1} u^r_p + u^{R}_q\right)_{1\leq p,q \leq {|\Theta|^m}}$. Strong duality holds, i.e., \eqref{optproblem:KLprojnu}$=$\eqref{optproblem:dualKLprojnu}. 
Moreover, $u=(u^1,\cdots,u^{R})$ is a maximizer of~\eqref{optproblem:dualKLprojnu} if and only if the following holds
\begin{align}
    &\forall r\in \llbracket1,R-1\rrbracket,\,\forall x\in C_r,\; \ps{u^r}{x-\Mbf_\varepsilon\indic_{|\Theta|^m}} \geq 0\,,\label{optimalityconditions1}\\\nonumber\\
     &(\Mbf_\varepsilon)_{pq} = \exp\left(\frac{1}{\varepsilon}\sum_{r=1}^{R-1}u^r_p\right)\,\Gbf_{pq}\,\exp\left(\frac{1}{\varepsilon}u^{R}_q\right)\,.\label{optimalityconditions2}
\end{align}
\end{thm}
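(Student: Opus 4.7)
My plan is to frame~\ref{optproblem:KLprojnuwithindicator} as a standard Fenchel--Rockafellar problem of the form $\min_M f(M) + g(LM)$, with $f = \varepsilon\,\mathrm{KL}(\cdot\vert G)$, the separable indicator $g(y^1,\dots,y^R) = \sum_{r=1}^R \iota_{C_r}(y^r)$, and the linear operator $L:\R^{N\times N}\to(\R^N)^R$ defined by $(LM)_r = M\indic_N$ for $r\leq R-1$ and $(LM)_R = M^\top \indic_N$. A one-line computation then gives $(L^* u)_{pq} = \sum_{r=1}^{R-1} u^r_p + u^R_q = (\oplus u)_{pq}$, which identifies the map $\oplus$ appearing in the statement as the adjoint of $L$.

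The dual formula should follow from the standard identity $\inf_M [f(M) + g(LM)] = \sup_u [-f^*(L^* u) - g^*(-u)]$. Separability of $g$ gives $g^*(-u) = \sum_r \iota_{C_r}^*(-u^r)$, while a routine entrywise calculation (using $G > 0$) yields $f^*(X) = \varepsilon\,\langle e^{X/\varepsilon} - \indic_{N \times N}, G\rangle_F$. Substituting produces exactly~\ref{optproblem:dualKLprojnu}. To establish strong duality and dual attainment I will invoke the polyhedral refinement of Fenchel--Rockafellar used in \cite{chizat2018scaling}: since each $C_r$ is polyhedral, it suffices to exhibit one matrix $M \in \mathrm{ri}(\mathrm{dom}(f)) = \R^{N \times N}_{++}$ with $LM \in \prod_r C_r$ (i.e.\ $M \in \Pi$ with strictly positive entries). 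Using Section~\ref{subsection:choicenu} we may assume $\nu^+$ and $\nu^-$ are strictly positive; then for any $\mu \in \mathscr{M}$ (non-empty by Lemma~\ref{lemma:nonemptinessofMandMcbar}) the rank-one matrix $M = (\mu + \nu^-)(\nu^+)^\top/\alpha$ has strictly positive entries and satisfies $M\indic_N = \mu + \nu^-$, $M^\top \indic_N = \nu^+$, which is the required interior point.

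The characterization of dual optima will then come from the Fenchel--Young extremality relations $L^* u \in \partial f(M_\varepsilon)$ and $-u \in \partial g(L M_\varepsilon)$. A preliminary observation, needed both for existence arguments and for the gradient formula, is that $M_\varepsilon$ has strictly positive entries: the partial derivatives of $\mathrm{KL}(\cdot\vert G)$ diverge to $-\infty$ at zero entries wherever $G > 0$, so any finite minimizer must avoid them. Hence $f$ is smooth at $M_\varepsilon$ with $\nabla f(M_\varepsilon)_{pq} = \varepsilon\log(M_{\varepsilon,pq}/G_{pq})$, and the first inclusion rearranges directly into~\eqref{optimalityconditions2}. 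The second inclusion decouples by separability of $g$ into $-u^r \in \partial \iota_{C_r}(L_r M_\varepsilon)$ for each $r$, i.e.\ $\langle u^r, x - L_r M_\varepsilon\rangle \geq 0$ for all $x \in C_r$; for $r \leq R-1$ this is exactly~\eqref{optimalityconditions1}, while for $r=R$ the inclusion is vacuous since $C_R = \{\nu^+\}$ is a singleton (its subdifferential at $\nu^+$ is all of $\R^N$), which is precisely why the statement lists conditions only up to $r = R-1$.

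The main obstacle I foresee is the careful identification of which Fenchel--Rockafellar qualification actually applies here: a naive Slater condition would demand a strictly positive martingale $\mu \in \mathscr{M}$, which typically does not exist on $\Theta^m$ (boundary strikes like $k=0$ or $k=k_{\mathrm{max}}$ force absorbing behaviour). Exploiting the polyhedrality of the $C_r$ circumvents this, and the freedom to shift $\nu^+,\nu^-$ by a common positive vector (granted in Section~\ref{subsection:choicenu}) provides the strictly positive interior point we need. Once this subtlety is handled, everything else reduces to standard subdifferential bookkeeping.
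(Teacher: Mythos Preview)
Your argument is correct, and the computations (adjoint $L^*=\oplus$, conjugate of $\varepsilon\mathrm{KL}(\cdot|G)$, separability of $g^*$, the Fenchel--Young extremality relations, and the rank-one interior point $M=(\mu+\nu^-)(\nu^+)^\top/\alpha$) all check out. Your observation that the boundary strikes $0$ and $k_{\mathrm{max}}$ are absorbing, so that no \emph{strictly positive} martingale $\mu\in\mathscr{M}$ exists, is also right and nicely motivates the polyhedral refinement.

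It is worth noting, though, that the paper sidesteps this difficulty entirely by applying Fenchel--Rockafellar in the opposite direction. Instead of taking $f=\varepsilon\mathrm{KL}(\cdot|G)$ and $g=\sum_r\iota_{C_r}$ with linear map $L$ (so that the qualification falls on the polyhedral indicator), the paper sets $f(u)=\sum_r\iota_{C_r}^*(u^r)$ and $g(M)=\varepsilon\langle e^{M/\varepsilon}-\indic_{N\times N},G\rangle_F$ with linear map $T=\oplus$, so that the dual~\ref{optproblem:dualKLprojnu} is the ``$\sup_x[-f(-x)-g(Tx)]$'' side and the primal~\ref{optproblem:KLprojnuwithindicator} is the ``$\min_y[f^*(T^\dag y)+g^*(y)]$'' side (using $f^{**}=f$ and $g^*=\varepsilon\mathrm{KL}(\cdot|G)$). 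The qualification in Theorem~\ref{thm:fenchel-rockafellar} then only asks for continuity of $g$ at some point $\oplus u_0$ with $u_0\in\mathrm{dom}(f)$, which is immediate since $g$ is smooth everywhere. No polyhedral refinement, no interior point in $\Pi$, and no a~priori positivity of $M_\varepsilon$ are needed: the optimality condition $M_\varepsilon=\nabla g(\oplus u)$ delivers~\eqref{optimalityconditions2} directly, with $M_\varepsilon>0$ as a by-product rather than a hypothesis. The trade-off is that the paper's orientation only yields attainment on the \emph{primal} side (already known from compactness), so dual attainment must be established separately in Proposition~\ref{prop:dualKLprojnuattained} via Kuhn--Tucker --- whereas your route gets dual attainment for free, effectively absorbing that proposition into the same argument.
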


\begin{proof}
$\oplus : \left(\R^{|\Theta|^m}\right)^{R}\rightarrow \R^{{|\Theta|^m}\times {|\Theta|^m}}$ is a linear map and its adjoint is
\[\oplus^\dag : \Mbf \mapsto \left(\underbrace{\Mbf\indic_{|\Theta|^m},\cdots,\Mbf\indic_{|\Theta|^m}}_{R-1\text{ times}},\Mbf^\top\indic_{|\Theta|^m}\right) \in \left(\R^{|\Theta|^m}\right)^{R}\,.
\]
We define $f:\left(\R^{|\Theta|^m}\right)^{R}\rightarrow]-\infty,+\infty]$ and $g:\R^{{|\Theta|^m}\times {|\Theta|^m}}\rightarrow]-\infty,+\infty]$ by
\[
\left\{\begin{array}{l}
     f(u)=\sum_{r=1}^{R}\iota_{C_r}^*(u^r) \\\\
     g(\Mbf)=\varepsilon\ps{e^{\Mbf/\varepsilon}-\indic_{{|\Theta|^m}\times {|\Theta|^m}}}{\Gbf}_F
\end{array}\right.\,.
\]
Each $\iota_{C_r}$ being proper, lower semicontinuous and convex, $f$ is itself proper, lower semicontinuous and convex by Proposition~\ref{prop:fproperlsccvxthensameforcvxconjugate}. A similar statement holds for $g$, which is even continuously differentiable and strictly convex. By Theorem~\ref{thm:f**=f}, and using the definition of the convex conjugate~\ref{appendix:defn:convex_conjugate}, one obtains $f^*:v\in \left(\R^{|\Theta|^m}\right)^{R} \mapsto \sum_{r=1}^{R}\iota_{C_r}(v^r)$ and $g^*=\varepsilon\mathrm{KL}(\cdot\,|\Gbf)$. Since $g\circ \oplus$ is everywhere continuous and $\mathrm{dom}(f)\neq \emptyset$, strong duality follows from Theorem~\ref{thm:fenchel-rockafellar}. 

Then, again by Theorem~\ref{thm:fenchel-rockafellar}, $u=(u^1,\cdots,u^{R})$ maximizes~\eqref{optproblem:dualKLprojnu} if and only if $-u\in \partial f^*(\oplus^\dag \Mbf_\varepsilon)$ and $\Mbf_\varepsilon \in \partial g(\oplus u)$. The function $g$ being continuously differentiable, the latter condition can be expressed as $\Mbf_\varepsilon = \nabla g(\oplus u)$ and leads to~\eqref{optimalityconditions2}. To obtain~\eqref{optimalityconditions1}, recall that $\Mbf_\varepsilon\in \Pi$. As a consequence, $\iota_{C_r}\left(\Mbf_\varepsilon\indic_{|\Theta|^m}\right)=0$ and $\iota_{C_{R}}\left(\Mbf_\varepsilon^\top \indic_{|\Theta|^m}\right)=0$ (or equivalently $\Mbf_\varepsilon^\top \indic_{|\Theta|^m}=\nu^+$). From this remark and using the definition of the subdifferential~\ref{appendix:defn:subdifferential}, we have
\begin{align*}
-u\in \partial f^*(\oplus^\dag \Mbf_\varepsilon) &\equivalent \forall v \in \left(\R^{|\Theta|^m}\right)^{R},\,f^*(v)+\sum_{r=1}^{R-1}\ps{u^r}{v^r-\Mbf_\varepsilon\indic_{|\Theta|^m}}+\ps{u^{R}}{v^{R}-\nu^+}\geq0\,,\\\\
&\equivalent \forall \left(v^1,\cdots,v^R\right)\in \prod_{r=1}^{R}C_r,\, \sum_{r=1}^{R-1}\ps{u^r}{v^r-\Mbf_\varepsilon\indic_{|\Theta|^m}}+\ps{u^{R}}{v^{R}-\nu^+}\geq0\,,
\end{align*}
since $f^*(v)=+\infty$ whenever $v^r\notin C_r$ for some $r$. As $v^{R}\in C_{R}$ means $v^{R}=\nu^+$, $\ps{u^{R}}{v^{R}-\nu^+}$ is zero. Subsequently, we have $\Mbf_\varepsilon\indic_{|\Theta|^m} \in C_r$, so we can fix $v^r=\Mbf_\varepsilon\indic_{|\Theta|^m}$ for all $r$ except one to conclude that $-u\in \partial f^*(\oplus^\dag \Mbf_\varepsilon)$ implies~\eqref{optimalityconditions1}. The converse is immediate by summation and the above equivalences.
\end{proof}

\begin{prop}
\label{prop:dualKLprojnuattained}
The dual problem~\eqref{optproblem:dualKLprojnu} is attained.
\end{prop}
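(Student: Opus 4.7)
The plan is to build an explicit dual maximizer from the Lagrange multipliers associated with the unique primal solution $M_\varepsilon$. The key preliminary step is to show that $M_\varepsilon$ has strictly positive entries: otherwise any candidate produced from the KKT conditions would fail to satisfy the exponential identity \eqref{optimalityconditions2} with finite values of $u$. For the strict positivity, we exhibit an element of $\Pi$ with all positive entries by relying on the chosen positive decomposition $\nu^+,\nu^->0$ (from Section~\ref{subsection:choicenu}) and on $\mathscr{M}\neq\emptyset$ (Lemma~\ref{lemma:nonemptinessofMandMcbar}): for any $\mu\in\mathscr{M}$, the rank-one matrix $M_{pq}=(\mu_p+\nu^-_p)\,\nu^+_q/\alpha$ is strictly positive and lies in $\Pi$ (one checks $M\indic_N=\mu+\nu^-$ and $M^\top\indic_N=\nu^+$ using $\|\nu^+\|_1=\alpha=1+\|\nu^-\|_1$). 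Combined with the fact that the gradient of $\mathrm{KL}(\cdot\,|\,G)$ blows up at the boundary of the positive orthant, this interior point forces $M_\varepsilon>0$.

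Next, the primal problem $(P^{\mathrm{KL}}_\varepsilon)$ is the minimization of a strictly convex, smooth functional under purely affine constraints. No further constraint qualification is required, so at $M_\varepsilon$ there exist Lagrange multipliers $\lambda\in\R^L$ for $A(M\indic_N)=b+A\nu^-$, $\eta\in\R^N_+$ for $M\indic_N\geq\nu^-$ (with complementary slackness $\eta_p\left((M_\varepsilon\indic_N)_p-\nu_p^-\right)=0$), and $\kappa\in\R^N$ for $M^\top\indic_N=\nu^+$. The stationarity in $M_{pq}$, meaningful thanks to $M_\varepsilon>0$, reads
\[
\varepsilon\log\!\left(\frac{(M_\varepsilon)_{pq}}{G_{pq}}\right) = -(\lambda^\top A)_p + \eta_p - \kappa_q.
\]

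We then define our dual candidate $u=(u^1,\dots,u^R)$ by $u^r = -\lambda_r A_r$ for $r\in\llbracket 1,R-2\rrbracket$, $u^{R-1}=\eta$, and $u^R=-\kappa$. Using the explicit form of the support functions $\iota^*_{C_r}$, which are finite exactly when $-u^r\in\mathrm{span}(A_r)$ for $r\leq R-2$, when $u^{R-1}\geq 0$, and always for $r=R$, we see that $u$ belongs to the effective domain of the dual objective. The construction immediately yields $\sum_{r=1}^{R-1}u^r_p+u^R_q=\varepsilon\log((M_\varepsilon)_{pq}/G_{pq})$, which is precisely \eqref{optimalityconditions2}. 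For \eqref{optimalityconditions1}, the case $r\leq R-2$ is trivial since $C_r$ is an affine hyperplane containing $M_\varepsilon\indic_N$, so $\langle u^r,x-M_\varepsilon\indic_N\rangle=0$ for every $x\in C_r$; the case $r=R-1$ reduces, via complementary slackness $\eta^\top(M_\varepsilon\indic_N)=\eta^\top\nu^-$, to $\langle \eta,x-\nu^-\rangle\geq 0$, immediate from $\eta\geq 0$ and $x\geq\nu^-$. The characterization in Theorem~\ref{thm:dualityKLprojnu} then yields that $u$ is a maximizer of $(D^{\mathrm{KL}}_\varepsilon)$.

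The hard part will be the strict positivity of $M_\varepsilon$: while intuitive for an entropic projection with strictly positive marginals, a careful argument is needed to rule out zero entries being forced by the combination of equality and inequality constraints defining $\Pi$. Once $M_\varepsilon>0$ is in hand, the remaining steps amount to bookkeeping with support functions and Lagrange multipliers.
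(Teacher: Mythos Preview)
Your proof is correct and follows essentially the same approach as the paper: exhibit a strictly positive element of $\Pi$ (the rank-one coupling built from some $\mu\in\mathscr{M}$ and $\nu^\pm$), use it to justify KKT at $M_\varepsilon$, and then package the Lagrange multipliers into a dual point satisfying \eqref{optimalityconditions1}--\eqref{optimalityconditions2}. The only cosmetic differences are the order of exposition (you first deduce $M_\varepsilon>0$ and then invoke KKT, while the paper uses the interior point directly as the constraint qualification), the sign convention on the multipliers, and your explicit normalization $1/\alpha$ of the rank-one matrix, which is indeed needed for it to lie in $\Pi$.
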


\begin{proof}
We recall that the feasible set $\Pi$ can be written 
\[
\left\{\Mbf \in \R^{{|\Theta|^m}\times {|\Theta|^m}}\, | \, \Mbf\indic_{|\Theta|^m}\geq \nu^-,\, \Abf(\Mbf\indic_{|\Theta|^m}) =b+ \Abf\nu^- \text{ and }\Mbf^\top\indic_{|\Theta|^m} = \nu^+\right\}\cap \R^{{|\Theta|^m} \times {|\Theta|^m}}_+\,.
\]
Since the constraints are all affine in $\Mbf$, constraint qualification boils down to the existence of a matrix $\Mbf_0\in \mathrm{ri}\left(\mathrm{dom}\left(\mathrm{KL}(\cdot\,|\Gbf)\right)\right)\cap\Pi=\mathrm{ri}\left(\R^{{|\Theta|^m}\times {|\Theta|^m}}_+\right)\cap\Pi$. First, we have $\mathrm{ri}\left(\R^{{|\Theta|^m}\times {|\Theta|^m}}_+\right)=\mathrm{int}\left(\R^{{|\Theta|^m}\times {|\Theta|^m}}_+\right)$ (where $\mathrm{int}$ is the interior in $\R^{{|\Theta|^m}\times {|\Theta|^m}}$). Now, let $\mu\in \mathscr{M}$ and $\Mbf_0 = \left((\mu_p+\nu^-_p)\nu^+_q\right)_{1\leq p,q \leq {|\Theta|^m}}$. We have $\Mbf_0>0_{{|\Theta|^m}\times {|\Theta|^m}}$ (pointwise) since $\mu\geq 0_{|\Theta|^m}$, $\nu^->0_{|\Theta|^m}$ and $\nu^+>0_{|\Theta|^m}$, so that $\Mbf_0\in \mathrm{int}\left(\R^{{|\Theta|^m}\times {|\Theta|^m}}_+\right)\cap\Pi$. From the necessary condition of Kuhn-Tucker's Theorem \cite{HaroldW.Kuhn1951}, there exist Lagrange multipliers $\lambda_1^*\in \R^{R-2}$, $\lambda_2^* \in \R^{|\Theta|^m}$ and $\eta^* \in \R^{|\Theta|^m}_+$ such that $(\Mbf_\varepsilon,\lambda_1^*,\lambda_2^*,\eta^*)$ is a saddle-point of the Lagrangian $\mathscr{L}:(\Mbf,\lambda_1,\lambda_2,\eta)\mapsto\varepsilon\mathrm{KL}(\Mbf|\Gbf)-\ps{\lambda_1}{\Abf\left(\Mbf\indic_{|\Theta|^m}-\nu^-\right)-b}-\ps{\lambda_2}{\Mbf^\top\indic_{|\Theta|^m} - \nu^+}-\ps{\eta}{\Mbf\indic_{|\Theta|^m}-\nu^-}$ and the complementary slackness condition $\min\left(\eta^*,\Mbf\indic_{|\Theta|^m}-\nu^-\right)=0_{|\Theta|^m}$ holds. Since the Lagrangian is differentiable, being a saddle-point reduces to $\nabla \mathscr{L}(\Mbf_\varepsilon,\lambda_1^*,\lambda_2^*,\eta^*)=0_{{|\Theta|^m}\times {|\Theta|^m}}$, from which we obtain $\left(\Mbf_\varepsilon\right)_{pq}=\exp\left(\frac{\ps{\lambda_1^*}{\Abf^p}+\eta_p^*}{\varepsilon}\right)\Gbf_{pq}\exp\left(\frac{\lambda_{2,q}^*}{\varepsilon}\right)$, where $\Abf^p\in\R^{R-2}$ is the $p$-th column of $\Abf$. Now, define $u=(u^1,\cdots,u^{R})\in \left(\R^{|\Theta|^m}\right)^{R}$ by $u_p^r=\lambda_{1,r}^*\Abf_{rp}$ for $r\in\llbracket 1,R-2\rrbracket$, $u^{R-1}=\eta^*$ and $u^{R}=\lambda_{2}^*$. We immediately see that $u$ satisfies~\eqref{optimalityconditions2}. Furthermore, for $r\in \llbracket 1,R-2\rrbracket$ and $x\in C_r$, one has $\ps{u^r}{x-\Mbf_\varepsilon\indic_{|\Theta|^m}}=\lambda_{1,r}^*\ps{\Abf_r}{x-\Mbf_\varepsilon\indic_{|\Theta|^m}}=0$, since $x$ and $\Mbf_\varepsilon\indic_{|\Theta|^m}$ are in $C_r$. Finally, for $x\in C_{R-1}$, one can verify that the complementary slackness condition gives $\ps{u^{R-1}}{x-\Mbf_\varepsilon\indic_{|\Theta|^m}}=\ps{\eta^*}{x-\Mbf_\varepsilon\indic_{|\Theta|^m}}\geq 0$ (recall that $\eta^*\geq 0_{|\Theta|^m}$). Hence, $u$ also satisfies~\eqref{optimalityconditions1} and is a maximizer by Theorem~\ref{thm:dualityKLprojnu}.
\end{proof}
\begin{rem}
\label{rem:adaptationtobarpi}
The results above easily adapt to the feasible set $\Bar{\Pi}$, by noticing that 
\[
\iota_{\Bar{\Pi}}(\Mbf)=\sum_{r=1}^{\Bar{R}-1}\iota_{\Bar{C}_r}(\Mbf\indic_{|\Theta|^m})+\iota_{\Bar{C}_{\Bar{R}}}(\Mbf^\top\indic_{|\Theta|^m})+\iota_{\R^{{|\Theta|^m}\times {|\Theta|^m}}_+}(\Mbf)
\]
where $\Bar{R} = 4+ ({|\Theta|^m}-|\Theta|)/(|\Theta|-1)+ \Bar{n}$ ($\Bar{R}-2$ is therefore the number of rows of $\Bar{\Abf}$, the matrix encoding the constraints of the set $\mathscr{M}_{\Bar{c}}$ defined in Section~\ref{projection}) and 
\[
\Bar{C}_r =\left\{\begin{array}{l}
 \left\{x\in \R^{|\Theta|^m}\, |\, \Bar{\Abf}_r^\top x = \left(\Bar{b}+\Bar{\Abf}\nu^-\right)_r\right\}\text{ if }1\leq r\leq \Bar{R}-2\\\\
\left\{x\in \R^{|\Theta|^m}\, |\, x\geq \nu^-\right\}\text{ if }r=\Bar{R}-1\\\\
\left\{\nu^+\right\}\text{ if }r=\Bar{R}
\end{array}\right.\,.
\]
\end{rem}

\section{Iterative scaling algorithm}
\label{algorithm}
The duality result (Theorem~\ref{thm:dualityKLprojnu}), together with Proposition~\ref{prop:dualKLprojnuattained}, implies that $\Mbf_\varepsilon$ can be expressed as a diagonal scaling of the Gibbs kernel $\Gbf$. We exploit this structure and introduce, in Section~\ref{subsection:defscalingiterates}, a Sinkhorn-type algorithm tailored to our setting for estimating the dual scalings $\left(\exp\left(\frac{1}{\varepsilon}u^r \right)\right)_{1\leq r\leq R}$. This algorithm relies on alternating maximization on the dual and is inspired by the work of Chizat et al.~\cite{chizat2018scaling} (see Section 4.5 and Algorithm 3). While the authors only provide the algorithmic form without proof, we include a convergence proof for completeness in Section~\ref{subsec:proofofcvsinkhorn}. The algorithm is designed for solving~\eqref{optproblem:KLprojnu}, but following the same reasoning as in Remark~\ref{rem:adaptationtobarpi}, it naturally extends to the feasible set $\Bar{\Pi}$. For convenience, we will denote
\[
a_r = \exp\left(\frac{1}{\varepsilon}u^r\right),\,1\leq r \leq R\,.
\]

\subsection{A multi-constrained Sinkhorn algorithm}
\label{subsection:defscalingiterates}
For $r\in \llbracket 1, R\rrbracket$, we define $\mathcal{G}^r : \left(\R^{|\Theta|^m}\right)^{R-1} \rightarrow \R^{|\Theta|^m}$ by
\begin{equation}
\label{eq:defoperatorG}
    v:=(v^1,\cdots,v^{R-1})\mapsto \mathcal{G}^rv =\left\{\begin{array}{l}
      \bigodot_{k=1}^{R-2}v^k \odot \Gbf v^{R-1}\text{ if } r\in \llbracket 1, R-1\rrbracket\\\\
     \Gbf^\top \bigodot_{k=1}^{R-1}v^k\text{ if } r=R
\end{array}\right.\,.
\end{equation}

For $r\in\llbracket1,R\rrbracket$ and $x>0_{|\Theta|^m}$, we also define ``proximal operators'' by
\begin{equation}
\label{eq:proxvector}
\mathrm{prox}_r(x)=\argmin_{y\in C_r}\; \mathrm{KL}(y|x)\,.
\end{equation}

We can now introduce our Multi-constrained Sinkhorn algorithm.

\setcounter{algocf}{0}  

\begin{algorithm}[H]
        \caption{Multi-constrained Sinkhorn}
        \label{algo:multiconstrainedsinkhorn}
        
        Set $a_0^1,\cdots,a_0^R = \indic_{|\Theta|^m}$ and $n=0$\;
        \While{$\mathcal{E}(n,R-1)\geq\mathcal{E}_{\mathrm{tol}}$}{
            $n\leftarrow n+1$\;
            \For{$1\leq r \leq R$}{
                $a_n^{r}\leftarrow \mathrm{prox}_{r}\left(\mathcal{G}^{r}a_n^{-r}\right)\odiv\mathcal{G}^{r}a_n^{-r}$ where $a^{-r}_n=\left(a_n^1,\cdots,a_n^{r-1},a_{n-1}^{r+1}\cdots,a_{n-1}^{R}\right)\in \left(\R^{|\Theta|^m}\right)^{R-1}$\;
            }
        }
        Return $\Mbf_{n,R-1}\,$\;
\end{algorithm}
\noindent where for $(n,r) \in \N^*\times \llbracket 1, R\rrbracket$,
\begin{equation}
\label{eq:defMnr}
\Mbf_{n,r}=\left\{\begin{array}{l}
    \diag{\bigodot_{k=1}^{r}a^k_n\odot \bigodot_{k=r+1}^{R-1}a^k_{n-1}} \Gbf \,\diag{a^{R}_{n-1}}  \text{ if } r\in\llbracket 1, R-2\rrbracket\\\\
     \diag{\bigodot_{k=1}^{R-1}a^k_n} \Gbf\,\diag{a^{R}_{n-1}}  \text{ if } r=R-1\\\\
     \diag{\bigodot_{k=1}^{R-1}a^k_n} \Gbf \,\diag{a^{R}_{n}}  \text{ if } r=R
\end{array}\right.\,,
\end{equation}
and we adopt the conventions $\Mbf_{n,0}=\Mbf_{n-1,R}$ and $\Mbf_{0,r}=\Mbf_0=\Gbf$ for all $r\in\llbracket 1, R\rrbracket$. $\mathcal{E}_{\mathrm{tol}}$ is the tolerance for convergence and $\mathcal{E}(n,R-1)$ is our stopping criterion measuring the marginal constraints violation of $\Mbf_{n,R-1}$ (the diagonal scaling of the kernel $\Gbf$ obtained at the end of the $(R-1)$-th sub-step of the $n$-th iteration of Algorithm~\ref{algo:multiconstrainedsinkhorn}). There are several ways to define $\mathcal{E}(n,R-1)$. To obtain the results described in Section~\ref{section:numericalrslt}, we worked with
\begin{multline*}
\mathcal{E}(n,R-1)=\max\left(|\Abf\left(\Mbf_{n,R-1}\indic_{|\Theta|^m}-\nu^-\right)-b|_{\infty},\right.\\
\left.|\max\left(\nu^--\Mbf_{n,R-1}\indic_{|\Theta|^m},0_{|\Theta|^m}\right)|_{\infty}, 
|\Mbf_{n,R-1}^\top\indic_{|\Theta|^m}-\nu^+|_{\infty}\right)\,,
\end{multline*}
where $|\cdot|_{\infty}$ is the standard sup norm in finite dimension. The rationale for using this stopping criterion is provided in Proposition~\ref{prop:justificationstoppingcriterion} below.

\begin{prop}
\label{prop:justificationstoppingcriterion}
Let $n_0\in \N^*$. We have an equivalence between:
\begin{itemize}
    \item[i)] $\Mbf_{n_0,R-1}$ is the solution of~\eqref{optproblem:KLprojnu},
    \item[ii)] $\Mbf_{n_0,R-1}\in \Pi$.
\end{itemize}

\end{prop}

The proof is reported in Appendix~\ref{appendix:proofstopingcriterion}.
\begin{rem}
We were only able to prove that satisfying the constraints at the end of the $(R-1)$-th sub-step is equivalent to having reached the optimizer. We believe this is due to the fact that $(C_r)_{r\neq R-1}$ are affine subspaces, whereas $C_{R-1}=\{x\geq \nu^-\}$ is not.
\end{rem}

\subsubsection{Link with alternating maximization on the dual}
\label{section:alternatingmaximization}

From the definition of $\mathcal{G}^r$~\eqref{eq:defoperatorG}, one has $\ps{e^{\oplus u/\varepsilon}}{\Gbf}_F=\ps{e^{u^r/\varepsilon}}{\mathcal{G}^re^{u^{-r}/\varepsilon}}$ for all $r\in \llbracket 1, R\rrbracket$, where $u^{-r}=\left(u^1,\cdots,u^{r-1},u^{r+1},\cdots,u^{R}\right)\in \left(\R^{|\Theta|^m}\right)^{R-1}$.

 One way to approach the dual optimizer is to perform an alternating block dual ascent defined by $u_0=(0_{|\Theta|^m},\cdots,0_{|\Theta|^m})$ and for all $n\in \N^*$
\[
u_n^{r}=\argmax_{x\in\R^{|\Theta|^m}}\; -\iota_{C_{r}}^*(-x)-\varepsilon\ps{e^{x/\varepsilon}-\indic_{|\Theta|^m}}{\mathcal{G}^{r}e^{u^{-r}_n/\varepsilon}}\;, 1 \leq r \leq R\,,
\]
where $u^{-r}_n=\left(u_n^1,\cdots,u_n^{r-1},u_{n-1}^{r+1}\cdots,u_{n-1}^{R}\right)\in\left(\R^{|\Theta|^m}\right)^{R-1}$ (uniqueness follows from strict concavity). 

Applying Fenchel-Rockafellar's Theorem~\ref{thm:fenchel-rockafellar}, one can prove that $e^{u_n^{r}/\varepsilon}=\mathrm{prox}_{r}\left(\mathcal{G}^{r}e^{u^{-r}_n/\varepsilon}\right)\odiv\mathcal{G}^{r}e^{u^{-r}_n/\varepsilon}$. Hence, recalling $a_0 = e^{u_0/\varepsilon}=(\indic_{|\Theta|^m},\cdots,\indic_{|\Theta|^m})$, we see that the $a_n^{r}$'s and the $e^{u_n^{r}/\varepsilon}$'s satisfy the same scheme.

\subsubsection{Convergence of Algorithm~\ref{algo:multiconstrainedsinkhorn}}
\label{subsec:proofofcvsinkhorn}

This section relies on Dykstra's algorithm, which is recalled in Appendix~\ref{appendix:dykstra}. It produces iterates $(\Xbf_{n,r})_{(n,r)\in\N\times\llbracket 1, R\rrbracket}$  and  $(\Qbf_{n,r})_{(n,r)\in\N\times\llbracket 1, R\rrbracket}$ defined by~\eqref{eq:Xdykstra} and~\eqref{eq:qdykstra}, respectively. For any sequence $(r_n)_{n\in \N}\in \llbracket 1, R \rrbracket^\N$, it is known that 
\[
\Xbf_{n,r_n}\arrtoinf{n} \Mbf_\varepsilon
\]
(see Corollary~\ref{coro:cvgdykstra}). Our proof of convergence for Algorithm~\ref{algo:multiconstrainedsinkhorn} simply consists in showing that the diagonal scalings of $\Gbf$ produced by it satisfy
\[
\Mbf_{n,r}=\Xbf_{n,r}\,,
\]
for all $(n,r)\in \N\times\llbracket 1, R\rrbracket$. Therefore, the iterates $(a^r_n)_{(n,r)\in \N\times \llbracket 1, R\rrbracket}$ defined by Algorithm~\ref{algo:multiconstrainedsinkhorn} are sufficient to perform Dykstra's algorithm. They are cheaper to manipulate as they lie in $\R^{|\Theta|^m}$, instead of $\R^{{|\Theta|^m} \times {|\Theta|^m}}$ for $\Xbf_{n,r}$ and $\Qbf_{n,r}$ (see Appendix~\ref{appendix:dykstra}). Moreover, recovering $\Xbf_{n,r}$ will simply be a matter of scaling the kernel $\Gbf$, as defined by $\Mbf_{n,r}$ (see equation~\eqref{eq:defMnr}).

\begin{prop}[\textbf{Convergence of Algorithm~\ref{algo:multiconstrainedsinkhorn}}]
\label{prop:convergence_sinkhorn}
For all $(n,r)\in \N\times\llbracket 1,R\rrbracket$, $\Mbf_{n,r}=\Xbf_{n,r}$. Consequently,
\[
\Mbf_{n,R-1}\arrtoinf{n}\Mbf_\varepsilon\text{ and }\mathcal{E}(n,R-1)\arrtoinf{n}0\,.
\]
\end{prop}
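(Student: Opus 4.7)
The strategy is to reduce the statement to a clean induction on $n$ whose inductive step is already established by Lemma~\ref{lemma:inductionstep}, and then invoke the known convergence of Dykstra's algorithm from Section~\ref{subsection:dystraalgo}.

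First, I would verify the base case $n=0$. By the conventions $M_{0,r}=M_0=G$ and $X_{0,r}=X_0=G$, so $M_{0,r}=X_{0,r}$ holds for every $r\in\llbracket 1,R\rrbracket$. In particular $M_{0,R}=X_{0,R}$, which is the first hypothesis required by Lemma~\ref{lemma:inductionstep}. For the second hypothesis, note that $a_0^1=\cdots=a_0^R=\indic_N$ and $q_{0,1}=\cdots=q_{0,R}=\indic_{N\times N}$ by definition, so that
\[
q_{0,r}=\mathrm{diag}(\indic_N\odiv a_0^r)\,\indic_{N\times N}\quad\text{for }r\in\llbracket 1,R-1\rrbracket,\qquad q_{0,R}=\indic_{N\times N}\,\mathrm{diag}(\indic_N\odiv a_0^R),
\]
both trivially. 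Thus the hypotheses of Lemma~\ref{lemma:inductionstep} are satisfied at rank $n=0$.

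Next, I would run the induction. Assume, for some $n\in\N$, that $M_{n,R}=X_{n,R}$ and that $q_{n,r}$ has the required form for every $r\in\llbracket 1,R\rrbracket$. Lemma~\ref{lemma:inductionstep} then yields both $M_{n+1,r}=X_{n+1,r}$ for every $r\in\llbracket 1,R\rrbracket$ (and in particular for $r=R$) and the same structural form for $q_{n+1,r}$. The induction therefore propagates, giving $M_{n,r}=X_{n,r}$ for all $(n,r)\in\N\times\llbracket 1,R\rrbracket$.

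Finally, for the convergence statement, I would simply quote the Dykstra-with-Bregman convergence recalled in Section~\ref{subsection:dystraalgo}: under the constraint qualification $\Pi\cap\mathrm{int}(\R^{N\times N}_+)\neq\emptyset$ (verified in the proof of Proposition~\ref{prop:dualKLprojnuattained}), one has $X_{n,r_n}\to M_\varepsilon$ as $n\to\infty$ for any sequence $(r_n)\in\llbracket 1,R\rrbracket^\N$. Combined with the identity $M_{n,r}=X_{n,r}$ just established, this yields $M_{n,r_n}\to M_\varepsilon$, completing the proof.

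I do not anticipate any real obstacle: all the substantive work — matching proximal updates to Dykstra's KL-projections and tracking the auxiliary variables $q_{n,r}$ — has already been carried out in Lemma~\ref{lemma:inductionstep}, and the convergence of Dykstra iterates is a cited classical result. The only point worth stating carefully is the base case, which is essentially bookkeeping on the initializations $a_0^r=\indic_N$, $q_{0,r}=\indic_{N\times N}$, and $M_0=X_0=G$.
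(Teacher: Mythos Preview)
Your proposal is correct and follows essentially the same approach as the paper: an induction on $n$ whose base case is checked from the initializations $a_0^r=\indic_N$, $q_{0,r}=\indic_{N\times N}$, $M_0=X_0=G$, and whose inductive step is exactly Lemma~\ref{lemma:inductionstep}. The only difference is that you spell out the final convergence step via the Dykstra result of Section~\ref{subsection:dystraalgo}, which the paper leaves implicit in the word ``consequently''.
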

The proof is done by induction on $n\in \N$. The induction step relies on a technical Lemma that can be found in Appendix~\ref{appendix:lemmasforsinkhorncv}.
\begin{proof}
For $n\in \N$, let $\mathscr{P}(n)$ be the statement
\[
\Mbf_{n,R}=\Xbf_{n,R}\text{ and for all } r \in \llbracket 1, R\rrbracket,\, \Qbf_{n,r}=\left\{\begin{array}{l}
    \diag{\indic_{|\Theta|^m} \odiv a_{n}^{r}}\indic_{{|\Theta|^m}\times {|\Theta|^m}}\text{ if }r \in \llbracket 1, R-1\rrbracket \\\\
    \indic_{{|\Theta|^m}\times {|\Theta|^m}}\,\diag{\indic_{|\Theta|^m} \odiv {a_{n}^{R}}}) \text{ if }r=R
\end{array}\right.\,.
\]
$\mathscr{P}(0)$ holds true as $a_0=(\indic_{|\Theta|^m},\cdots,\indic_{|\Theta|^m})$ (see Algorithm~\ref{algo:multiconstrainedsinkhorn}), $\Qbf_{0,1}=\cdots = \Qbf_{0,R} = \indic_{{|\Theta|^m}\times {|\Theta|^m}}$ (see Appendix~\ref{appendix:dykstra}) and by convention $\Mbf_{0,R}=\Xbf_{0,R}=\Gbf$. The induction step is given by Lemma~\ref{lemma:inductionstep}, so that $\mathscr{P}(n)$ holds for all $n\in \N$. We easily deduce from Lemma~\ref{lemma:inductionstep} that $\Mbf_{n,r}=\Xbf_{n,r}$ for all $(n,r)\in \N\times\llbracket 1,R\rrbracket$. Then, Corollary~\ref{coro:cvgdykstra} applies with the constant sequence equal to $R-1$. Thus, $\Mbf_{n,R-1}\arrtoinf{n}\Mbf_\varepsilon$ and $\mathcal{E}(n,R-1)\arrtoinf{n}0$ by continuity.
\end{proof}

\section{Numerical results}
\label{section:numericalrslt}
This section contains different numerical results, with the aim of either confirming our theoretical results or highlighting the behavior of our method on real data.
We use public SPX options data available on the CBOE delayed option quotes platform\footnote{\url{https://www.cboe.com/delayed_quotes/spx/quote_table}}.
We evaluate a mid-price for calls and puts from the bid and ask quotes.
After a basic filtering (volume $>0$), discount factors and forwards for the different maturities are determined by linear regression on the call-put parity. For the sake of clarity, we will use the following color and marker code:
\begin{itemize}
    \item[] \textcolor{blue}{\Large \textbullet} for SPX market data,
    \item[] \textcolor{red}{\ding{72}} for stressed data generated artificially from SPX market data,
    \item[] \textcolor{green}{+} for repaired data obtained with our method (possibly with a color gradient indicating different regularization levels),
    \item[] \,\textcolor{purple}{\Large \textbullet} for repaired data obtained with the algorithm of Cohen et al.\ \cite{cohenReisinger} (described in Remark~\ref{rem:cohenalgo}).
\end{itemize}

In the different experiments, we chose $d$ to be the Euclidean distance. Our method is illustrated through the lens of implied volatilities. Implied volatility smiles are obtained by inverting the Black-Scholes formula applied to call prices computed under the martingale measures arising from our different optimization problems. For $\varepsilon>0$, the regularized smiles are generated by the martingale measure $\mu_\varepsilon = \Mbf_\varepsilon\indic_{|\Theta|^m}-\nu^-$, where $\Mbf_\varepsilon$ is the solution of~\eqref{optproblem:KLprojnu} approximated with Algorithm~\ref{algo:multiconstrainedsinkhorn}. For $\varepsilon = 0$, the non-regularized smiles stem from $\mu = \Mbf\indic_{|\Theta|^m}-\nu^-$, where $\Mbf$ solves the linear program~\eqref{optproblem:linearprogram}. Hence, by Theorem~\ref{thm:existenceminimizers}, $\mu$ solves~\eqref{optproblem:projectionnu} (or~\eqref{optproblem:projectionnuwithconstraints} when the feasible set is $\Bar{\Pi}$ in~\eqref{optproblem:linearprogram}, i.e.\ we consider an additional calibration constraint to $\Bar{c}$). To solve the linear program~\eqref{optproblem:linearprogram}, we use SciPy’s \textit{linprog} with default settings (HiGHS method), and bounds $l=0$ and $u=+\infty$ for all decision variables.

\subsection{Convergence of the multi-constrained Sinkhorn algorithm}

\begin{figure}[!t]
\centering
\includegraphics[height=0.5\textheight]{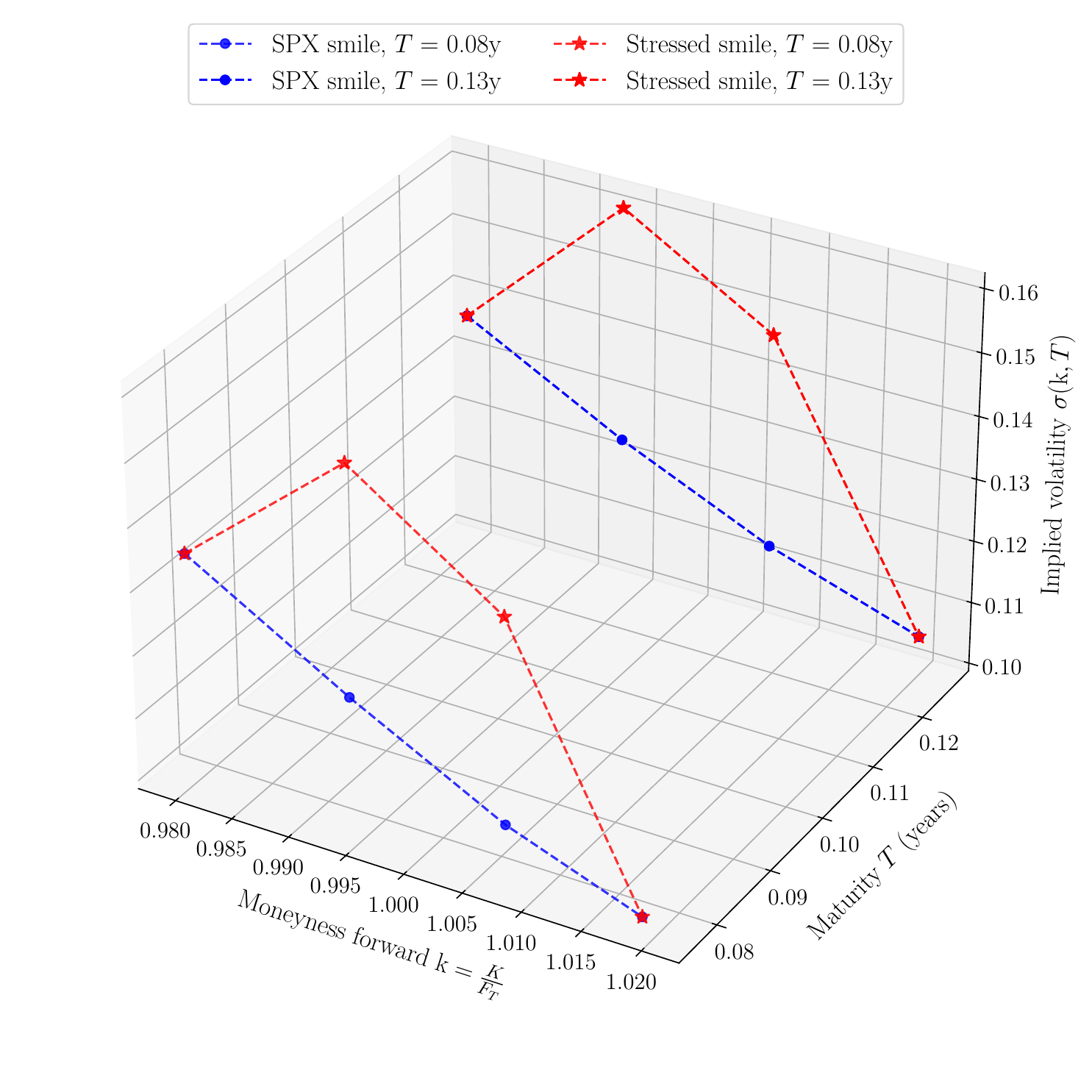}
\caption{Sampled and stressed IVS of the SPX as of 02-09-2024 used to illustrate convergence results.}
\label{fig:data_numerical_convergence}
\end{figure}

\begin{figure}[ht]
  \centering
  \includegraphics[width=.48\textwidth]{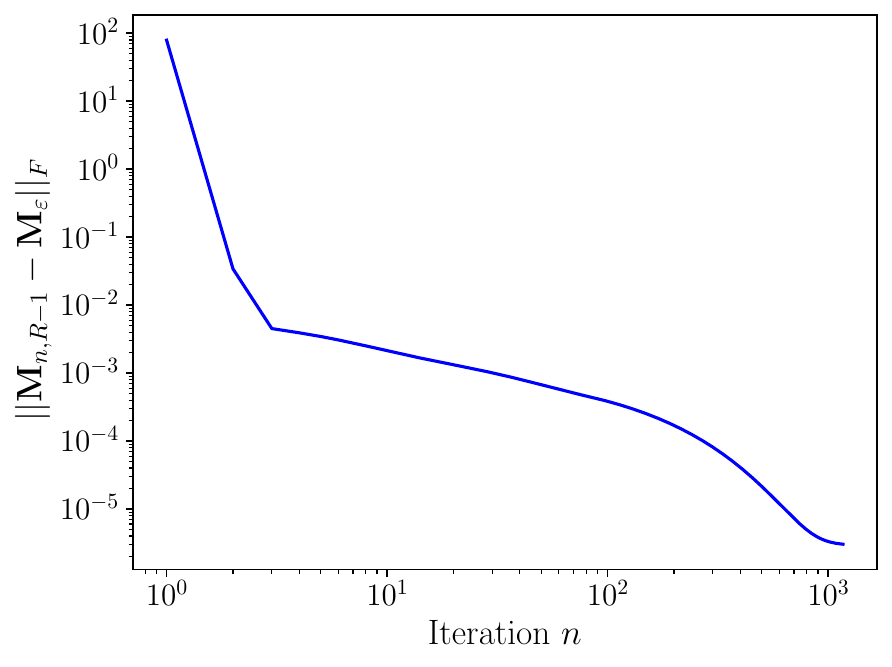}%
  \hfill
  \raisebox{4mm}{\includegraphics[width=.48\textwidth]{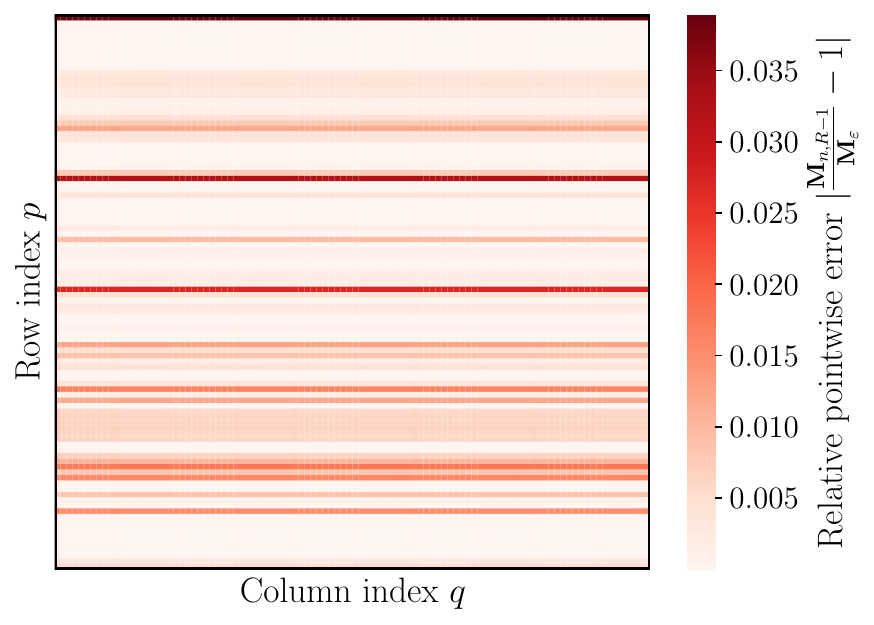}}
  \caption{\textit{Left}: convergence of Algorithm~\ref{algo:multiconstrainedsinkhorn} performed under the stressed conditions described by Figure~\ref{fig:data_numerical_convergence}, as $n\rightarrow +\infty$, with $\varepsilon=1$. $\norm{\cdot}{F}$ stands for the Frobenius norm. \textit{Right}: heatmap of the relative pointwise error between $\Mbf_{n,R-1}$ and $\Mbf_\varepsilon$, after the stopping criterion $\mathcal{E}_{\mathrm{tol}}$ was met ($n\approx 1000$).}
  \label{fig:convergence_sinkhorn_and_heatmap}
\end{figure}

In Figure~\ref{fig:data_numerical_convergence}, we artificially created (butterfly) arbitrages by increasing the volatilities around the money by $30\%$. Numerical illustration of Proposition~\ref{prop:convergence_sinkhorn} can be found in Figure~\ref{fig:convergence_sinkhorn_and_heatmap}. We performed Algorithm~\ref{algo:multiconstrainedsinkhorn} starting from the stressed configuration described by Figure~\ref{fig:data_numerical_convergence}, with $\varepsilon=1$ and $\mathcal{E}_{\mathrm{tol}}=5\times10^{-6}$. $\Mbf_\varepsilon$ was obtained using SciPy's solver \textit{minimize} with the method \textit{SLSQP}. In this toy example, $R=14$. In Figure~\ref{fig:convergence_sinkhorn_and_heatmap}, on the left, we see that the distance between $\Mbf_{n,R-1}$ and $\Mbf_\varepsilon$ is monotonically decreasing to zero. However, it is not clear that the rate of convergence is linear, as is the case for the usual Sinkhorn algorithm. On the right, we highlight the residual errors between $\Mbf_{n,R-1}$ ($n\approx 1000$) and $\Mbf_\varepsilon$. Both matrices are very close, confirming empirically that the algorithm estimates what is expected.

\begin{figure}[t]
\centering
\includegraphics[height = 0.33\textheight]{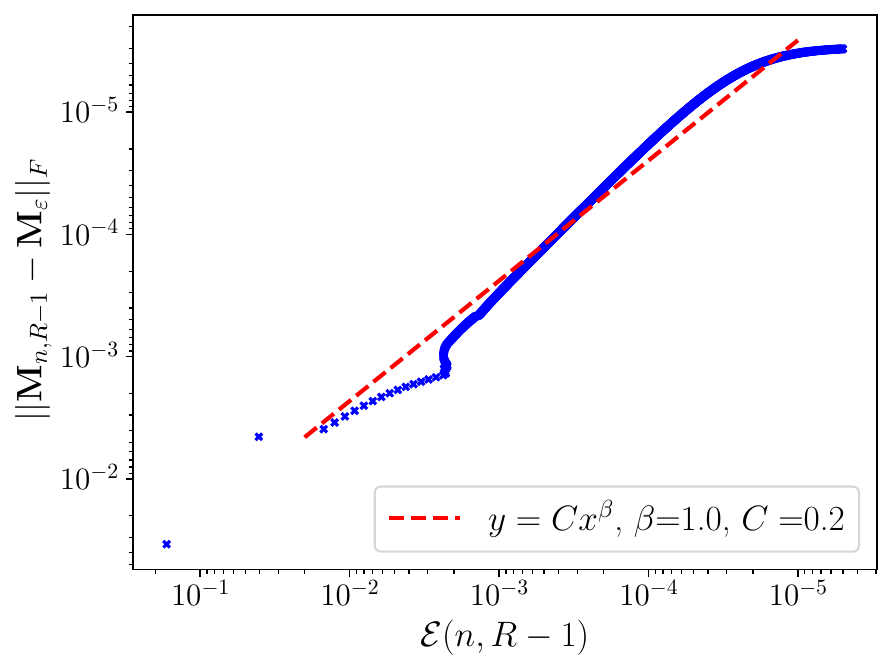}
\caption{Empirical relation between $\mathcal{E}(n,R-1)$ and $\norm{\Mbf_{n,R-1}-\Mbf_\varepsilon}{F}$, along iterations of Algorithm~\ref{algo:multiconstrainedsinkhorn}, in the toy example described by Figure~\ref{fig:data_numerical_convergence}. The red line was obtained by linear regression in the log space.}
\label{fig:convergence_criterion}
\end{figure}

Figure~\ref{fig:convergence_criterion} gives an idea of the relation between the stopping criterion $\mathcal{E}(n,R-1)$ and the distance to the optimum $\norm{\Mbf_{n,R-1}-\Mbf_\varepsilon}{F}$, in the toy example of Figure~\ref{fig:data_numerical_convergence}. As a rule of thumb, we can retain that, empirically, $\norm{\Mbf_{n,R-1}-\Mbf_\varepsilon}{F} = \mathcal{O}\left(\mathcal{E}(n,R-1)\right)$.

\subsection{Convergence of the entropic regularization ($\varepsilon\rightarrow 0$)}

\begin{figure}[t]
\centering
\includegraphics[height = 0.33\textheight]{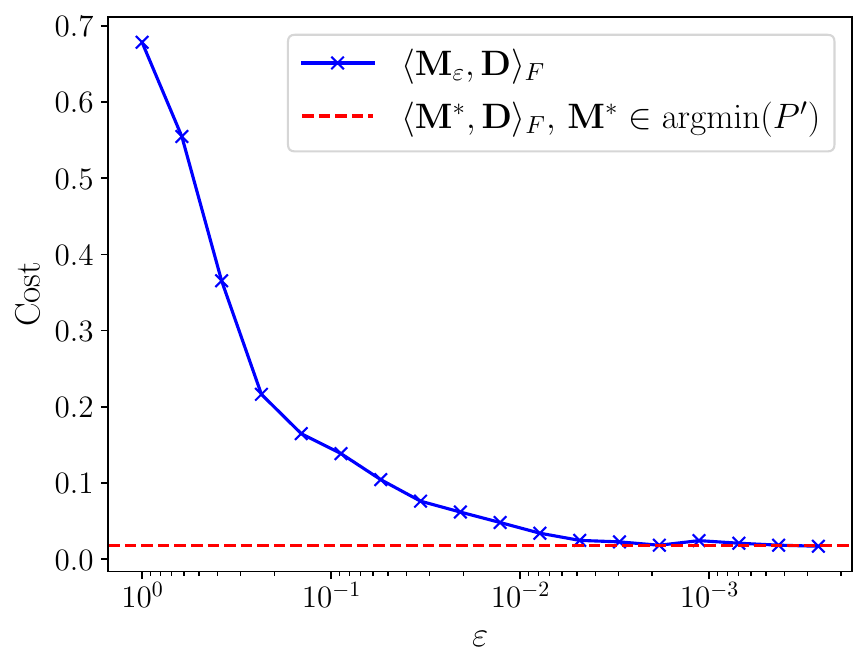}
\caption{Evolution of the cost of $\Mbf_\varepsilon$ when $\varepsilon$ approaches zero, in comparison with the optimal value of~\eqref{optproblem:linearprogram} solved in the situation described by Figure~\ref{fig:data_numerical_convergence}.}
\label{fig:convergence_cost_entropic_scheme}
\end{figure}

Figure~\ref{fig:convergence_cost_entropic_scheme} illustrates Proposition~\ref{prop:convergence_entropic_scheme}: when $\varepsilon \rightarrow 0$, $\Mbf_\varepsilon$ converges to a minimizer of~\eqref{optproblem:linearprogram}. We point out that, to obtain this graph, we used the Python library POT \cite{flamary2021pot} to handle classical numerical errors that appear when solving EOT problems for small regularization parameters. Such difficulties are discussed, for example, in \cite{chizat2018scaling} Section 4.3. 

\subsection{Correction of stressed smiles: single maturity case}

\begin{figure}[ht]
  \centering
  % Left column
  \begin{minipage}[t]{.5\textwidth}
    \centering
    \includegraphics[width=\linewidth,keepaspectratio]{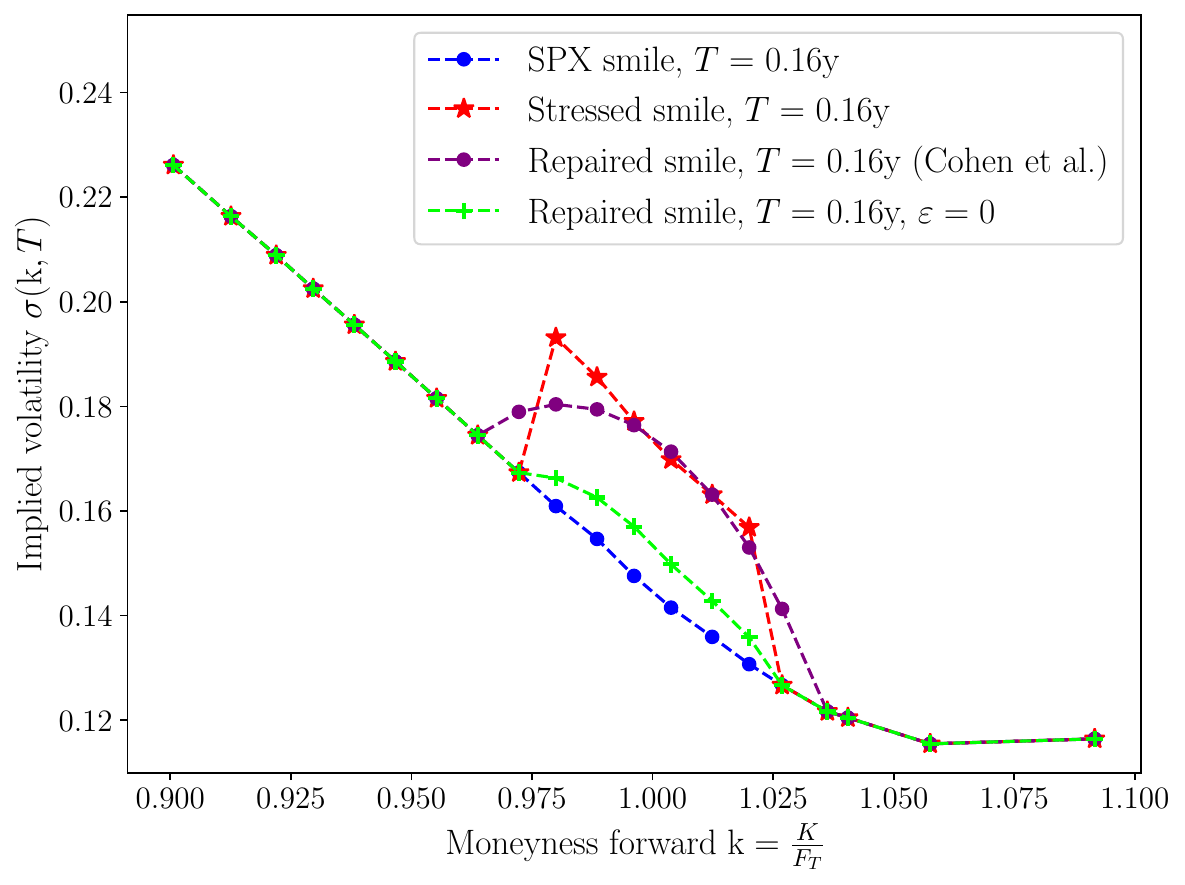}\\[0.3em]
    \includegraphics[width=\linewidth,keepaspectratio]{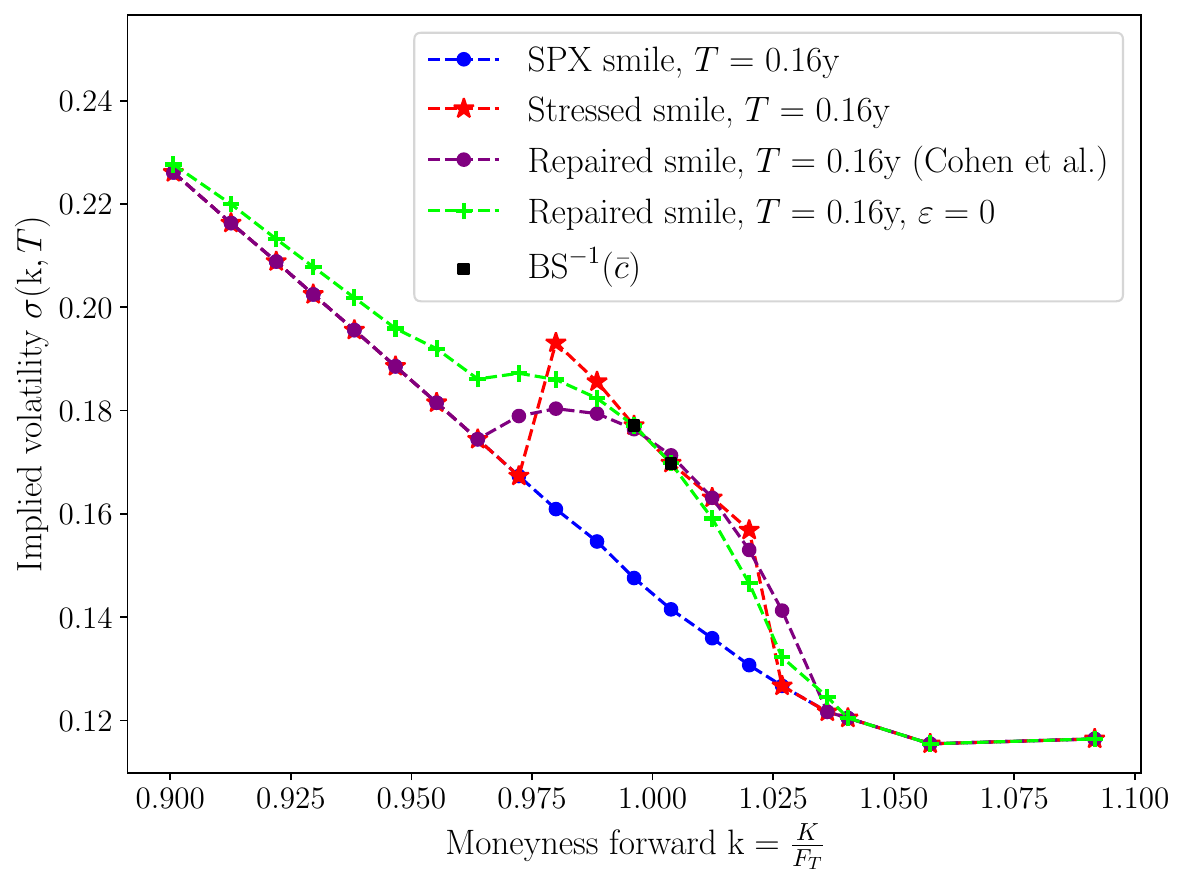}\\[0.3em]
    \includegraphics[width=\linewidth,keepaspectratio]{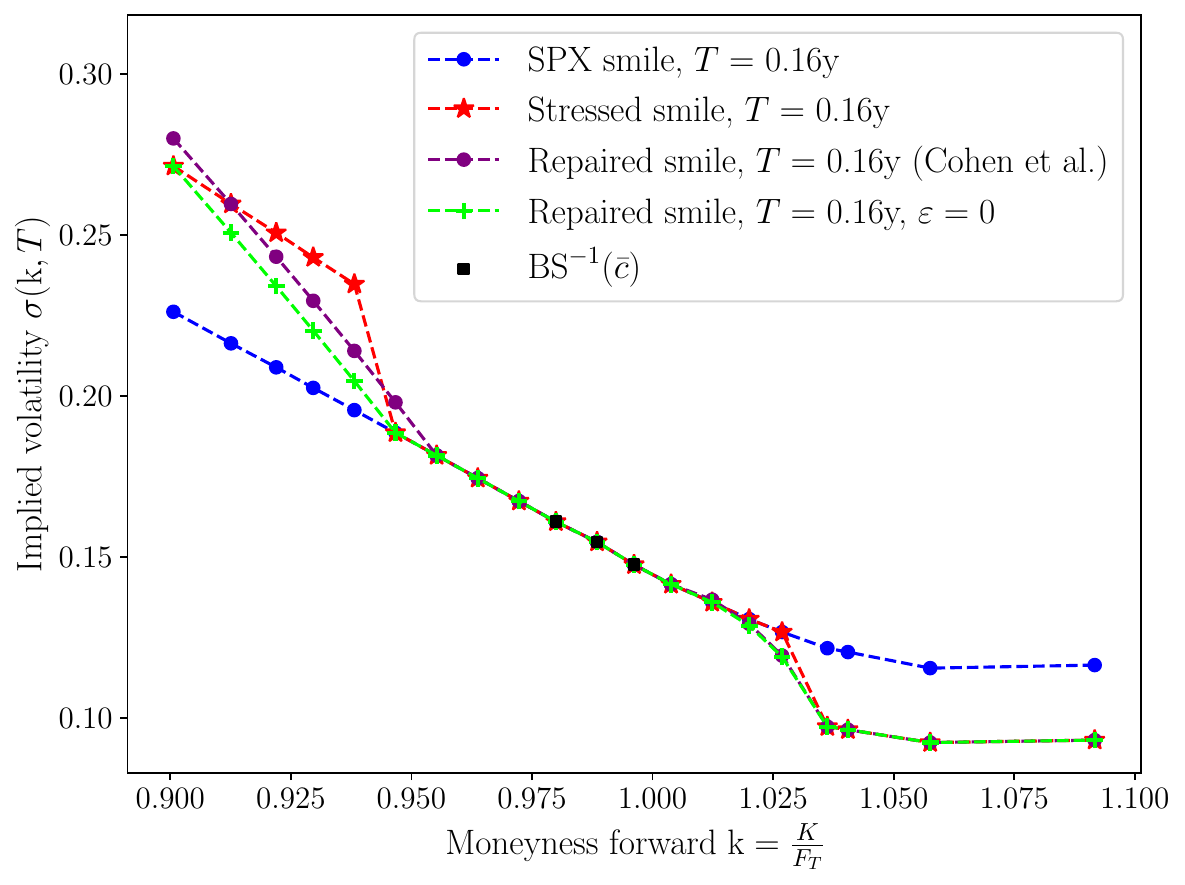}
  \end{minipage}\hfill
  % Right column
  \begin{minipage}[t]{.5\textwidth}
    \centering
    \includegraphics[width=\linewidth,keepaspectratio]{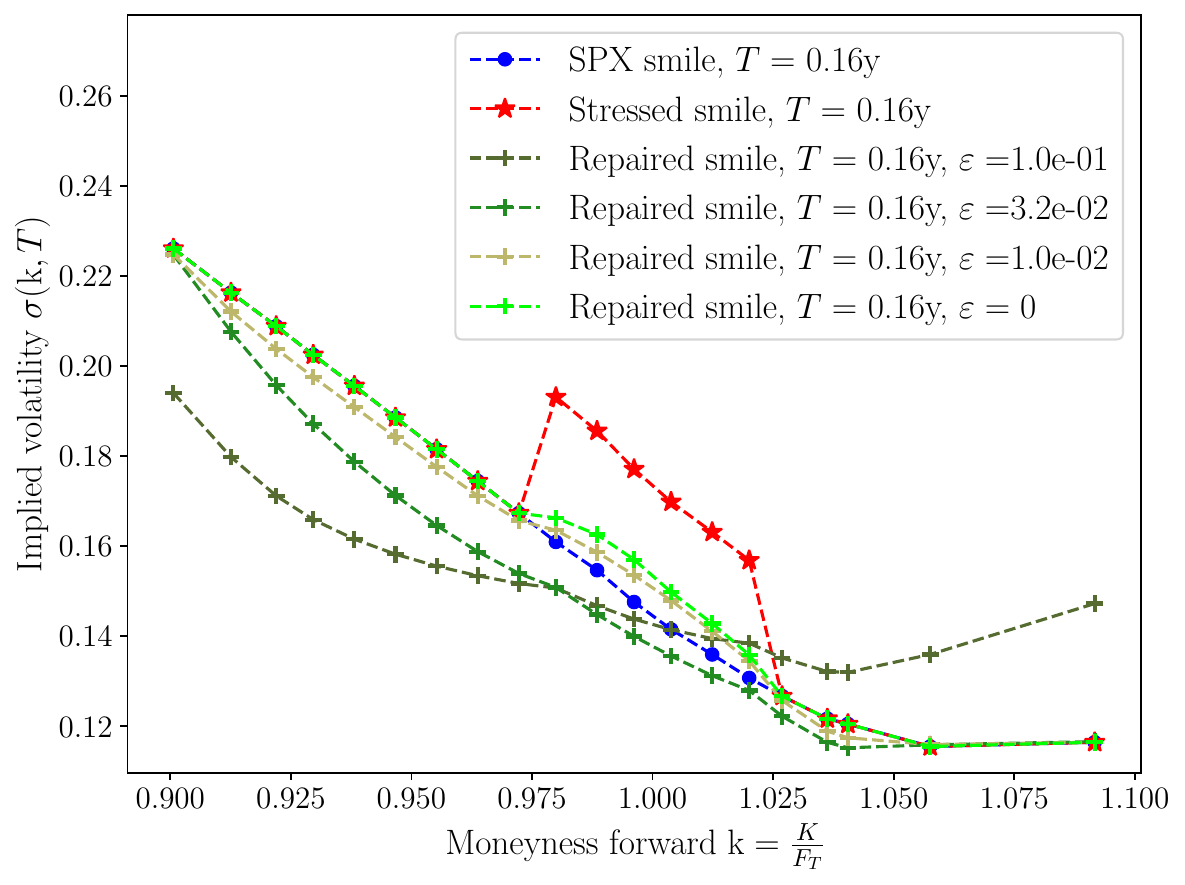}\\[0.3em]
    \includegraphics[width=\linewidth,keepaspectratio]{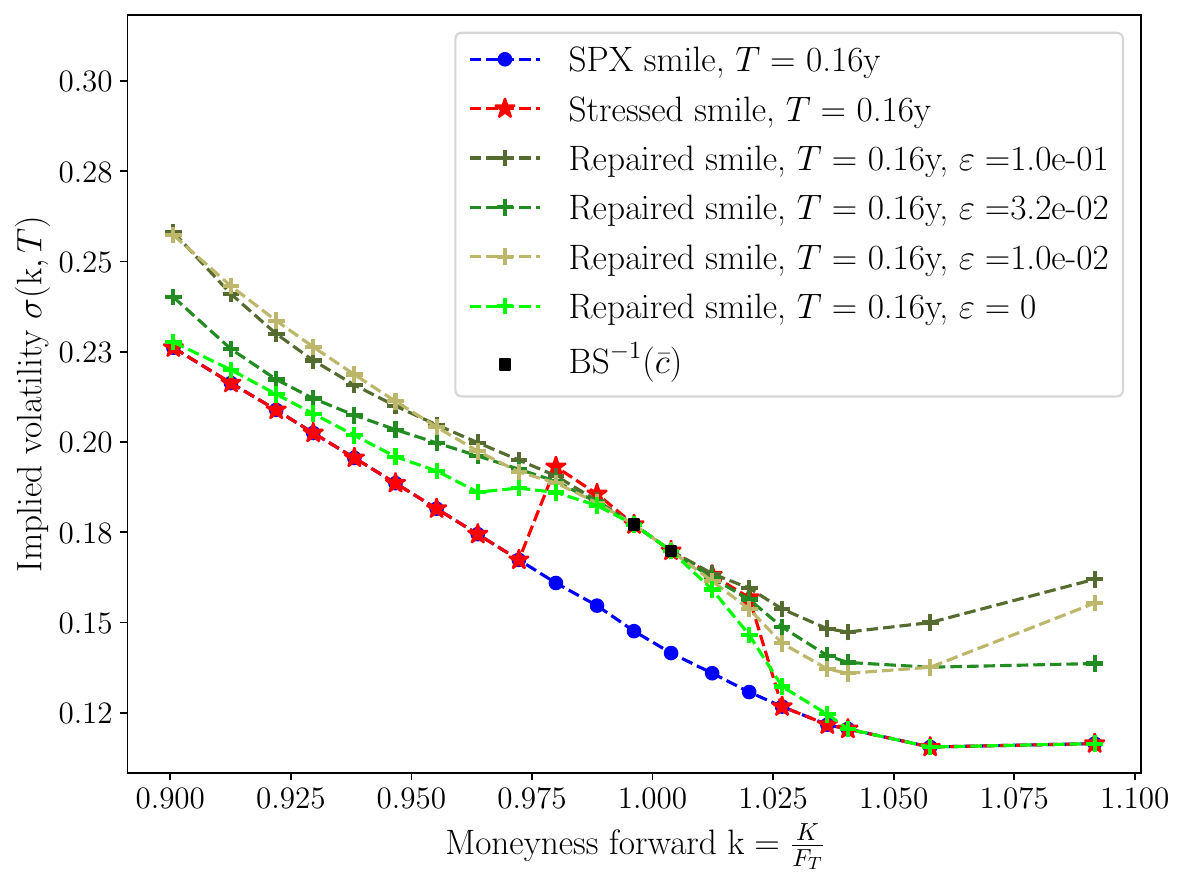}\\[0.3em]
    \includegraphics[width=\linewidth,keepaspectratio]{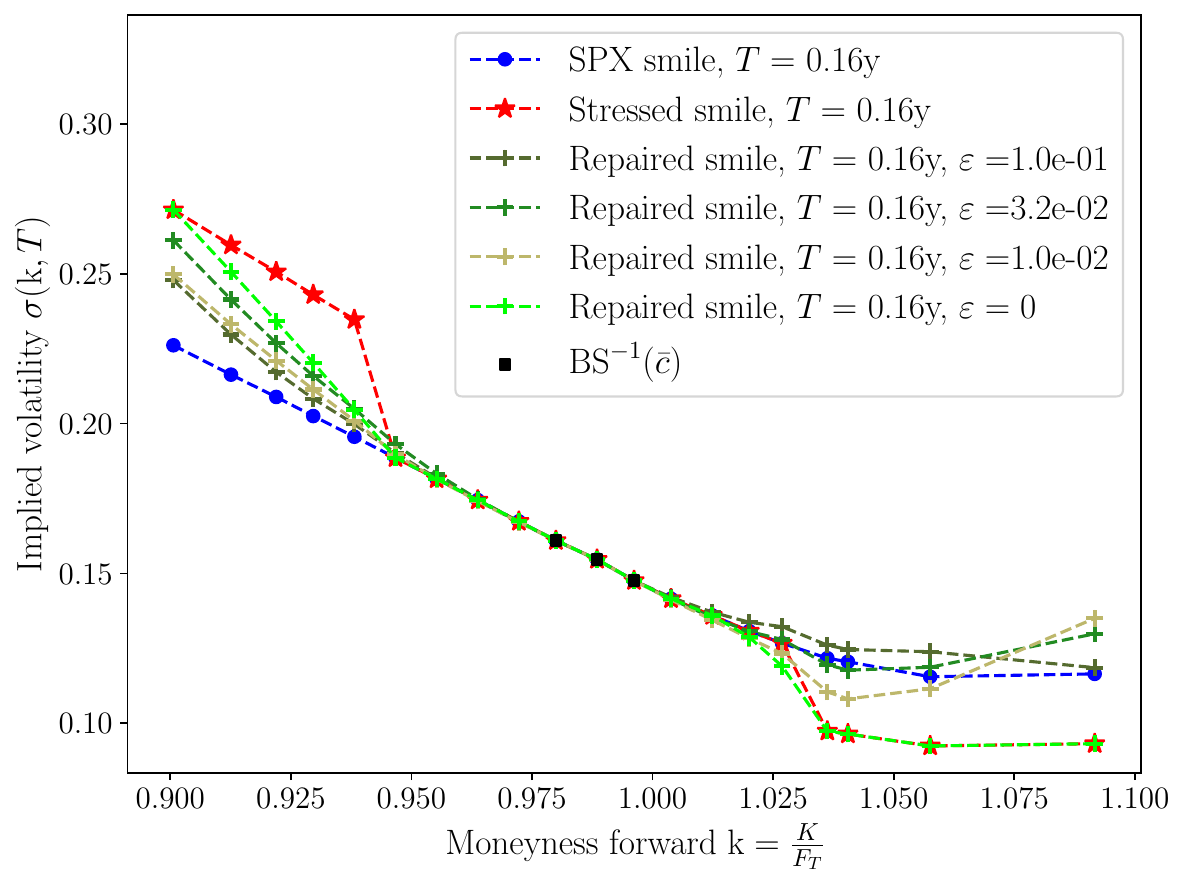}
  \end{minipage}
  \caption{\textit{Left column}: comparison of our method (in green) vs.\ \cite{cohenReisinger} (in purple), for different types of stress-tests (in red), applied to observed SPX smile (in blue). \textit{Right column}: illustration of the effect of regularization, for the same stress-tests.}
  \label{fig:1d_experiments}
\end{figure}

In this section, the SPX data we work with is dated 23-10-2024. Figure~\ref{fig:1d_experiments} highlights the differences between our method and the one proposed in \cite{cohenReisinger}, for three types of stress-tests. In the first row, the volatilities with moneyness $k\in[0.975,1.025]$ were increased by $20\%$. Surprisingly, the green smile with $\varepsilon=0$, corresponding to a solution of~\eqref{optproblem:projectionnu}, does not deviate from the original blue smile outside the interval $[0.975,1.025]$, even though no calibration constraints were imposed. The increase is also more contained than Cohen et al.'s solution. The graph on the right displays the expected behavior of regularization: when $\varepsilon$ approaches zero, the darker green smiles get closer to the one associated with $\varepsilon=0$. We also observe that higher $\varepsilon$ tends to produce smoother smiles, which are associated with smoother probability mass functions (see the second graph of Figure~\ref{fig:1d_marginals_exp1}). This can be beneficial for calibrating a pricing model to the corrected data. However, without constraints here, the average level of volatility is lost for high regularization.

Regarding the second row, the volatilities with $k\in[0.975,1.025]$ were still increased by $20\%$, but we also constrained our solutions to calibrate two prices that were affected by the stress-test, to ensure a high enough volatility at-the-money (see the black squares on the graphs). The green smile with $\varepsilon=0$ corresponds then to a solution of~\eqref{optproblem:projectionnuwithconstraints}. Again, we observe smoother smiles for greater $\varepsilon$ and constraining seems to be an appropriate way to match a certain average level of volatility. 

The last row was obtained with a more exotic stress-test of steepening: volatilities with $k\leq0.94$ were raised by $20\%$, while the ones with $k\geq1.03$ were reduced by $20\%$. This time, our solutions were constrained to fit some prices that were not affected by the stress-test, to keep the level of volatility around the money unaffected.

Note that all smiles appearing in Figure~\ref{fig:1d_experiments} are arbitrage-free (except, of course, the stressed ones) and were obtained with $\mathcal{E}_{\mathrm{tol}}=10^{-4}$.

\begin{figure}[ht]
  \centering
  \begin{minipage}[t]{.9\textwidth} % adjust width as needed
    \centering
    \includegraphics[width=\linewidth]{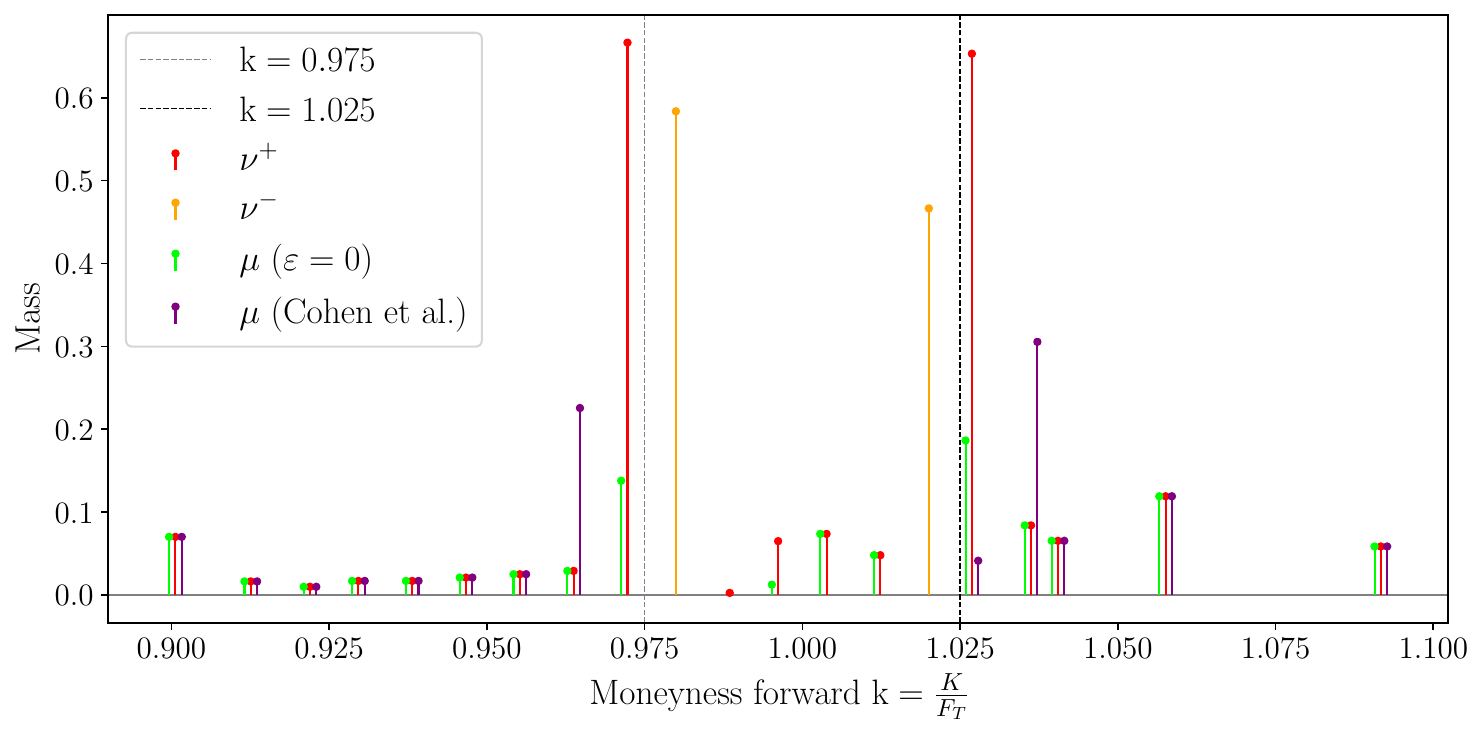}\\[0.8em] % vertical spacing
    \includegraphics[width=\linewidth]{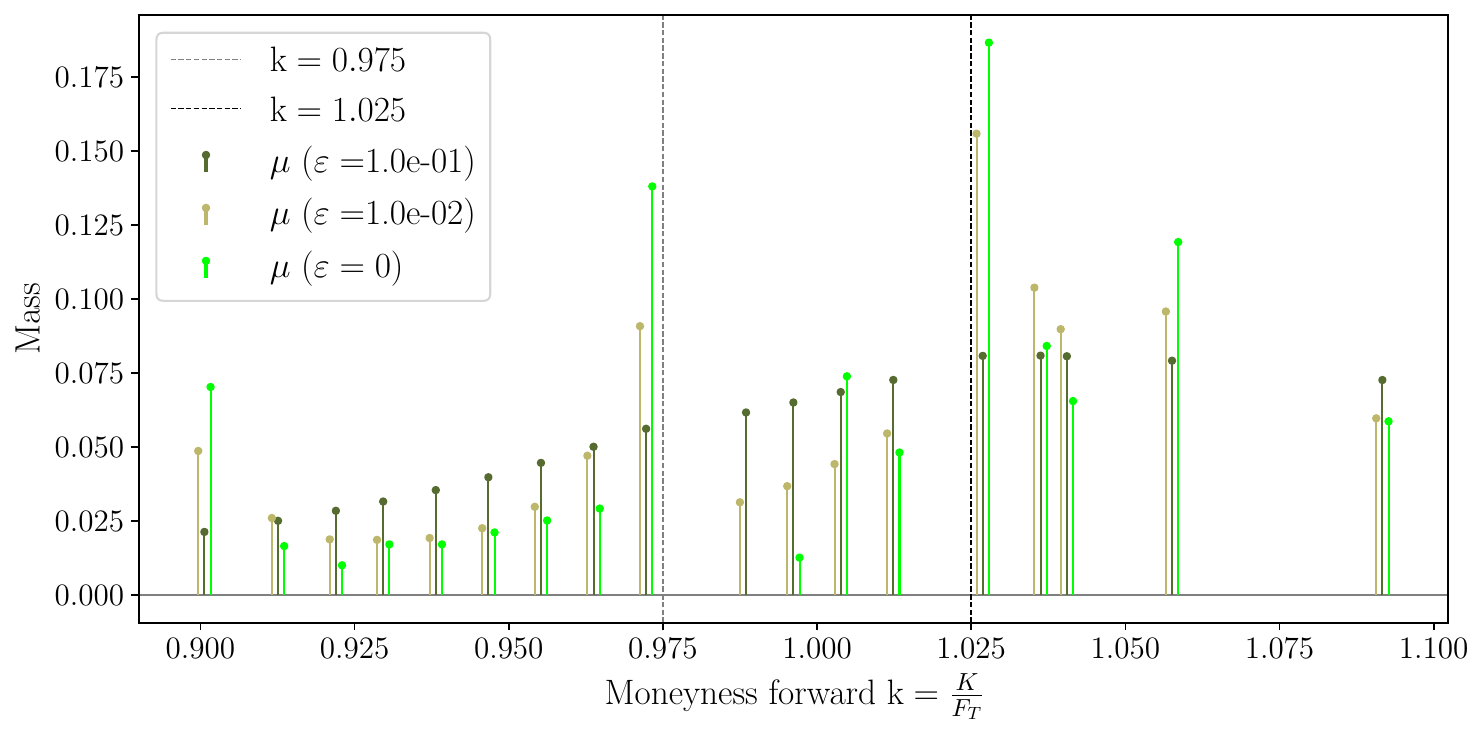}
  \end{minipage}
  \caption{\textit{First row}: comparison of the marginal obtained with our method vs.\ the one implied by Cohen et al.'s approach, for the first stress-test described in Figure~\ref{fig:1d_experiments}. $\nu^+$ (in red) and $\nu^-$ (in orange) are the positive and negative parts of $\nu$ as defined in Section~\ref{subsection:definitionnu}. \textit{Second row}: illustration of the effect of regularization on the marginals.}
  \label{fig:1d_marginals_exp1}
\end{figure}

Figure~\ref{fig:1d_marginals_exp1} focuses on the distributions of mass (on $\Theta\backslash\{0,k_{\mathrm{max}}\}$, for the sake of graphs' clarity) of the different marginals attained either by our method or by Cohen et al.'s, for the first stress-test of Figure~\ref{fig:1d_experiments}. Note that, for visualization purposes, the sticks were slightly separated, but they correspond to the same atoms. Regarding the first row, we observe very unusual distributions from a financial point of view, but they are still true probability measures. We believe there are two reasons for this. The first is the closeness with respect to the stressed configuration, which is economically nonviable due to arbitrage. The second is the general sparsity of solutions of linear programs (our case) and $1$-norm minimization problems (Cohen et al.'s case, see Remark~\ref{rem:cohenalgo}). Regarding the second row, we recognize a typical property of solutions to EOT problems. They are more diffuse than the non-regularized solution and become sparser when $\varepsilon$ approaches zero. The fact that the mass is spread more evenly across the support explains why the associated smiles are smoother.

\subsection{Correction of stressed smiles: two maturities case}

In this section, we still use SPX data dated 23-10-2024, with two maturities to highlight the behavior of the projection method in a multidimensional framework.

\begin{figure}[ht]
  \centering
  % Left column
  \begin{minipage}[t]{.5\textwidth}
    \centering
    \includegraphics[width=\linewidth,keepaspectratio]{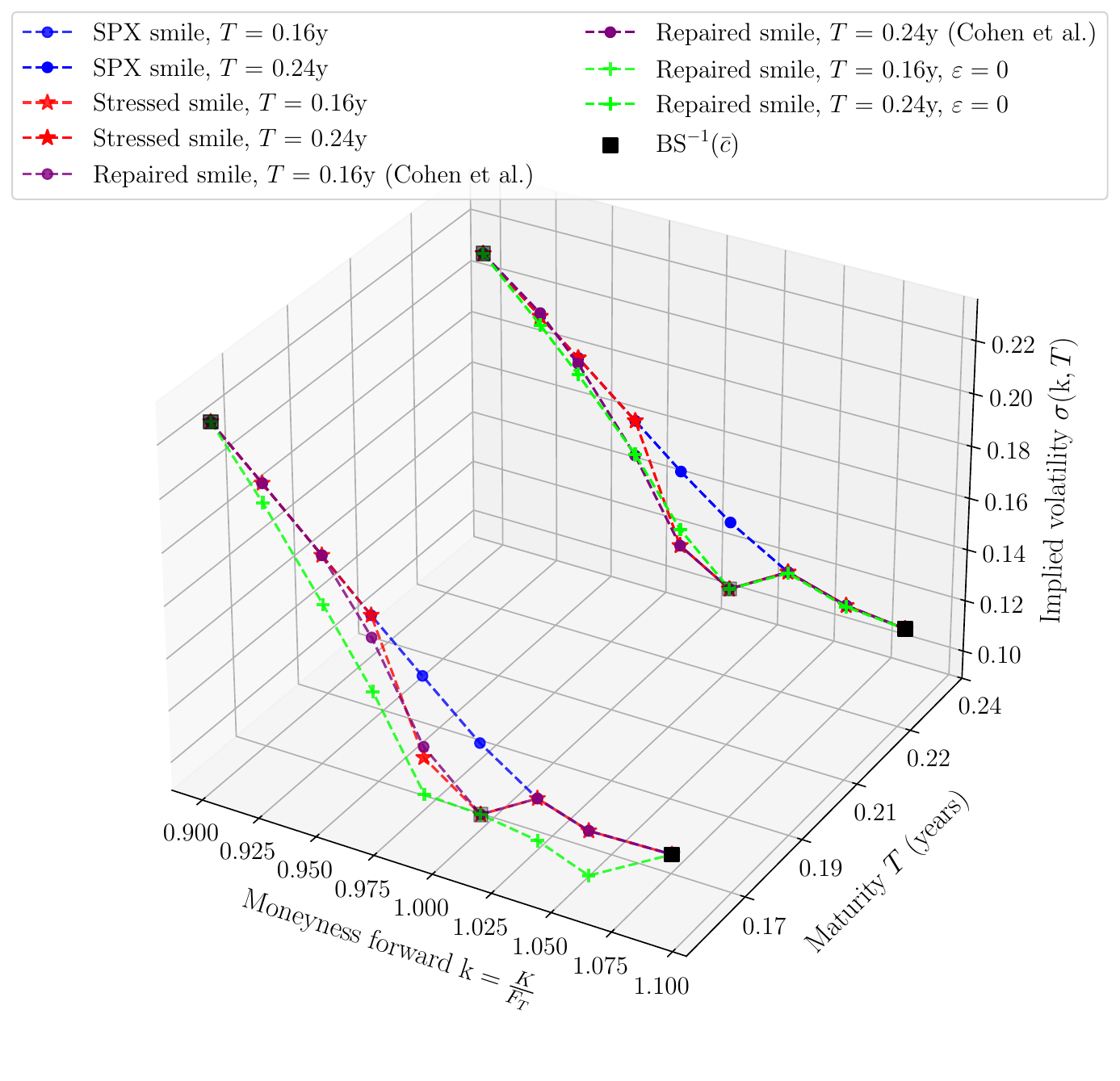}\\[0.3em]
    \includegraphics[width=\linewidth,keepaspectratio]{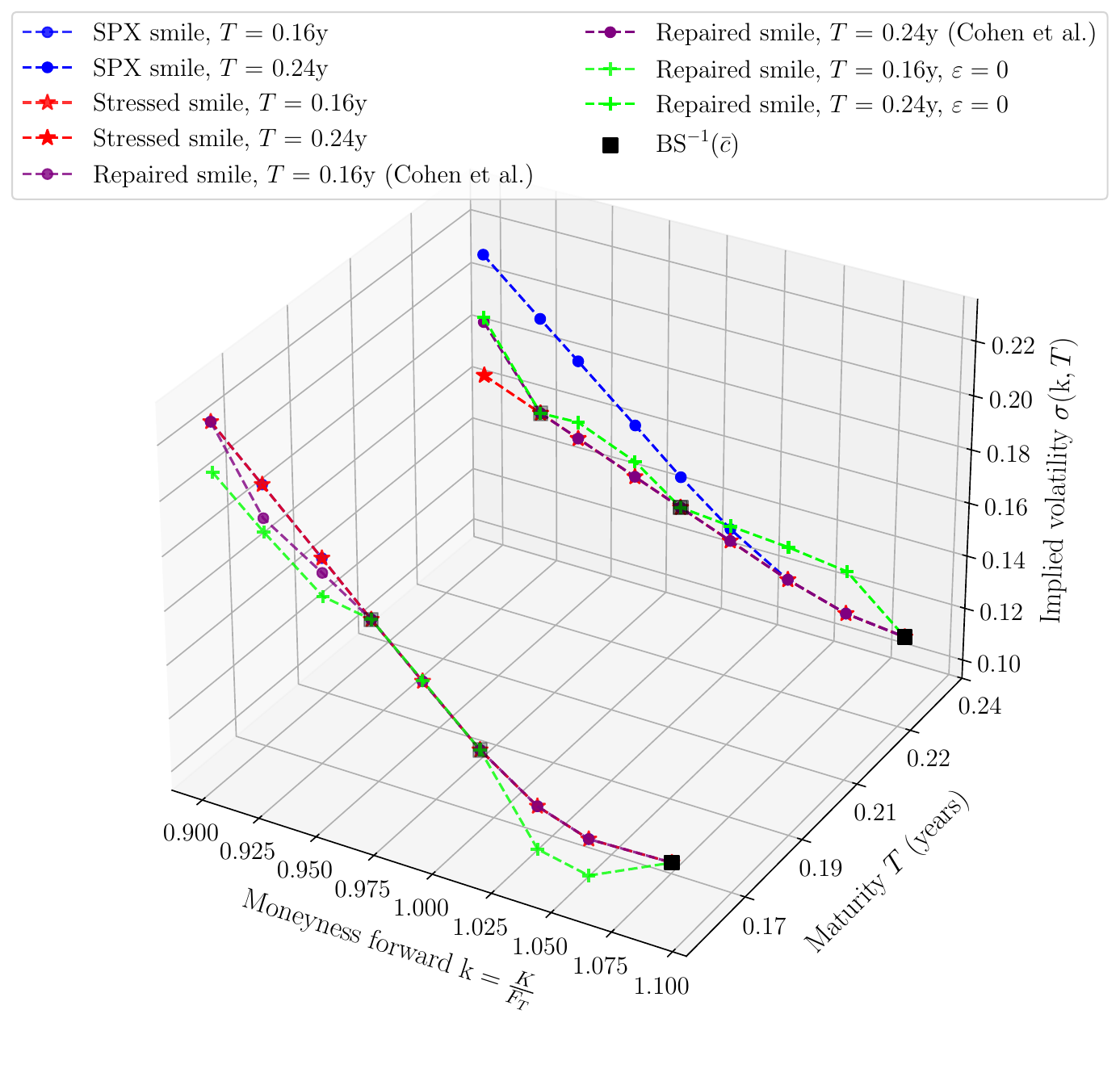}
    
  \end{minipage}\hfill
  % Right column
  \begin{minipage}[t]{.5\textwidth}
    \centering
    \includegraphics[width=0.95\linewidth,keepaspectratio]{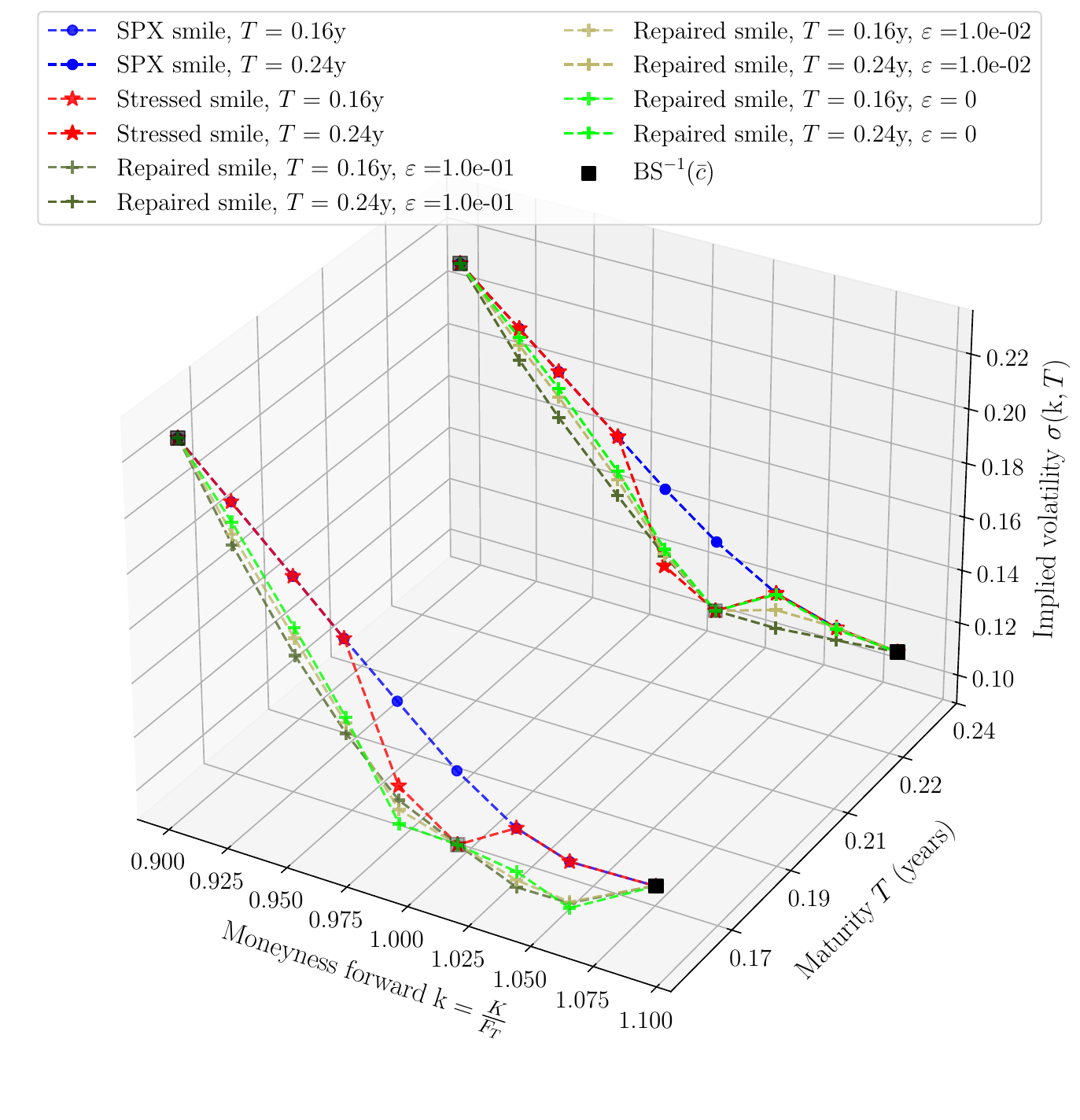}\\[0.3em]
    \includegraphics[width=0.95\linewidth,keepaspectratio]{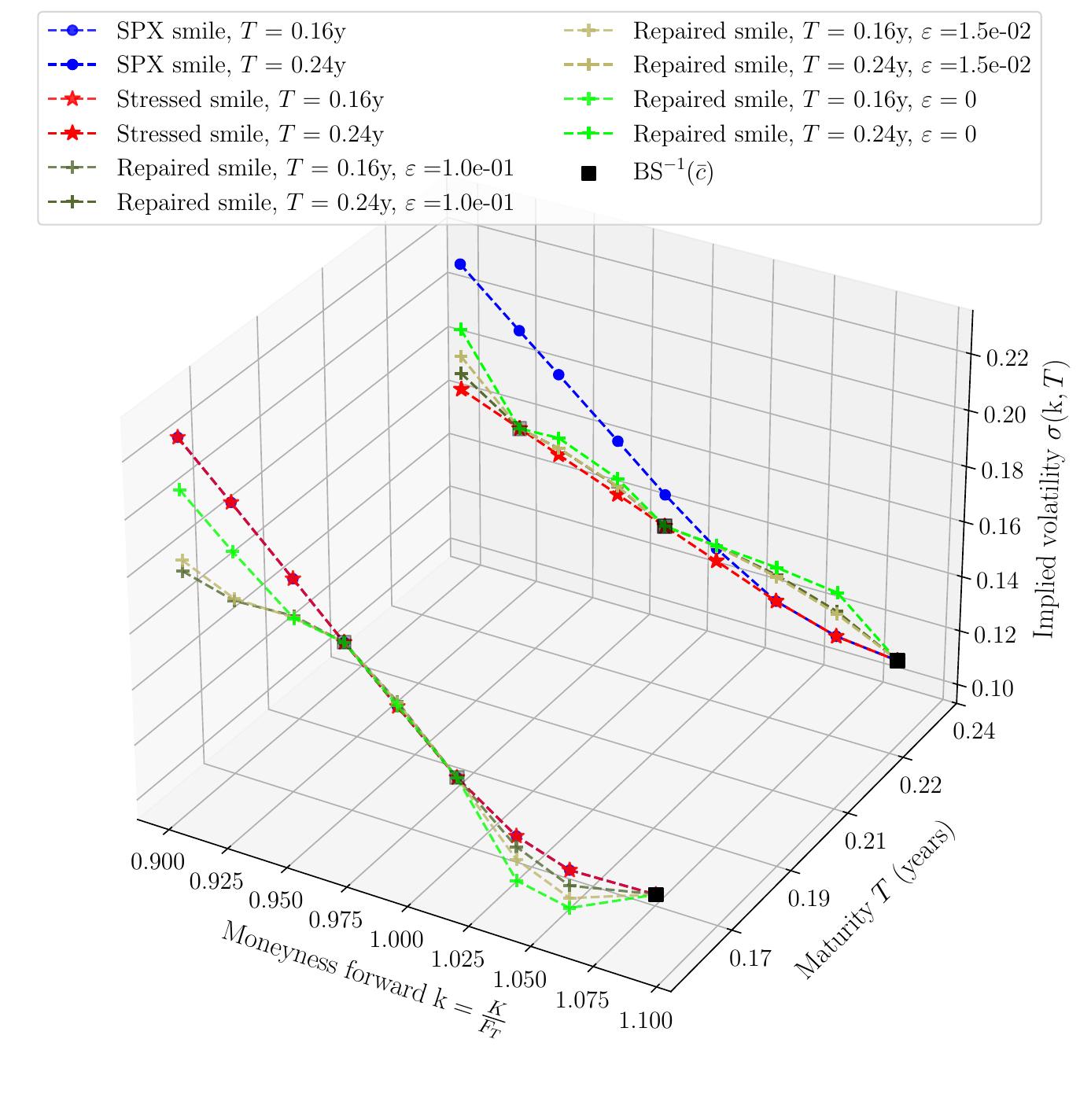}
  \end{minipage}
  \caption{\textit{Left column}: comparison of our method (in green) vs.\ \cite{cohenReisinger} (in purple), for different types of stress-tests (in red), applied to observed SPX smiles (in blue). \textit{Right column}: illustration of the effect of regularization, for the same stress-tests.}
  \label{fig:2d_experiments}
\end{figure}

Figure~\ref{fig:2d_experiments} is nothing but the extension of the experiments described in Figure~\ref{fig:1d_experiments} to a two maturities setting. In the first row, the volatilities with moneyness $k\in[0.975,1.025]$ were decreased by $20\%$, for both maturities. To have sufficiently low volatility around the money, and to ensure that in-the-money and out-the-money volatilities remain, as much as possible, unaffected, we constrained our solution to calibrate a subset of arbitrage-free prices (see the black squares on the graph). In this experiment, the green and purple smiles at $T=0.24y$ are very close, but our correction at $T=0.16y$ deviates more from the red smile than Cohen et al.'s. Regularization again has the property of smoothing the correction. What is interesting about the second row is that we created calendar arbitrages only, by flattening the smile at the highest maturity. As a consequence, each smile is arbitrage-free when ignoring the other. For this second experiment, we wanted our solutions to preserve the average shape of the smile at $T=0.16y$ and to be as flat as possible at $T=0.24y$. This is what motivated our choice of calibration subset (see the black squares). Both green and purple corrections are close. However, we again observe that Cohen et al.’s output is closer to the stressed surface than ours. It appears that proximity in the measure space, at least under our choice of metric, is not symmetrically reflected in the price/volatility space. Note that we could not achieve a convergence criterion of $10^{-4}$ for $\varepsilon=10^{-2}$ in this second stress-test, due to numerical instabilities (overflow in exponentials) encountered in the root-finding procedures of certain sub-steps of Algorithm~\ref{algo:multiconstrainedsinkhorn}. Therefore, we only performed regularization up to $\varepsilon = 1.5\times 10^{-2}$. We believe that handling these instabilities (addressed, for instance, in \cite{chizat2018scaling} Section 4.3) is outside the scope of this work.

As before, all the smiles appearing in Figure~\ref{fig:2d_experiments} are arbitrage-free (except the stressed ones) and were obtained with $\mathcal{E}_{\mathrm{tol}}=10^{-4}$.

\section{Conclusion}

We have explored a novel approach to correct arbitrages in option portfolios, based on the projection of a signed measure onto the subset of martingales, with respect to a Wasserstein distance. We conducted an in-depth study of the associated entropic optimal transport problem and introduced an appropriate Sinkhorn algorithm, for which we proved convergence.
We have validated this theoretical framework through numerical experiments, comparing the behavior of our method with existing ones. 
In future work, we aim to build on this approach to develop a method with improved computational complexity.

\clearpage

\bibliographystyle{abbrv}
\bibliography{biblio}

\appendix
\renewcommand{\thethm}{\thesection.\arabic{thm}}

\section{Reminders on convex analysis}
\label{appendix:convex_analysis}

We fix $N\in \N^*$ and denote by $\ps{\cdot}{\cdot}$ the usual scalar product in $\R^N$. 

\begin{defn}[\textbf{Affine hull and relative interior}]
\label{appendix:defn:relative_interior}
Let $A$ be a subset of $\R^N$. The affine hull of $A$, $\mathrm{aff}(A)$, is the smallest affine set of $\R^N$ containing $A$. The relative interior of $A$, $\mathrm{ri}(A)$, is the interior of $A$, when $A$ is seen as a subset of its affine hull endowed with the subspace topology:
\[
\mathrm{ri}(A)=\left\{x \in A\,|\, \exists \varepsilon >0,\, B_\varepsilon(x)\cap\mathrm{aff}(A)\subset A\right\}\,,
\]
where $B_\varepsilon(x)$ is the open ball centered at $x$, with radius $\varepsilon$.
\end{defn}

\begin{defn}[\textbf{Indicator function of a set}]
\label{appendix:defn:indicator_function}
Let $A$ be a subset of $\R^N$. The indicator function of $A$ is defined by
\[
\iota_A(x) = \left\{\begin{array}{l}
     0 \text{ if }x\in A  \\
     +\infty\text{ if }x\notin A
\end{array}\right.\,.
\]
\end{defn}

Recall that if $A$ is non-empty, closed and convex in $\R^N$, then $\iota_A$ is convex, lower semicontinuous and $\iota_A \not\equiv +\infty$ (i.e.\ $\iota_A$ is proper).

\begin{defn}[\textbf{Domain of a function}]
\label{appendix:defn:domain}
Let $f : \R^N \rightarrow\, ]-\infty,+\infty]$. The domain of $f$, denoted by $\mathrm{dom}(f)$, is the set
\[
\{x\in \R^N\,|\, f(x)<+\infty\}\,.
\]

\end{defn}

\begin{defn}[\textbf{Convex conjugate}]
\label{appendix:defn:convex_conjugate}
Let $f:\R^N\rightarrow ]-\infty,+\infty]$. The convex conjugate of $f$ is defined for each $y\in\R^N$ by
\[
f^*(y)=\sup_{x\in\R^N}\; \ps{x}{y}-f(x)\,.
\]
\end{defn}

\begin{defn}[\textbf{Subdifferential}]
\label{appendix:defn:subdifferential}
Let $f:\R^N\rightarrow ]-\infty,+\infty]$. The subdifferential of $f$ at the point $x\in \R^N$ is the set
\[
\partial f(x)=\left\{y\in \R^N\,|\, f(z)-f(x)\geq \ps{y}{z-x}\,,\text{ for all }z\in\R^N\right\}\,.
\]
Note that it might be empty (for instance when $x\notin\mathrm{dom}(f)$). Whenever it is not, $f$ is said to be \textit{subdifferentiable} at $x$. An element of $\partial f(x)$ is called a \textit{subgradient} of $f$ at $x$. If $f$ is differentiable at $x$, one can prove that $\partial f(x)=\left\{\nabla f(x)\right\}$. Besides, $x \in \argmin_{y\in\R^N}\; f(y)$ if and only if $f$ is subdifferentiable at $x$ and $0_N\in \partial f(x)$.
\end{defn}

\begin{thm}[Theorem 23.8 in \cite{rockafellar1997convex}]
\label{th:sumrulesubdifferentials}
Let $n\in \N^*$, $f_1,\cdots,f_n$ be proper convex functions on $\R^N$ and let  $ f=\sum_{i=1}^nf_i$. Then,
\[
\partial f_1(x)+\cdots+ \partial f_n(x) \subset \partial
f(x),\, \forall x \in \R^N\,,\]
where the sum is to be understood in the Minkowski sense.

Furthermore, if $ \bigcap_{i=1}^n \mathrm{ri}(\mathrm{dom}(f_i))\neq \emptyset$, then the equality holds for all $x\in \R^N$.
\end{thm}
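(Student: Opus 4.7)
The plan is to prove the two inclusions separately. The easy direction is the inclusion $\partial f_1(x)+\cdots+\partial f_n(x)\subset\partial f(x)$: if $y_i\in\partial f_i(x)$ for each $i$, then for every $z\in\R^d$ the subgradient inequality $f_i(z)-f_i(x)\geq\ps{y_i}{z-x}$ holds. Summing over $i\in\llbracket 1,n\rrbracket$ yields $f(z)-f(x)\geq\ps{y_1+\cdots+y_n}{z-x}$, so $y_1+\cdots+y_n\in\partial f(x)$. This direction requires no qualification hypothesis and no convexity beyond the one needed for the subgradient inequality to be preserved by summation.

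For the reverse inclusion under the qualification $\bigcap_{i=1}^n \mathrm{ri}(\mathrm{dom}(f_i))\neq\emptyset$, I would argue by induction on $n$, reducing everything to the case $n=2$. The induction step relies on two standard facts from convex analysis that I would either cite or prove as short lemmas: (a) the partial sum $g=f_1+\cdots+f_{n-1}$ is proper convex with $\mathrm{dom}(g)=\bigcap_{i<n}\mathrm{dom}(f_i)$; (b) when convex sets have a common relative interior point, the relative interior distributes over intersection, so $\mathrm{ri}(\mathrm{dom}(g))=\bigcap_{i<n}\mathrm{ri}(\mathrm{dom}(f_i))$. Combined with the assumption, this ensures $\mathrm{ri}(\mathrm{dom}(g))\cap\mathrm{ri}(\mathrm{dom}(f_n))\neq\emptyset$, which is the qualification needed to apply the case $n=2$ to $g$ and $f_n$.

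The core case $n=2$ I would handle by a separation argument in $\R^d\times\R$. Fix $y\in\partial(f_1+f_2)(x)$ and introduce the convex sets
\[
A=\{(z,s)\in\R^d\times\R\,:\,s>f_1(z)-f_1(x)\},\quad
B=\{(z,s)\in\R^d\times\R\,:\,s\leq\ps{y}{z-x}-f_2(z)+f_2(x)\}.
\]
The subgradient inequality $f_1(z)+f_2(z)\geq f_1(x)+f_2(x)+\ps{y}{z-x}$ for all $z$ is exactly the statement that $A\cap B=\emptyset$. Applying a Hahn--Banach-type separation theorem yields $(p,\alpha)\in\R^d\times\R$, not both zero, and a scalar $\beta$ with $\ps{p}{z}+\alpha s\geq\beta$ on $A$ and $\leq\beta$ on $B$; the unbounded upward shape of $A$ forces $\alpha\geq 0$. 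Once one can normalize $\alpha=1$, taking infima/suprema in $s$ and evaluating at $z=x$ fixes $\beta=\ps{p}{x}$, and the two inequalities read precisely $-p\in\partial f_1(x)$ and $y+p\in\partial f_2(x)$, giving the required decomposition $y=(-p)+(y+p)$.

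The main obstacle is exactly the step $\alpha>0$: a vertical separating hyperplane ($\alpha=0$) produces no usable subgradient decomposition. This is where the hypothesis $\mathrm{ri}(\mathrm{dom}(f_1))\cap\mathrm{ri}(\mathrm{dom}(f_2))\neq\emptyset$ is essential. If $\alpha=0$, then $\ps{p}{z}\geq\beta$ on the projection of $A$ onto $\R^d$, which contains $\mathrm{dom}(f_1)$, and $\ps{p}{z}\leq\beta$ on the projection of $B$, which contains $\mathrm{dom}(f_2)$; at a common relative interior point $z^\star$ of $\mathrm{dom}(f_1)\cap\mathrm{dom}(f_2)$ one gets $\ps{p}{z^\star}=\beta$, and the relative interior property then forces $p$ to vanish on $\mathrm{aff}(\mathrm{dom}(f_1)\cup\mathrm{dom}(f_2))-\{z^\star\}$, contradicting $(p,\alpha)\neq 0$ after a careful choice of ambient affine space. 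I expect this nondegeneracy argument, together with the distributivity of $\mathrm{ri}$ over intersections used in the induction, to be the delicate parts; the rest reduces to bookkeeping with the subgradient inequality.
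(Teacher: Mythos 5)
The paper does not actually prove this statement: it is quoted as Theorem 23.8 of Rockafellar's \emph{Convex Analysis} in the appendix of reminders, so there is no in-paper argument to compare yours against. Your sketch is the standard separation-based proof of the Moreau--Rockafellar sum rule and is essentially correct: the forward inclusion by summing subgradient inequalities (trivially true when some $\partial f_i(x)$ is empty), the reduction to $n=2$ via $\mathrm{ri}\bigl(\bigcap_i C_i\bigr)=\bigcap_i\mathrm{ri}(C_i)$ when the right-hand side is non-empty, and the separation of the strict epigraph of $f_1(\cdot)-f_1(x)$ from the hypograph of $z\mapsto\ps{y}{z-x}-f_2(z)+f_2(x)$ all go through, and you have correctly located the one delicate point, namely excluding a vertical separating hyperplane ($\alpha=0$). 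As written, though, that exclusion is not yet a contradiction: showing that $\ps{p}{\cdot}$ is constant on $\mathrm{dom}(f_1)$ and on $\mathrm{dom}(f_2)$ only forces $p$ to be orthogonal to the direction space of $\mathrm{aff}(\mathrm{dom}(f_1)\cup\mathrm{dom}(f_2))$, which is compatible with $p\neq 0$ when that affine hull is a proper subset of $\R^d$. The cleanest repair is to invoke \emph{proper} separation, which is available since $A\cap B=\emptyset$ implies $\mathrm{ri}(A)\cap\mathrm{ri}(B)=\emptyset$: if $\alpha=0$, your computation shows both $A$ and $B$ lie inside the separating hyperplane, contradicting properness. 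Alternatively, your idea of restricting to the ambient affine hull works, provided you check at the end that a subgradient of the restricted function extends, by zero on the orthogonal complement, to a subgradient on $\R^d$ (which it does, because $f_i\equiv+\infty$ off $\mathrm{dom}(f_i)\subset\mathrm{aff}(\mathrm{dom}(f_1)\cup\mathrm{dom}(f_2))$). For what it is worth, Rockafellar's own proof of Theorem 23.8 proceeds through directional derivatives and the conjugate of a sum rather than a direct separation in $\R^d\times\R$, so your route differs from the cited source, but it is entirely standard and, with the fix above, complete.
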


The following results are stated in a finite dimensional framework, which is sufficient for us, but they were proved in more general settings.

\begin{prop}[Proposition 1.9 in \cite{brezis1983analyse}]
\label{prop:fproperlsccvxthensameforcvxconjugate}
Let $f:\R^N\rightarrow]-\infty,+\infty]$. If $f$ is proper (i.e.\ $f\not\equiv +\infty$) lower semicontinuous and convex, then $f^*$ is also proper, lower semicontinuous and convex.
\end{prop}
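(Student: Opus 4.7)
My plan is to prove the three assertions (convex, lower semicontinuous, proper) about $f^*$ separately, with the last one being the only nontrivial step.

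Observe first that by its very definition $f^*(y) = \sup_{x \in \R^d} \bigl[\langle x, y\rangle - f(x)\bigr]$ can be written as $f^*(y) = \sup_{x \in \mathrm{dom}(f)} \varphi_x(y)$, where each $\varphi_x : y \mapsto \langle x, y\rangle - f(x)$ is affine, hence continuous and convex in $y$. A pointwise supremum of convex (resp.\ lower semicontinuous) functions is convex (resp.\ lower semicontinuous), so these two properties of $f^*$ follow for free, with no use of lower semicontinuity or convexity of $f$.

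Next I would establish that $f^* > -\infty$ everywhere. Since $f$ is proper, there exists $x_0 \in \mathrm{dom}(f)$, i.e.\ $f(x_0) < +\infty$. For any $y \in \R^d$, taking $x = x_0$ in the defining supremum gives
\[
f^*(y) \;\geq\; \langle x_0, y\rangle - f(x_0) \;>\; -\infty.
\]
This already uses properness of $f$, but not yet lower semicontinuity or convexity.

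The main step, which is the real content of the statement, is to show that $f^*$ is not identically $+\infty$, i.e.\ that $f$ admits a continuous affine minorant. This is where lower semicontinuity and convexity are both needed, and the standard tool is the Hahn--Banach geometric separation theorem applied to the epigraph $\mathrm{epi}(f) = \{(x,t) \in \R^d \times \R : f(x) \leq t\}$. By convexity and lower semicontinuity of $f$, $\mathrm{epi}(f)$ is a closed convex subset of $\R^{d+1}$, which is non-empty (since $f$ is proper, pick $x_0 \in \mathrm{dom}(f)$ so $(x_0, f(x_0)) \in \mathrm{epi}(f)$). I then choose $t_0 < f(x_0)$ so that the point $(x_0, t_0)$ is not in $\mathrm{epi}(f)$, and separate strictly: there exist $(y_0, \alpha) \in \R^d \times \R \setminus \{0\}$ and $\beta \in \R$ such that
\[
\langle y_0, x\rangle + \alpha t \;\geq\; \beta \;>\; \langle y_0, x_0\rangle + \alpha t_0 \qquad \forall (x,t) \in \mathrm{epi}(f).
\]
The main obstacle, and the one point that requires care, is to rule out $\alpha \leq 0$ and hence obtain a usable affine minorant. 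Since $\mathrm{epi}(f)$ contains $(x, t)$ whenever it contains $(x, s)$ with $s \leq t$, letting $t \to +\infty$ on the left forces $\alpha \geq 0$. If $\alpha = 0$, plugging in $(x_0, f(x_0)) \in \mathrm{epi}(f)$ gives $\langle y_0, x_0\rangle \geq \beta > \langle y_0, x_0\rangle$, a contradiction; hence $\alpha > 0$. Then, specializing the inequality to $(x, f(x))$ for $x \in \mathrm{dom}(f)$ and rearranging yields
\[
f(x) \;\geq\; \Bigl\langle -\tfrac{y_0}{\alpha},\, x\Bigr\rangle + \tfrac{\beta}{\alpha} \qquad \forall x \in \R^d,
\]
(the inequality being trivial when $f(x) = +\infty$). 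This is exactly the continuous affine minorant I was after, and it gives $f^*\!\left(-y_0/\alpha\right) \leq -\beta/\alpha < +\infty$, completing the proof that $f^*$ is proper.
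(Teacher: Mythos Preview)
Your proof is correct and is essentially the standard argument (the one given in Brezis, to which the paper defers). The paper itself does not supply a proof for this proposition: it is stated in the appendix as a reminder with a reference to \cite{brezis1983analyse}, so there is nothing to compare against beyond noting that your separation-of-the-epigraph argument is exactly the classical route.
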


\begin{thm}[\textbf{Fenchel-Moreau}, see Theorem 1.10 in \cite{brezis1983analyse}]
\label{thm:f**=f}
Let $f:\R^N\rightarrow]-\infty,+\infty]$ be a proper, lower semicontinuous and convex function. Then, $f^{**}=f$.
\end{thm}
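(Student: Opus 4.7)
The plan is to prove the two inequalities $f^{**}\leq f$ and $f\leq f^{**}$ separately, the first being essentially formal and the second requiring a geometric separation argument on the epigraph.

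First I would establish the easy inequality $f^{**}\leq f$. By Definition~\ref{appendix:defn:convex_conjugate}, for every $x,y\in\R^d$ one has $f^*(y)\geq \ps{x}{y}-f(x)$, which rearranges to $f(x)\geq \ps{x}{y}-f^*(y)$. Taking the supremum over $y\in\R^d$ on the right yields $f(x)\geq f^{**}(x)$ for every $x\in\R^d$, without using any properness, lower semicontinuity, or convexity. This part is a one-line consequence of the definition of the conjugate and does not depend on the hypotheses.

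Next I would tackle the reverse inequality, which is the substantive part. Fix $x_0\in\R^d$ and any real number $\alpha<f(x_0)$; the goal is to exhibit $y_0\in\R^d$ and $\beta\in\R$ such that
\begin{equation*}
\ps{x}{y_0}-\beta \leq f(x) \text{ for all } x\in\R^d, \qquad \ps{x_0}{y_0}-\beta > \alpha.
\end{equation*}
Any such affine minorant would give $f^*(y_0)\leq \beta$, hence $f^{**}(x_0)\geq \ps{x_0}{y_0}-f^*(y_0)\geq \ps{x_0}{y_0}-\beta>\alpha$; letting $\alpha\uparrow f(x_0)$ then gives $f^{**}(x_0)\geq f(x_0)$. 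To produce the affine minorant I would apply a geometric Hahn--Banach separation theorem to the epigraph $\mathrm{epi}(f)=\{(x,t)\in\R^d\times\R\mid f(x)\leq t\}$ and the point $(x_0,\alpha)$. Because $f$ is convex and lower semicontinuous, $\mathrm{epi}(f)$ is a closed convex subset of $\R^{d+1}$; because $\alpha<f(x_0)$, the point $(x_0,\alpha)$ does not belong to it. The strict separation theorem yields $(y,s)\in\R^{d+1}\setminus\{0\}$ and $c\in\R$ with $\ps{y}{x}+st\geq c > \ps{y}{x_0}+s\alpha$ for all $(x,t)\in\mathrm{epi}(f)$.

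The main obstacle is then to extract an \emph{affine} minorant from the separating hyperplane, which requires distinguishing the cases according to the sign of $s$. If $s>0$, one may let $t\to+\infty$ for $x\in\mathrm{dom}(f)$; rescaling shows $s=1$ can be assumed, and choosing $y_0=-y$, $\beta=-c$ gives the desired affine function. The case $s<0$ cannot occur, for letting $t\to+\infty$ would contradict the inequality. The delicate case is $s=0$, which can arise exactly when $x_0\notin\overline{\mathrm{dom}(f)}$: the separation only produces a vertical hyperplane $\ps{y}{x-x_0}\geq c-\ps{y}{x_0}>0$ on $\mathrm{dom}(f)$. Here I would use properness of $f$ to pick some $x_1\in\mathrm{dom}(f)$, apply the already-established case $s>0$ at $(x_1,\alpha_1)$ with $\alpha_1<f(x_1)$ to obtain a \emph{preliminary} affine minorant $\ell_0(x)=\ps{y_1}{x}-\beta_1\leq f$, and then for any $\lambda>0$ consider $\ell_\lambda=\ell_0+\lambda(\ps{y}{\,\cdot\,-x_0}-(c-\ps{y}{x_0}))$, which is still $\leq f$ on $\mathrm{dom}(f)$ (hence everywhere) and whose value at $x_0$ can be made arbitrarily large as $\lambda\to+\infty$, in particular larger than $\alpha$. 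Once this case distinction is complete, the two inequalities combine to give $f^{**}=f$.
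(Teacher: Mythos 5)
The paper offers no proof of this statement (it is quoted directly from Brézis), so there is nothing to compare against; what you give is the standard epigraph-separation proof, and its overall architecture — the formal inequality $f^{**}\leq f$ from Fenchel--Young, strict separation of $(x_0,\alpha)$ from the closed convex set $\mathrm{epi}(f)$, the trichotomy on the sign of $s$, and the tilting trick in the vertical case — is correct. The first inequality and the cases $s>0$, $s<0$ are fine (note only that ruling out $s<0$ and producing the preliminary minorant both use properness, i.e.\ $\mathrm{dom}(f)\neq\emptyset$, and that when you separate at $(x_1,\alpha_1)$ with $x_1\in\mathrm{dom}(f)$ you should say explicitly why $s=0$ cannot occur there: it would give $\ps{y}{x_1}\geq c>\ps{y}{x_1}$).

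However, the concluding formula in the case $s=0$ has a sign error that makes the step fail as written. Your separation gives $\ps{y}{x}\geq c$ for all $x\in\mathrm{dom}(f)$ and $\ps{y}{x_0}<c$, so the quantity $\ps{y}{x-x_0}-(c-\ps{y}{x_0})=\ps{y}{x}-c$ is \emph{nonnegative} on $\mathrm{dom}(f)$ and \emph{negative} at $x_0$. Hence $\ell_\lambda=\ell_0+\lambda\left(\ps{y}{\cdot-x_0}-(c-\ps{y}{x_0})\right)$ satisfies $\ell_\lambda\geq\ell_0$ on $\mathrm{dom}(f)$ — so it is not in general a minorant of $f$ — and $\ell_\lambda(x_0)=\ell_0(x_0)-\lambda\left(c-\ps{y}{x_0}\right)\rightarrow-\infty$ as $\lambda\rightarrow+\infty$, the opposite of what you need. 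The repair is to flip the sign of the perturbation: take $\ell_\lambda=\ell_0+\lambda\left(c-\ps{y}{\cdot}\right)$, which is $\leq\ell_0\leq f$ on $\mathrm{dom}(f)$ (hence everywhere, since $f=+\infty$ elsewhere) and has $\ell_\lambda(x_0)=\ell_0(x_0)+\lambda\left(c-\ps{y}{x_0}\right)\rightarrow+\infty$, in particular exceeding $\alpha$ for $\lambda$ large. With that correction the argument is complete.
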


\begin{thm}[\textbf{Fenchel-Rockafellar}, \cite{rockafellar1967duality}]
\label{thm:fenchel-rockafellar}
Let $T:\R^N \rightarrow \R^m$ be a linear map and $T^\dag:\R^m \rightarrow \R^N$ its adjoint. Let $f:\R^N\rightarrow]-\infty,+\infty]$ and $g:\R^m \rightarrow]-\infty,+\infty]$ be proper, lower semicontinuous and convex functions. If there exists $x\in \mathrm{dom}(f)$ such that $g$ is continuous at $Tx$, then
\[
\sup_{x\in\R^N}\; -f(-x)-g(Tx)=\min_{y\in\R^m}\; f^*(T^\dag y)+g^*(y)\,,
\]
i.e.\ the right-hand side is attained.

Moreover, $x$ is a maximizer of the left-hand side if and only if there exists a minimizer $y$ of the right-hand side, such that $-x\in \partial f^*(T^\dag y)$ and $y\in \partial g(Tx)$.
    
\end{thm}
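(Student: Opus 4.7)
The plan is to derive strong duality and dual attainment via the classical perturbation technique combined with the Fenchel--Moreau biconjugate theorem, using the qualification hypothesis to guarantee regularity of the value function at the origin.

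First I would establish weak duality directly from the Fenchel--Young inequality: for any $x \in \R^d$ and $y \in \R^m$,
\[
f(-x) + f^*(T^\dag y) \geq -\ps{Tx}{y}, \qquad g(Tx) + g^*(y) \geq \ps{Tx}{y}.
\]
Adding and rearranging gives $-f(-x) - g(Tx) \leq f^*(T^\dag y) + g^*(y)$, so taking $\sup$ in $x$ on the left and $\inf$ in $y$ on the right yields the weak inequality between the two values of interest.

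Next I would introduce the perturbation function $v : \R^m \to [-\infty, +\infty]$ by $v(u) = \myinf{x}{\R^d}\, f(-x) + g(Tx + u)$, so that the primal value equals $-v(0)$. A direct manipulation (swap the suprema and change variables $u \leftarrow Tx + u$ in the inner sup) shows $v^*(y) = f^*(T^\dag y) + g^*(y)$, so the dual value equals $\myinf{y}{\R^m}\, v^*(y) = -v^{**}(0)$ and its set of minimizers is exactly $\partial v(0)$. Strong duality and dual attainment therefore both reduce to establishing that $v$ is proper, convex and subdifferentiable at $0$. Convexity of $v$ follows from joint convexity of $(x,u) \mapsto f(-x) + g(Tx + u)$ and marginalization over $x$. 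Choosing $x_0 \in \mathrm{dom}(f)$ with $g$ continuous at $Tx_0$, the pointwise bound $v(u) \leq f(-x_0) + g(Tx_0 + u)$ shows $v$ is finite and bounded above on a neighborhood of $0$. The classical convex-analysis lemma that a proper convex function bounded above in a neighborhood of a point is continuous there then yields both $v(0) = v^{**}(0)$ (strong duality via Fenchel--Moreau) and $\partial v(0) \neq \emptyset$ (dual attainment).

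Finally, to obtain the characterization of optimal pairs, I would read off the equality cases of the Fenchel--Young inequalities used in weak duality: if $x$ attains the primal sup and $y$ attains the dual inf with equal values, summing forces both inequalities to be equalities, which is equivalent to $T^\dag y \in \partial f(-x)$ and $y \in \partial g(Tx)$. By the Fenchel--Moreau identity $f = f^{**}$ the first is equivalent to $-x \in \partial f^*(T^\dag y)$, and the converse is immediate by reading the equalities backwards. The genuinely delicate step throughout is the continuity of $v$ at $0$: without the qualification hypothesis, the upper bound on $v$ can fail, propagating $-\infty$ values via convexity becomes possible, and both strong duality and dual attainment can break.
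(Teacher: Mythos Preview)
The paper does not actually prove this theorem: it is stated in Appendix~\ref{appendix:convex_analysis} as a classical result with a citation to Rockafellar~\cite{rockafellar1967duality}, and is then invoked as a black box in the proofs of Theorem~\ref{thm:dualityKLprojnu}, Proposition~\ref{prop:fixedpoint}, and elsewhere. There is therefore no ``paper's own proof'' to compare against.

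That said, your argument is the standard perturbation proof and is essentially correct. The computation of $v^*$, the identification of the dual value with $-v^{**}(0)$, and the use of the qualification hypothesis to bound $v$ from above near the origin are all sound; the equality cases of Fenchel--Young give the optimality characterization as you describe. Two minor points you gloss over: (i) you should handle the degenerate case $v(0)=-\infty$ separately (then $v^*\equiv+\infty$ by your bound and convexity, so both sides equal $+\infty$ and the statement is vacuous), since otherwise the ``bounded above $\Rightarrow$ continuous'' lemma for convex functions does not directly apply; (ii) in the optimality-conditions paragraph, the ``only if'' direction relies on the dual attainment already established: given a primal maximizer $x$, you pick \emph{any} dual minimizer $y$ (whose existence you proved) and then the two Fenchel--Young equalities are forced by strong duality. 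With those clarifications the proof is complete.
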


\section{Dykstra's Algorithm}
\label{appendix:dykstra}

The solution $\Mbf_\varepsilon$ of~\eqref{optproblem:KLprojnu} is also called the KL-projection of $\Gbf$ onto the convex set $\Pi = \bigcap_{r=1}^{R-1}\left\{\Mbf \in \R^{{|\Theta|^m}\times {|\Theta|^m}}_+\,|\, \Mbf\indic_{|\Theta|^m} \in C_r\right\} \cap\left\{\Mbf \in \R^{{|\Theta|^m}\times {|\Theta|^m}}_+\,|\, \Mbf^\top\indic_{|\Theta|^m} \in C_{R}\right\}$. It is well-known (see \cite{benamou2015iterative}) that $\Mbf_\varepsilon$ can be approximated using Dykstra's algorithm \cite{dykstra1983algorithm}, extended to the framework of Bregman divergences (KL being one of them). For $r\in \llbracket 1, R\rrbracket$, and for $\Xbf\in \R^{{|\Theta|^m}\times {|\Theta|^m}}$ such that $\Xbf>0_{{|\Theta|^m}\times {|\Theta|^m}}$ (pointwise), we introduce the so-called ``proximal operators":
\begin{equation}
\label{eq:proxmatrix}
P^{\mathrm{KL}}_r(\Xbf)=\left\{\begin{array}{l}
      \argmin_{\Mbf\indic_{|\Theta|^m}\in C_r}\, \mathrm{KL}(\Mbf|\Xbf)\text{ if }r\in \llbracket 1, R-1\rrbracket\\\\
     \argmin_{\Mbf^\top\indic_{|\Theta|^m}\in C_{R}}\, \mathrm{KL}(\Mbf|\Xbf)\text{ if }r=R
\end{array}\right.\,.
\end{equation}
Let $\Xbf_0 = \Gbf$, $\Qbf_{0,1}=\cdots=\Qbf_{0,R}=\indic_{{|\Theta|^m}\times {|\Theta|^m}}$ and for all $n\in \N^*$
\begin{align}
    &\Xbf_{n,r} = P_{r}^{\mathrm{KL}}(\Xbf_{n,r-1}\odot \Qbf_{n-1,r})\;, 1\leq r \leq R\,,\label{eq:Xdykstra}\\
    &\Qbf_{n,r} = \Qbf_{n-1,r}\odot \Xbf_{n,r-1}\odiv \Xbf_{n,r}\;, 1\leq r \leq R\,,\label{eq:qdykstra}
\end{align}
with the conventions $\Xbf_{n,0}=\Xbf_{n-1,R}$ and $\Xbf_{0,r}=\Xbf_0=\Gbf$ for all $r\in\llbracket 1, R\rrbracket$.

Under the constraint qualification $\Pi\cap \mathrm{int}\left(\R^{{|\Theta|^m}\times {|\Theta|^m}}_+\right)\neq \emptyset$, which is verified here (see the proof of Proposition~\ref{prop:dualKLprojnuattained}), we have $\gamma_{k} \arrtoinf{k} \Mbf_\varepsilon$ where
\begin{equation}
\label{eq:gamma_k}
\gamma_k = \left\{\begin{array}{l}
     \Xbf_{n,r}\text{ if }\exists (n,r)\in \N^* \times \llbracket 1, R-1\rrbracket\text{ such that }k=(n-1)R+r\\\\
      \Xbf_{n,R}\text{ if }\exists n \in \N\text{ such that } k=nR
\end{array}\right.
\end{equation}
(see \cite{bauschke2000dykstras,censor1998dykstra} for a proof).

An immediate consequence is the following Corollary.
\begin{coro}
\label{coro:cvgdykstra}
    For any sequence $(r_k)_{k\in\N}\in \llbracket 1, R\rrbracket^\N$, we have
    \[
    \Xbf_{k,r_k}\arrtoinf{k}\Mbf_\varepsilon\,.
    \]
\end{coro}
\begin{proof}
    For any such sequence, $(\Xbf_{k,r_k})_{k\in\N}$ is a subsequence of $(\gamma_k)_{k\in \N}$ defined by~\eqref{eq:gamma_k}, whose limit is $\Mbf_{\varepsilon}$.
\end{proof}

The next lemma provides explicit expressions for the proximal operators~\eqref{eq:proxmatrix} and is used in several parts of the paper.

\begin{lemma}

\label{lemma:explicitformsprox1}
For all $\Xbf\in \R^{{|\Theta|^m}\times {|\Theta|^m}}$ such that $\Xbf>0_{{|\Theta|^m}\times {|\Theta|^m}}$ (pointwise), one has
\[
P^{\mathrm{KL}}_r(\Xbf)=\left\{\begin{array}{l}
     \diag{e^{\lambda_r\Abf_r}}\Xbf \text{ if }r\in \llbracket 1, R-2 \rrbracket\\\\
      \diag{\max\left(\nu^-\odiv \Xbf\indic_{|\Theta|^m},\indic_{|\Theta|^m}\right)}\Xbf\text{ if }r=R-1\\\\
     \Xbf\diag{\nu^+\odiv \Xbf^\top\indic_{|\Theta|^m}}\text{ if } r=R
\end{array}\right.\,,
\]
where $\lambda_r$ is the unique root of
\begin{equation}
\label{eq:deff_r}
    f_r:\lambda \in \R \mapsto \ps{e^{\lambda \Abf_r}}{\Abf_r \odot \Xbf\indic_{|\Theta|^m}}-(b+\Abf\nu^-)_r\,.
\end{equation}
\end{lemma}
\begin{proof}
The form of the minimizers can be deduced by introducing Lagrange multipliers (see \cite{bregman1967relaxation} eq.\ (2.34) for $r\in \llbracket 1, R-2 \rrbracket$, \cite{benamou2015iterative} Proposition 5 for $r=R-1$ and \cite{benamou2015iterative} Proposition 1, eq.\ (3.3) for $r=R$).

Thus, we only prove that for $r\in \llbracket 1, R-2 \rrbracket$, there is only one root $\lambda_r\in \R$ of $f_r$. First, $f_r$ is continuously differentiable. Moreover, $\Abf_r\neq 0_{|\Theta|^m}$ and $\Xbf\indic_{|\Theta|^m}>0_{|\Theta|^m}$. Hence, $f_r$ is increasing, since $f_r'>0$. Therefore, $\lambda_r$ exists and is unique if and only if $0\in \mathrm{Im}(f_r)=]\mylim{\lambda}{-\infty}\; f_r(\lambda),\,\mylim{\lambda}{+\infty}\; f_r(\lambda)[$. There are three possibilities:
\[
\mathrm{Im}(f_r)=\left\{\begin{array}{l}
     ]-(b+\Abf\nu^-)_r,+\infty[\;\text{ if }\min_{p\notin \mathcal{P}_0^r}\; \Abf_{rp}>0\\\\
     ]-\infty,-(b+\Abf\nu^-)_r[\;\text{ if }\max_{p\notin \mathcal{P}_0^r}\; \Abf_{rp}<0\\\\
     \R \text{ otherwise}
\end{array}\right.\,,
\]
where $\mathcal{P}_0^r=\left\{ p\in \llbracket 1,{|\Theta|^m}\rrbracket\,|\, \Abf_{rp}=0\right\}$. We only need to discuss the first two cases.\newline

\noindent \underline{Case 1}: $\min_{p\notin \mathcal{P}_0^r}\; \Abf_{rp}>0$\newline

This occurs for the rows of $\Abf$ corresponding either to the~\eqref{martdef:massconstraint} constraint, the~\eqref{martdef:centeringconstraint} constraint, or to some of the~\eqref{martdef:martingalityconstraint} constraints (for $p_i=1$ i.e.\ $k_{p_i}=k_1=0$). If $r$ corresponds to~\eqref{martdef:massconstraint} or~\eqref{martdef:centeringconstraint}, then $b_r=1$, otherwise $b_r=0$. In any case, one has $-(b+\Abf\nu^-)_r\leq -(\Abf\nu^-)_r=-\sum_{p\notin \mathcal{P}_0^r} \Abf_{rp}\nu^-_p<0$, since $\nu^->0_{|\Theta|^m}$ and $\min_{p\notin \mathcal{P}_0^r}\; \Abf_{rp}>0$, by assumption. Hence, $0$ belongs to $\mathrm{Im}(f_r)$.\newline

\noindent \underline{Case 2}: $\max_{p\notin \mathcal{P}_0^r}\; \Abf_{rp}<0$\newline

This occurs for some of~\eqref{martdef:martingalityconstraint} constraints (for $p_i=|\Theta|$ i.e.\ $k_{p_i}=k_{\mathrm{max}}$). If $r$ corresponds to such a case, then $b_r=0$. Hence, $-(b+\Abf\nu^-)_r=-(\Abf\nu^-)_r=-\sum_{p\notin \mathcal{P}_0^r} \Abf_{rp}\nu^-_p>0$ because $\max_{p\notin \mathcal{P}_0^r}\; \Abf_{rp}<0$ and $\nu^->0_{|\Theta|^m}$. Again, $0$ belongs to $\mathrm{Im}(f_r)$.

\end{proof}

\section{Choice of the stopping criterion}
\label{appendix:proofstopingcriterion}

We begin this section with several lemmas that will be useful for proving Proposition~\ref{prop:justificationstoppingcriterion}.

\subsection{Useful lemmas}
\begin{lemma}
\label{lemma:ridom}
For $r\in \llbracket 1, R\rrbracket$, we have
\[
\mathrm{ri}(\mathrm{dom}(\iota_{C_r}^*))=\left\{\begin{array}{l}
     \mathrm{span}(\Abf_r)\text{ if } r\in \llbracket 1, R-2\rrbracket \\\\
     \{x<0_{|\Theta|^m}\}\text{ if } r=R-1\\\\
     \R^{|\Theta|^m} \text{ if }r=R
\end{array}\right.\,.
\]
\end{lemma}

\begin{proof}
The case $r=R$ is trivial since, by definition of the convex conjugate, $\iota_{C_R}^*(y)=\ps{y}{\nu^+}$.

If $r\in \llbracket 1,R-2\rrbracket$, by definition, $C_r=\left\{x\in \R^{|\Theta|^m}\,|\,\ps{x}{\Abf_r}= (b+\Abf\nu^-)_r\right\}$. If one decomposes $x\in C_r$ into $\mathrm{span}(\Abf_r)\oplus \mathrm{span}(\Abf_r)^\perp$, we easily get that
\[
C_r = \left\{\frac{(b+\Abf\nu^-)_r}{\vert\Abf_r\vert_2^2}\Abf_r + x^\perp \, | \, x^\perp\in \mathrm{span}(\Abf_r)^\perp\right\}\,.
\]
Then, from the definition of the convex conjugate and again by decomposing $y\in \R^{|\Theta|^m}$ into $\mathrm{span}(\Abf_r)\oplus \mathrm{span}(\Abf_r)^\perp$, $y=\lambda \Abf_r + y^\perp$, we obtain
\[
\iota_{C_r}^*(y)=\lambda(b+\Abf\nu^-)_r+\sup_{x^\perp\in\mathrm{span}(\Abf_r)^\perp}\, \ps{x^\perp}{y^\perp}\,,
\]
which is $+\infty$ whenever $y^\perp\neq 0_{|\Theta|^m}$ and $\lambda(b+\Abf\nu^-)_r<+\infty$ otherwise. Thus, $\mathrm{dom}(\iota_{C_r}^*)=\mathrm{span}(\Abf_r)$ and $\mathrm{ri}(\mathrm{dom}(\iota_{C_r}^*))=\mathrm{span}(\Abf_r)$, since $\mathrm{aff}(\mathrm{span}(\Abf_r))=\mathrm{span}(\Abf_r)$.

Regarding the case $r=R-1$, observe that for any $t>0$ and any $p\in \llbracket 1, {|\Theta|^m}\rrbracket$, $\nu^- +te_p\in C_{R-1}$, where $e_p$ is the $p$-th vector of the canonical basis of $\R^{|\Theta|^m}$. Hence, if the $p$-th coordinate of $y\in \R^{|\Theta|^m}$ is positive, $\ps{\nu^-+te_p}{y}\underset{t\rightarrow +\infty}{\sim} ty_p\arrtoinf{t}+\infty$. Consequently, $\iota_{C_{R-1}}^*(y)=+\infty$ if $y\notin \R^{|\Theta|^m}_-$. Conversely, if $y\in \R^{|\Theta|^m}_-$, it is not hard to verify that $\ps{x}{y}\leq \ps{\nu^-}{y}$, for any $x\in C_{R-1}$. Thus, $\iota_{C_{R-1}}^*(y)=\ps{\nu^-}{y}<+\infty$ and $\mathrm{dom}(\iota_{C_{R-1}}^*)=\R^{|\Theta|^m}_-$. We conclude by noticing that $\mathrm{aff}(\R^{|\Theta|^m}_-)=\R^{|\Theta|^m}$, so that $\mathrm{ri}(\R^{|\Theta|^m}_-)=\mathrm{int}(\R^{|\Theta|^m}_-)=\{x<0_{|\Theta|^m}\}$.

\end{proof}

\begin{lemma}
\label{lemma:subdiffdualfunctional}
Let $ f_r = \iota_{C_r}^*(-\,\cdot)$, $ f=\sum_{r=1}^R f_r$ and $ g:\left(\R^{|\Theta|^m}\right)^R\ni u \mapsto \varepsilon\ps{e^{\oplus u/\varepsilon}-\indic_{{|\Theta|^m}\times {|\Theta|^m}}}{\Gbf}_F$. Then, 
\[
\partial(f+g)(u)=\partial f_1(u^1)\times\cdots\times\partial f_R(u^R)+\{\nabla g(u)\}\,, \forall u \in \left(\R^{|\Theta|^m}\right)^R\,.
\]
\end{lemma}

\begin{proof}
First, $f$ and $g$ are proper convex functions on $\left(\R^{|\Theta|^m}\right)^R$. From the separability of $f$, we have $\mathrm{dom}(f)=\mathrm{dom}(f_1)\times \cdots \times \mathrm{dom}(f_R)$ and each domain being convex (since the $f_r$'s are), one also has $\mathrm{ri}(\mathrm{dom}(f))=\mathrm{ri}(\mathrm{dom}(f_1))\times \cdots \times \mathrm{ri}(\mathrm{dom}(f_R))$, which is non-empty, from Lemma~\ref{lemma:ridom}. $g$ is continuously differentiable on $\left(\R^{|\Theta|^m}\right)^R$ so, in particular, $\mathrm{ri}(\mathrm{dom}(f))\cap\mathrm{ri}(\mathrm{dom}(g))\neq \emptyset$. Hence, Theorem~\ref{th:sumrulesubdifferentials} applies: $\partial(f+g)=\partial f + \partial g$. Then, using once more the separability of $f$, one obtains $\partial f = \partial f_1 \times \cdots \times \partial f_R$. We conclude using (again) the fact that $g$ is everywhere differentiable.
\end{proof}

\begin{prop}[\textbf{Fixed point of Algorithm~\ref{algo:multiconstrainedsinkhorn}}]
\label{prop:fixedpoint}
Let $a^1,\cdots,a^R$ be positive vectors in $\R^{|\Theta|^m}$, $a:=\left(a^1,\cdots,a^R\right)\in \left(\R^{|\Theta|^m}\right)^R$ and $\varepsilon>0$. The following statements are equivalent:
\begin{itemize}
    \item[i)] $\varepsilon \log(a)$ is a solution of~\eqref{optproblem:dualKLprojnu},
    \item[ii)] $(a^r)_{1\leq r\leq R}$ is a fixed point of Algorithm~\ref{algo:multiconstrainedsinkhorn}, meaning that 
\[
a^r = \mathrm{prox}_r\left(\mathcal{G}^ra^{-r}\right)\odiv\mathcal{G}^ra^{-r}\text{ for all }r\in \llbracket 1, R\rrbracket\,,
\]
where $a^{-r}=\left(a^1,\cdots,a^{r-1},a^{r+1}\cdots,a^{R}\right)\in \left(\R^{|\Theta|^m}\right)^{R-1}$.
\end{itemize}
\end{prop}

\begin{proof}
If $f$ and $g$ are the functions defined in Lemma~\ref{lemma:subdiffdualfunctional}, we see that $-(f+g)$ is the dual functional in~\eqref{optproblem:dualKLprojnu}. From the observation made at the beginning of Section~\ref{section:alternatingmaximization}, $\nabla g(u) = \left(\nabla g_1(u^1),\cdots, \nabla g_R(u^R)\right)$, where $g_r:\R^{|\Theta|^m} \ni x \mapsto \varepsilon\ps{e^{x/\varepsilon}-\indic_{|\Theta|^m}}{\mathcal{G}^re^{u^{-r}/\varepsilon}}$. Then,
\begin{align*}
    i)&\equivalent \; (0_{|\Theta|^m},\cdots,0_{|\Theta|^m})\in \partial(f+g)\left(\varepsilon\log\left(a\right)\right)\\ 
    &\equivalent 0_{|\Theta|^m} \in \partial f_r\left(\varepsilon\log\left(a^r\right)\right)+\left\{\nabla g_r\left(\varepsilon\log\left(a^r\right)\right)\right\},\text{ for all }r\in \llbracket 1, R\rrbracket\\
    &\equivalent \varepsilon\log\left(a^r\right) = \argmax_{x\in\R^{|\Theta|^m}}\; -\iota_{C_{r}}^*(-x)-\varepsilon\ps{e^{x/\varepsilon}-\indic_{|\Theta|^m}}{\mathcal{G}^{r}a^{-r}},\text{ for all }r\in \llbracket 1, R\rrbracket\\
    &\equivalent ii)\,,
\end{align*}
where the first and third equivalences are just optimality conditions, the second follows from Lemma~\ref{lemma:subdiffdualfunctional} and the last by applying Theorem~\ref{thm:fenchel-rockafellar}.
\end{proof}

\subsection{Proof of Proposition~\ref{prop:justificationstoppingcriterion}}

The implication i)$\implique$ii) is immediate, so we focus on the converse. Let $n_0\in \N^*$. During the first $R-1$ sub-steps of the $n_0$-th iteration of Algorithm~\ref{algo:multiconstrainedsinkhorn}, the vectors $a^1_{n_0},\cdots,a^{R-1}_{n_0}$ are produced. For convenience, we define
\[
z:=(a^1_{n_0},\cdots,a^{R-1}_{n_0},a^R_{n_0-1})\in \left(\R^{|\Theta|^m}\right)^R
\]
which is the concatenation of these vectors together with $a^R_{n_0-1}$, the vector obtained at the $R$-th sub-step of the $(n_0-1)$-th iteration. This notation avoids confusion between 
\[
z^{-r}=(a^1_{n_0},\cdots,a^{r-1}_{n_0},a^{r+1}_{n_0},\cdots,a^{R-1}_{n_0},a^R_{n_0-1})
\]
and 
\[
a^{-r}_{n_0}=(a_{n_0}^1,\cdots,a_{n_0}^{r-1},a_{n_0-1}^{r+1}\cdots,a_{n_0-1}^{R})\,,
\]
with the notation of Algorithm~\ref{algo:multiconstrainedsinkhorn}. 

If $\Mbf_{n_0,R-1}\in \Pi$, then $P^{\mathrm{KL}}_r(\Mbf_{n_0,R-1})=\Mbf_{n_0,R-1}$ for any $r\in\llbracket 1, R\rrbracket$ (see~\eqref{eq:proxmatrix} for the definition of $P^{\mathrm{KL}}_r$), since $\mathrm{KL}(\Mbf|\Xbf)\geq 0$ with equality if and only if $\Mbf=\Xbf$. On the other hand, from Lemma~\ref{lemma:explicitformsprox1} and $r \in \llbracket 1, R-2\rrbracket$, $P^{\mathrm{KL}}_r(\Mbf_{n_0,R-1}) = \diag{e^{\beta_r \Abf_r}}\Mbf_{n_0,R-1}$, where $\beta_r$ is the unique real number such that
\begin{equation}
\label{eq:proofstoppingcriterion1}
\ps{e^{\beta_r \Abf_r}}{\Abf_r \odot \Mbf_{n_0,R-1}\indic_{|\Theta|^m}}=(b+\Abf\nu^-)_r\,.
\end{equation}
As $P^{\mathrm{KL}}_r(\Mbf_{n_0,R-1})=\Mbf_{n_0,R-1}$, the only possibility is $\beta_r=0$, because $\Abf_r\neq 0_{|\Theta|^m}$ and $\Mbf_{n_0,R-1}>0_{{|\Theta|^m}\times {|\Theta|^m}}$ (pointwise). However, $\Mbf_{n_0,R-1}\indic_{|\Theta|^m} = a^r_{n_0} \odot \mathcal{G}^rz^{-r}$ and since $a^r_{n_0}$ is produced by the $r$-th sub-step of Algorithm~\ref{algo:multiconstrainedsinkhorn}, there exists $\lambda_r\in \R$, such that $a^r_{n_0} = e^{\lambda_r \Abf_r}$ (see Lemma~\ref{lemma:explicitformsprox2} for $r\in \llbracket 1, R-2\rrbracket$). Hence, 
\[
\ps{e^{\beta_r \Abf_r}}{\Abf_r \odot \Mbf_{n_0,R-1}\indic_{|\Theta|^m}}=\ps{\indic_{|\Theta|^m}}{\Abf_r \odot e^{\lambda_r \Abf_r}\odot\mathcal{G}^rz^{-r}}=\ps{e^{\lambda_r \Abf_r}}{\Abf_r \odot \mathcal{G}^rz^{-r}}=(b+\Abf\nu^-)_r\,.
\]
where the last equality follows from~\eqref{eq:proofstoppingcriterion1}. Thus, from Lemma~\ref{lemma:explicitformsprox2}, we obtain
\begin{equation}
\label{eq:proofstoppingcriterion2}
    \mathrm{prox}_r(\mathcal{G}^rz^{-r})\odiv \mathcal{G}^rz^{-r} = e^{\lambda_r \Abf_r} = a^r_{n_0}, 1\leq r\leq R-2\,,
\end{equation}
where $\mathrm{prox}_r$ is defined by~\eqref{eq:proxvector}. Then, by definition of the $(R-1)$-th sub-step of the $n_0$-th iteration of Algorithm~\ref{algo:multiconstrainedsinkhorn}:
\[
a_{n_0}^{R-1}=\mathrm{prox}_{R-1}(\mathcal{G}^{R-1}a_{n_0}^{-(R-1)})\odiv \mathcal{G}^{R-1}a_{n_0}^{-(R-1)}\,.
\]
However, with our notations, $\mathcal{G}^{R-1}a_{n_0}^{-(R-1)}=\mathcal{G}^{R-1}z^{-(R-1)}$. In particular, we have
\begin{equation}
\label{eq:proofstoppingcriterion3}
    a^{R-1}_{n_0}=\mathrm{prox}_{R-1}(\mathcal{G}^{R-1}z^{-(R-1)})\odiv \mathcal{G}^{R-1}z^{-(R-1)}\,.
\end{equation}
Finally, $\Mbf_{n_0, R-1}\in \Pi$ implies $\Mbf_{n_0, R-1}^\top\indic_{|\Theta|^m} = \nu^+$. This can be rewritten using the operator $\mathcal{G}^{R}$ as $a^R_{n_0-1}\odot \mathcal{G}^{R}z^{-R} = \nu^+$, which gives
\begin{equation}
\label{eq:proofstoppingcriterion4}
    a^R_{n_0-1} = \nu^+ \odiv \mathcal{G}^{R}z^{-R} = \mathrm{prox}_{R}(\mathcal{G}^{R}z^{-R})\odiv \mathcal{G}^{R}z^{-R}\,.
\end{equation}
From the equations~\eqref{eq:proofstoppingcriterion2},~\eqref{eq:proofstoppingcriterion3} and~\eqref{eq:proofstoppingcriterion4}, we conclude that $z=(a^1_{n_0},\cdots,a^{R-1}_{n_0},a^R_{n_0-1})$ is a fixed point of Algorithm~\ref{algo:multiconstrainedsinkhorn}. Using the equivalence in Proposition~\ref{prop:fixedpoint}, the definition of $\Mbf_{n_0,R-1}$~\eqref{eq:defMnr} and the optimality condition~\eqref{optimalityconditions2} in Theorem~\ref{thm:dualityKLprojnu}, we conclude that $\Mbf_{n_0,R-1}$ is the solution of~\eqref{optproblem:KLprojnu}.

\section{Technical Lemmas for Proposition~\ref{prop:convergence_sinkhorn}}
\label{appendix:lemmasforsinkhorncv}

Lemma~\ref{lemma:explicitformsprox2} below provides explicit expressions for the proximal operators~\eqref{eq:proxvector} and is used in several parts of the paper. Lemma~\ref{lemma:inductionstep} gives the induction step in the proof of Proposition~\ref{prop:convergence_sinkhorn}.

\begin{lemma}
\label{lemma:explicitformsprox2}
For $x\in \R^{|\Theta|^m}$ such that $x>0_{|\Theta|^m}$, one has:
\[
\mathrm{prox}_r(x)=\left\{\begin{array}{l}
     x\odot e^{\Tilde{\lambda}_r \Abf_r} \text{ for }r\in \llbracket 1, R-2 \rrbracket\\\\
     \max\left(x,\nu^-\right) \text{ if }r=R-1\\\\
     \nu^+\text{ if } r=R
\end{array}\right.\,,
\]
where $\Tilde{\lambda}_r$ is the unique root of
\begin{equation}
\label{eq:defg_r}
g_r:\Tilde{\lambda} \in \R \mapsto \ps{e^{\Tilde{\lambda} \Abf_r}}{\Abf_r \odot x}-\left(b+\Abf\nu^-\right)_r\,.
\end{equation}
\end{lemma}

\begin{proof}
The proof is essentially the same as the one of Lemma~\ref{lemma:explicitformsprox1}.
\end{proof}

\begin{lemma}
\label{lemma:inductionstep}
Let $n\in \N$. Suppose that $\Mbf_{n,R}=\Xbf_{n,R}$ and for all $r\in \llbracket 1,R\rrbracket$,
\[
\Qbf_{n,r}=\left\{\begin{array}{l}
    \diag{\indic_{|\Theta|^m} \odiv a_{n}^{r}}\indic_{{|\Theta|^m}\times {|\Theta|^m}}\text{ if }r \in \llbracket 1, R-1\rrbracket \\\\
    \indic_{{|\Theta|^m}\times {|\Theta|^m}}\,\diag{\indic_{|\Theta|^m} \odiv a_{n}^{R}} \text{ if }r=R
\end{array}\right.\,.
\]
Then, for all $r\in \llbracket 1, R\rrbracket$, one has $\Mbf_{n+1,r}=\Xbf_{n+1,r}$ and 
\[
\Qbf_{n+1,r}=\left\{\begin{array}{l}
    \diag{\indic_{|\Theta|^m} \odiv a_{n+1}^{r}}\indic_{{|\Theta|^m}\times {|\Theta|^m}}\text{ if }r\in \llbracket 1, R-1\rrbracket \\\\
    \indic_{{|\Theta|^m}\times {|\Theta|^m}}\,\diag{\indic_{|\Theta|^m} \odiv a_{n+1}^{R}} \text{ if }r=R
\end{array}\right.\,.
\]
\end{lemma}

\begin{proof}
In this proof, we will make extensive use of Lemmas~\ref{lemma:explicitformsprox1} and~\ref{lemma:explicitformsprox2}. We recall that $(\Xbf_{n,r})_{(n,r)\in \N\times \llbracket 1, R\rrbracket}$ and $(\Qbf_{n,r})_{(n,r)\in \N\times \llbracket 1, R\rrbracket}$ are the iterates produced by Dykstra's algorithm, which is described in Appendix~\ref{appendix:dykstra} (see equations~\eqref{eq:Xdykstra} and~\eqref{eq:qdykstra} for more details). The matrices $(\Mbf_{n,r})_{(n,r)\in \N\times \llbracket 1, R\rrbracket}$ are defined by~\eqref{eq:defMnr} and the dual scalings $(a_n^r)_{(n,r)\in \N\times \llbracket 1, R\rrbracket}$ by Algorithm~\ref{algo:multiconstrainedsinkhorn}. The operators $(\mathcal{G}^r)_{r\in \llbracket 1, R \rrbracket}$ are defined by~\eqref{eq:defoperatorG}.\newline

We start by proving $\Mbf_{n+1,r}=\Xbf_{n+1,r}$ for $r\in \llbracket 1, R\rrbracket$.\newline

\noindent\underline{Case $r\in \llbracket 1, R-2\rrbracket$}:\newline

By assumption $\Mbf_{n,R}=\Xbf_{n,R}$, which means, by convention, that $\Mbf_{n+1,0}=\Xbf_{n+1,0}$. We proceed by induction. Suppose $\Mbf_{n+1,r-1}=\Xbf_{n+1,r-1}$ for some $r\in \llbracket 1, R-2\rrbracket$. By assumption, $\Qbf_{n,r}=\diag{\indic_{|\Theta|^m}\odiv a_n^r}\indic_{{|\Theta|^m}\times {|\Theta|^m}}$, hence we have $\Xbf_{n+1,r-1}\odot \Qbf_{n,r} = \diag{\indic_{|\Theta|^m}\odiv a_n^r} \Mbf_{n+1,r-1}$.
According to Dykstra's algorithm~\eqref{eq:Xdykstra}, we obtain
\[
\Xbf_{n+1,r} =\diag{e^{\lambda_{r}\Abf_{r}}\odiv a_{n}^r}\Mbf_{n+1,r-1}\,,
\]
where $f_{r}(\lambda_{r})=0$ (see~\eqref{eq:deff_r} for the definition of $f_r$). On the other hand, using~\eqref{eq:defMnr}, one has
\[
\Mbf_{n+1,r} = \diag{a_{n+1}^r\odiv a_{n}^r}\Mbf_{n+1,r-1}\,.
\]
By definition, $a_{n+1}^r = e^{\Tilde{\lambda}_r \Abf_{r}}$, where $g_{r}(\Tilde{\lambda}_r)=0$ (see~\eqref{eq:defg_r} for the definition of $g_r$). Thus, if $\lambda_r=\Tilde{\lambda}_r$, we have $\Mbf_{n+1,r}=\Xbf_{n+1,r}$. From the definitions of $f_r$ and $g_r$ (see~\eqref{eq:deff_r} and~\eqref{eq:defg_r}), this condition boils down to 

\[
\left(\Xbf_{n+1,r-1}\odot \Qbf_{n,r}\right)\indic_{|\Theta|^m} = \mathcal{G}^ra_{n+1}^{-r}\,.
\]
Using the definition of $\mathcal{G}^r$, the hypothesis $\Xbf_{n+1,r-1}=\Mbf_{n+1,r-1}$ and the assumption $\Qbf_{n,r}=\diag{\indic_{|\Theta|^m} \odiv a_{n}^{r}}\indic_{{|\Theta|^m}\times {|\Theta|^m}}$, we see that the latter condition holds true.

By induction, $\Mbf_{n+1,r}=\Xbf_{n+1,r}$ for all $r\in \llbracket 1, R-2\rrbracket$.\newline

\noindent\underline{Case $r=R-1$}:\newline

Again, according to~\eqref{eq:Xdykstra}
\[
\Xbf_{n+1,R-1}=P_{R-1}^{\mathrm{KL}}\left(\Xbf_{n+1,R-2}\odot \Qbf_{n,R-1}\right)\,.
\]
Using the previous case and our hypothesis on $\Qbf_{n,R-1}$, we get
\[
\Xbf_{n+1,R-2}\odot \Qbf_{n,R-1}=\diag{\indic_{|\Theta|^m} \odiv a_{n}^{R-1}}\Mbf_{n+1,R-2}\,.
\]
Then, by noticing that $\diag{\indic_{|\Theta|^m} \odiv a_{n}^{R-1}}\Mbf_{n+1,R-2}\indic_{|\Theta|^m} = \mathcal{G}^{R-1}a_{n+1}^{-(R-1)}$, it follows from the expression of $P_{R-1}^{\mathrm{KL}}$ (see Lemma~\ref{lemma:explicitformsprox1}) that
\begin{align*}
    \Xbf_{n+1,R-1}&=\diag{\max\left(\nu^-\odiv \mathcal{G}^{R-1}a_{n+1}^{-(R-1)},\indic_{|\Theta|^m}\right)}\diag{\indic_{|\Theta|^m} \odiv a_{n}^{R-1}}\Mbf_{n+1,R-2} \\
    &=\diag{a_{n+1}^{R-1}}\diag{\indic_{|\Theta|^m} \odiv a_{n}^{R-1}}\Mbf_{n+1,R-2}\\
    &=\Mbf_{n+1,R-1}\,.
\end{align*}
where in the second equality we use the definition of $a_{n+1}^{R-1}$ in Algorithm~\ref{algo:multiconstrainedsinkhorn} and the explicit expression of $\mathrm{prox}_{R-1}$ from Lemma~\ref{lemma:explicitformsprox2}. The last equality follows from~\eqref{eq:defMnr}.\newline

\noindent\underline{Case $r=R$}:\newline

From $\Qbf_{n,R} = \indic_{{|\Theta|^m}\times {|\Theta|^m}}\,\diag{\indic_{|\Theta|^m} \odiv a_{n}^{R}}$ and the previous case, we have
\[
\Xbf_{n+1,R-1}\odot \Qbf_{n,R}=\Mbf_{n+1,R-1}\diag{\indic_{|\Theta|^m}\odiv a_{n}^{R}},
\]
so in particular $\left(\Xbf_{n+1,R-1}\odot \Qbf_{n,R}\right)^\top\indic_{|\Theta|^m} = \mathcal{G}^Ra_{n+1}^{-R}$. Then, by~\eqref{eq:Xdykstra} and Lemma~\ref{lemma:explicitformsprox1}, we have
\begin{align*}
    \Xbf_{n+1,R} &= \Mbf_{n+1,R-1}\diag{\indic_{|\Theta|^m}\odiv a_{n}^{R}}\diag{\nu^+\odiv \mathcal{G}^Ra_{n+1}^{-R}}\\
    &= \Mbf_{n+1,R-1}\diag{\indic_{|\Theta|^m}\odiv a_{n}^{R}}\diag{a_{n+1}^R}\\
    &= \Mbf_{n+1,R}\,.
\end{align*}
where, again, the second equality is the definition of $a_{n+1}^R$ in Algorithm~\ref{algo:multiconstrainedsinkhorn} together with $\mathrm{prox}_R=\nu^+$ (see Lemma~\ref{lemma:explicitformsprox2}).

Then, we prove that the $\Qbf_{n+1,r}$'s have the expected form. By definition (see~\eqref{eq:qdykstra}), we have $\Qbf_{n+1,r}=\Qbf_{n,r}\odot \Xbf_{n+1,r-1}\odiv \Xbf_{n+1,r}$. Hence, considering the above and using~\eqref{eq:defMnr}, we have for $r\in \llbracket 1, R-1\rrbracket$

\begin{align*}
\Qbf_{n+1,r}&=\left(\diag{\indic_{{|\Theta|^m}}\odiv a_{n}^r}\indic_{{|\Theta|^m} \times {|\Theta|^m}}\right)\odot \Mbf_{n+1,r-1}\odiv \Mbf_{n+1,r}\\
&=\left(\diag{\indic_{{|\Theta|^m}}\odiv a_{n}^r}\indic_{{|\Theta|^m} \times {|\Theta|^m}}\right)\odot\left(\diag{a_{n}^r \odiv a_{n+1}^r}\indic_{{|\Theta|^m} \times {|\Theta|^m}}\right)\\
&=\diag{\indic_{{|\Theta|^m}}\odiv a_{n+1}^r}\indic_{{|\Theta|^m} \times {|\Theta|^m}}
\end{align*}
and
\begin{align*}
\Qbf_{n+1,R}&=\left(\indic_{{|\Theta|^m} \times {|\Theta|^m}}\diag{\indic_{{|\Theta|^m}}\odiv a_{n}^R}\right)\odot \Mbf_{n+1,R-1}\odiv \Mbf_{n+1,R}\\
&=\left(\indic_{{|\Theta|^m} \times {|\Theta|^m}}\diag{\indic_{{|\Theta|^m}}\odiv a_{n}^R}\right)\odot\left(\indic_{{|\Theta|^m} \times {|\Theta|^m}}\diag{a_n^R\odiv a_{n+1}^R}\right)\\
&=\indic_{{|\Theta|^m} \times {|\Theta|^m}}\diag{\indic_{{|\Theta|^m}}\odiv a_{n+1}^R}\,.
\end{align*}

\end{proof}

\end{document}